\documentclass[
10pt,
twocolumn,
superscriptaddress,
nofootinbib,
notitlepage,
amsmath,amssymb,
aps,physrev,
]{revtex4-2}

\usepackage{booktabs}

\usepackage[english]{babel}
\usepackage[T1]{fontenc}
\usepackage{amsmath}
\usepackage[hidelinks]{hyperref}
\usepackage[capitalize]{cleveref}
\usepackage{amssymb}
\usepackage{amsthm}
\usepackage{enumitem}
\usepackage[labelfont=bf]{caption}
\usepackage{pifont}

\newtheorem{theorem}{Theorem}
\newtheorem{definition}{Definition}
\newtheorem{lemma}{Lemma}

\newtheorem{proposition}{Proposition}
\newtheorem{remark}{Remark}

\theoremstyle{definition}
\newtheorem{example}{Example}

\crefname{theorem}{Theorem}{Theorems}
\Crefname{theorem}{Theorem}{Theorems}

\crefname{definition}{Definition}{Definitions}
\Crefname{definition}{Definition}{Definitions}

\crefname{lemma}{Lemma}{Lemmas}
\Crefname{lemma}{Lemma}{Lemmas}

\crefname{corollary}{Corollary}{Corollaries}
\Crefname{corollary}{Corollary}{Corollaries}

\crefname{proposition}{Proposition}{Propositions}
\Crefname{proposition}{Proposition}{Propositions}

\crefname{remark}{Remark}{Remarks}
\Crefname{remark}{Remark}{Remarks}

\crefname{problem}{Problem}{Problems}
\Crefname{problem}{Problem}{Problems}

\crefname{assumption}{Assumption}{Assumptions}
\Crefname{assumption}{Assumption}{Assumptions}

\crefname{example}{Example}{Examples}
\Crefname{example}{Example}{Examples}

\crefname{algocf}{Algorithm}{Algorithms}
\Crefname{algocf}{Algorithm}{Algorithms}
\crefname{algocfline}{Algorithm}{Algorithms}
\Crefname{algocfline}{Algorithm}{Algorithms}

\usepackage{subcaption}
\usepackage[ruled,vlined,linesnumbered,lined]{algorithm2e}
\usepackage[braket, qm]{qcircuit}
\usepackage{graphicx}
\usepackage{caption}
\usepackage{array}
\usepackage{enumitem}
\usepackage{makecell}

\newcommand{\R}{\mathbb{R}}

\newcommand{\Skew}{\operatorname{Skew}}

\newcommand{\grad}{\operatorname{grad}}

\begin{document}

\title{Achieving double-logarithmic precision dependence in optimization-based quantum unstructured search}

\author{Zhijian Lai}

\affiliation{Beijing International Center for Mathematical Research, Peking University, Beijing, China}

\author{Dong An}

\affiliation{Beijing International Center for Mathematical Research, Peking University, Beijing, China}

\author{Jiang Hu}

\affiliation{Yau Mathematical Sciences Center, Tsinghua University, Beijing, China}

\author{Zaiwen Wen}

\affiliation{Beijing International Center for Mathematical Research, Peking University, Beijing, China}

\date{\today}

\begin{abstract}
Grover's algorithm is a fundamental quantum algorithm that achieves a quadratic speedup for unstructured search problems of size $N$.
Recent studies have reformulated this task as a maximization problem on the unitary manifold and solved it via linearly convergent Riemannian gradient ascent (RGA) methods, resulting in a complexity of $\mathcal{O}(\sqrt{N/M}\log (1/\varepsilon))$, where \(M\) denotes the number of target items and \(\varepsilon\) denotes the success probability error.
In this work, we adopt the Riemannian modified Newton (RMN) method to solve the quantum search problem, under the assumption that the ratio $ M/N$ is known. We show that, in this setting, the Riemannian Newton direction is collinear with the Riemannian gradient in the sense that the Riemannian gradient is always an eigenvector of the corresponding Riemannian Hessian.
This structure removes the overhead of Hessian inversion and allows the proposed RMN method to retain the local quadratic convergence in terms of the error $\varepsilon$.
More precisely, we rigorously prove an overall complexity of $\mathcal{O}(\sqrt{N/M}+\log\log(1/\varepsilon) )$.
Furthermore, our approach remains Grover-compatible, namely, it relies exclusively on the standard Grover diffusion and oracle operators to ensure algorithmic implementability, and its parameter update process can be efficiently precomputed on classical computers.
\end{abstract}

\maketitle

\section{Introduction}

Unstructured search, the task of finding specific target elements in an unsorted space of size $N$, is a ubiquitous challenge across cryptographic analysis \cite{grassl2016applying,bernstein2025post}, optimization \cite{durr1996quantum,gilliam2021grover}, and machine learning \cite{dong2008quantum,du2021grover}.
Let $M\ll N$ denote the number of target items.
Classically, this problem is inherently inefficient, as guaranteeing success requires $\Omega(N/M)$ queries \cite{bennett1997strengths}. This constitutes a severe computational bottleneck for large-scale data retrieval \cite{knuth1997art,manning2008introduction}.

Grover's quantum search algorithm resolved this bottleneck by achieving an optimal query complexity of $\Theta(\sqrt{N/M})$ \cite{grover1996fast,nielsen2010quantum,zalka1999grover,beals2001quantum}, marking a landmark demonstration of quantum advantage.
Beyond its direct application to the search problem, Grover's mechanism is a fundamental primitive in quantum computing. Its principles form the basis of amplitude amplification \cite{bhmt2002amplitude,suzuki2020amplitude} and have been further generalized within the broader framework of quantum singular value transformation (QSVT) \cite{gilyen2019quantum,martyn2021grand}.

To gain deeper insights into the mechanics of Grover's quadratic speedup, researchers have explored various mathematical interpretations. Early efforts characterized Grover's iteration as a geodesic in complex projective space \cite{miyake2001geometric}. Other perspectives have used information geometry to describe the algorithm as a path under the Wigner-Yanase metric \cite{cafaro2012grover}, while recent work has proposed a connection to imaginary-time evolution (ITE) \cite{suzuki2025grover}. These interpretations link the efficiency of quantum search to the geometry of the quantum state space.

A promising modern approach to analyzing such quantum processes is the \textit{(Riemannian) manifold optimization} framework \cite{absil2008optimization,hu2020brief,boumal2023introduction}.
Since quantum operations are constrained by unitarity, they naturally lie on the unitary manifold, namely, the unitary group $\{U \in \mathbb{C}^{N \times N} | U^{\dagger} U = I\}$ \cite{hall2015liegroups}.
Formulating various quantum tasks as optimization problems on these Riemannian manifolds has led to significant progress in quantum tomography \cite{hsu2024quantum,li2025quantum}, ground-state preparation \cite{wiersema2023optimizing,magann2023randomized,malvetti2024randomized,pervez2025riemannian,lai2026quantum}, Hamiltonian simulation \cite{kotil2024riemannian,le2025riemannian}, and circuit compilation \cite{luchnikov2021qgopt,luchnikov2021riemannian,yao2024riemannian}.

Specifically, a recent study \cite{lai2025grover} reformulates Grover's search as a maximization problem on the unitary manifold and proposes a physically implementable \textit{Riemannian gradient ascent} (RGA) method to achieve the desired quadratic speedup in size $N$.
While this first-order approach establishes a rigorous link between the optimization framework and quantum dynamics, it is limited to linear convergence with respect to the accuracy parameter $\varepsilon$, with total complexity $\mathcal{O}(\sqrt{N/M}\log(1/\varepsilon))$.

To further accelerate convergence in terms of the target accuracy $\varepsilon$, second-order optimization methods, such as the Riemannian Newton method \cite{adler2002newton,absil2008optimization}, provide a natural approach.
In this work, we propose a \textit{Riemannian modified Newton} (RMN) method for Grover's search problem, where each iteration admits a physically implementable second-order update using only the diffusion and oracle operators, analogous to Grover's algorithm.

Typically, moving from first-order methods to second-order methods trades faster convergence for higher computational cost per iteration, often due to Hessian inversion.
A key theoretical result of our RMN method is that the Riemannian gradient of our cost function in unstructured search is always an eigenvector of the associated Riemannian Hessian.
This implies that the Newton direction reduces to a scaled gradient, eliminating the need for explicit Hessian inversion. Consequently, second-order updates incur no additional overhead compared to first-order updates.

We test our RMN method numerically and find that, once the iterates get close to the target state, RMN can attain very high-precision solutions in very few iterations.
Beyond this numerical evidence, we prove that \cref{alg-newton-armijo} enters a Newton region within \(\mathcal O(\sqrt{N/M})\) iterations and subsequently converges quadratically, which yields the total complexity \(\mathcal O(\sqrt{N/M}+\log\log(1/\varepsilon))\). The proof is provided separately in \cref{app-complexity}.

However, we emphasize that, in the RMN method, the rotation angle at each iteration must be precomputed on a classical computer under the assumption that the exact value of the ratio $M/N$ is known. Indeed, when $M/N$ is known, there exist variants of Grover's algorithm \cite{hoyer2000arbitrary,long2001zero,bhmt2002amplitude} that achieve success probability exactly equal to $1$, with complexity independent of $\varepsilon$.

Hence, the primary contribution of this work is not an improvement in practical complexity over existing search algorithms, but a new rigorous optimization-based construction and complexity analysis for quantum unstructured search.

\paragraph{Organization.}
In \cref{sec-prelim}, we introduce the preliminaries of the unstructured search problem.
In \cref{sec-rga}, we revisit the Grover-compatible Riemannian gradient ascent method.
In \cref{sec-newton}, we derive the Riemannian Hessian and present the proposed Grover-compatible Riemannian modified Newton method.
Numerical experiments are provided in \cref{sec-experiments}.
We conclude in \cref{sec-discussion}.
The appendix develops the required manifold geometry and rigorously proves the complexity of our algorithm.

\section{Preliminaries}\label{sec-prelim}

We first review the notation for the unstructured search problem and then reformulate it as a maximization problem on the unitary manifold, followed by a discussion of the corresponding optimality conditions. In the subsequent sections, we will propose several algorithms.

\subsection{Problem statement}

Let $\mathcal{H}$ be the $N$-dimensional Hilbert space of an $n$-qubit system with $N=2^n$. We consider an unstructured search problem over the finite set $[N]:= \{0,1,\ldots,N-1\},$ where a binary oracle function $g \colon [N]\to\{0,1\}$ specifies whether an element is marked. The set of marked items is $S:=\{x\in[N]: g(x)=1\}$ with $M:=|S|$ and typically $1\leq M\ll N$.
The goal of the search task is to identify at least one element in $S$.

Let $\{|j\rangle\}_{j=0}^{N-1}$ denote the computational basis of $\mathcal{H}$, where $|j\rangle$ represents the $j$-th standard basis vector.
Define the marked subspace as $\mathcal{T}:= \operatorname{span}_{\mathbb{C}}\{\,|x\rangle: x\in S\,\}\subseteq\mathcal{H}$.
The quantum search algorithm can then be interpreted as the task of preparing a normalized state vector in $\mathcal{T}$.
Let the orthogonal projector onto $\mathcal{T}$ be
\begin{equation}\label{eq-defn-Hf}
    H:= \sum_{x\in S} |x\rangle\langle x|.
\end{equation}
Grover's algorithm begins with the uniform superposition over the entire computational basis, i.e., $|\psi_0\rangle:= \frac{1}{\sqrt{N}}\sum_{x=0}^{N-1} |x\rangle$. We denote the associated rank-one projector by $\psi_0:= |\psi_0\rangle\langle\psi_0|$.
Grover's algorithm uses two types of quantum gates, the \textit{diffusion} operator and the \textit{oracle} operator, defined respectively as
\begin{align}
     D(\alpha) &:=e^{i \alpha \psi_0}=I+\left(e^{i \alpha}-1\right) \psi_0, \label{eq-diffusion} \\
    O_g(\beta) &:=e^{i \beta H}=I+\left(e^{i \beta}-1\right) H,  \label{eq-oracle}
\end{align}
where the parameters $\alpha, \beta \in \mathbb{R}$ denote the angles.
Grover iterations are constructed by alternating these two operators, $G(\alpha_k,\beta_k)=-D(\alpha_k)O_g(\beta_k)$.
After $T$ iterations, the resulting state is $|\psi_{\text{final}}\rangle
= \prod_{k=1}^{T} G(\alpha_k,\beta_k)\,|\psi_0\rangle$, which aims to approximate the ideal target state
\begin{equation}\label{eq-defn-psistar}
    |\psi^\star\rangle
:= \frac{1}{\sqrt{M}}\sum_{x\in S}|x\rangle
    \in\mathcal{T}.
\end{equation}
The query complexity of Grover's algorithm is commonly measured by the number of oracle calls $O_g(\beta)=e^{i\beta H}$.
Detailed circuit implementations of both gates can be found in standard textbooks \cite{kaye2006introduction,nielsen2010quantum}.

In the original version of Grover's algorithm \cite{grover1996fast}, the parameters are chosen as $\alpha_k=\beta_k=\pi$ for all $k$.
However, this choice may lead to the overshooting problem \cite{brassard1997searching}: performing too many Grover iterations may cause the state to rotate past the target rather than toward it.
To mitigate this issue, the $\pi/3$ algorithm \cite{grover2005fixed} fixes the angles at $\alpha_k=\beta_k=\pi/3$, although this modification loses the quadratic speedup.
Later, the fixed-point algorithm \cite{yoder2014fixed} was proposed to avoid overshooting while still retaining the quadratic speedup.

\subsection{Manifold optimization formulation}

In this subsection, we reformulate the quantum search problem as a maximum-eigenstate preparation problem, which can further be cast as a manifold optimization problem. This idea has been proposed in \cite{suzuki2025grover,lai2025grover}. Throughout, we equip the matrix space with the Frobenius inner product $\langle A, B\rangle:=\operatorname{Tr}(A^{\dagger} B)$.

Let us consider the Hamiltonian $H$ defined in \eqref{eq-defn-Hf}. It satisfies $H=H^{\dagger}=H^2$, which implies that its eigenvalues can only take the values 0 and 1. In fact, the spectrum of $H$ consists of eigenvalue 1 on the marked subspace $\mathcal{T}$ with multiplicity $M$, and eigenvalue 0 on its orthogonal complement $\mathcal{T}^{\perp}$. In particular, we have $H\left|\psi^{\star}\right\rangle=\left|\psi^{\star}\right\rangle.$

Let $q_0:=\frac{M}{N} \in(0,1)$ denote the probability of selecting a marked item at random. Applying $H$ to the initial state gives $H\left|\psi_0\right\rangle=\sqrt{q_0}\left|\psi^{\star}\right\rangle.$ Thus, the initial expectation value satisfies $\left\langle\psi_0\right| H\left|\psi_0\right\rangle=\| H\left|\psi_0\right\rangle \|^2=\| \sqrt{q_0}\left|\psi^{\star}\right\rangle \|^2=q_0.$
Indeed, for any state $|\psi\rangle$, the expectation value $\langle\psi| H|\psi\rangle = \sum_{x \in S}|\langle x | \psi\rangle|^2 \in[0, 1]$ represents the probability of observing a marked item when measuring $|\psi\rangle$ in the computational basis.

Based on the above discussion, our goal now is to construct a quantum state whose expectation value with respect to $H$ (i.e., the probability of observing a marked item) equals 1. To achieve this, we start from an easily prepared state $\left|\psi_0\right\rangle$ and aim to design a quantum circuit $U$, whose quantum gates are easily implementable on quantum hardware (e.g., diffusion/oracle operators), such that the resulting state $U\left|\psi_0\right\rangle$ becomes an eigenstate of $H$ corresponding to its largest eigenvalue 1.

Consequently, we consider the following manifold optimization problem:
\begin{equation}\label{eq-problem}
    \max_{U \in \mathrm{U}(N)} f(U),
\end{equation}
where the feasible region is the compact Lie group of $N \times N$ unitary matrices, which forms a Riemannian manifold,
\begin{equation*}
    \mathrm{U}(N):= \left\{ U \in \mathbb{C}^{N \times N} \mid U^\dagger U = I_N \right\},
\end{equation*}
and the cost function $f: \mathrm{U}(N) \rightarrow \mathbb{R}$ is defined by
\begin{equation}\label{eq-cost}
    f(U):= \operatorname{Tr}\!\left(H U \psi_0 U^\dagger \right)
    = \langle \psi_0 | U^\dagger H U | \psi_0 \rangle.
\end{equation}

To solve the above problem on the manifold $\mathrm{U}(N)$, we will employ geometric tools and algorithms from the manifold optimization framework \cite{absil2008optimization,boumal2023introduction}. In the main text that follows, we only briefly describe the geometric tools on $\mathrm{U}(N)$. A detailed introduction to the geometry of the unitary manifold is provided in \cref{app-geo}.

To characterize the directions of infinitesimal motion on the manifold $\mathrm{U}(N)$, we introduce the tangent space.
For any $U \in \mathrm{U}(N)$, the tangent space at $U$ is given by
\begin{equation*}
T_U:= \left\{ \Omega U: \Omega^\dagger = -\Omega \right\}
= \mathfrak{u}(N)U,
\end{equation*}
where
\begin{equation*}
\mathfrak{u}(N):= \left\{ \Omega \in \mathbb{C}^{N \times N}: \Omega^\dagger = -\Omega \right\}
\end{equation*}
denotes the Lie algebra of $\mathrm{U}(N)$, i.e., the real vector space of all $N \times N$ skew-Hermitian matrices.
Intuitively, the tangent space can be viewed as the tangent plane to the manifold at a given point (see \cref{fig:manifold-iteration}).

For an unconstrained optimization problem, it is well known that the optimality condition is that the gradient vanishes. When the problem is restricted to a manifold, this condition becomes the vanishing of the Riemannian gradient (see \cref{app-geo} for more).
Let $\psi_U:= U \psi_0 U^\dagger$ denote the density operator corresponding to the output state of the circuit $U$. The \textit{Riemannian gradient} of $f$ at some $U \in \mathrm{U}(N)$ is given by
\begin{equation*}
\operatorname{grad} f(U) = [H, \psi_U]\, U \in T_U,
\end{equation*}
where $[A, B]:=A B-B A$. For convenience, we denote the corresponding skew-Hermitian part by
\begin{equation*}
\widetilde{\operatorname{grad}} f(U) = [H, \psi_U] \in \mathfrak{u}(N).
\end{equation*}
When discussing the tangent space $T_U=\mathfrak{u}(N) U$, we often focus on $\mathfrak{u}(N)$ itself and omit the right unitary $U$.
\cite[Theorem 3.2]{lai2025grover} has shown that $\widetilde{\operatorname{grad}} f\left(U^{\star}\right)=0$ if and only if $U^{\star} \in \mathrm{U}(N)$ globally minimizes $f(U)$ over $\mathrm{U}(N)$ with the minimum value 0, or globally maximizes it with the maximum value 1.

\begin{remark}
Note that the above global optimality condition, resembling that of a convex function, is quite special. It holds only for Hamiltonians satisfying $H=H^2$. In general cases, it serves merely as a necessary condition for a local extremum.
\end{remark}

\section{Grover-compatible Riemannian gradient ascent method}\label{sec-rga}

In this section, we introduce the Grover-compatible Riemannian gradient ascent (RGA) method for problem \cref{eq-problem}.
The term \textit{gradient ascent} indicates that it is a typical first-order optimization algorithm. The term \textit{Grover-compatible} means that each iteration is implemented using the diffusion and oracle operators (\cref{eq-diffusion,eq-oracle}) from Grover's algorithm, making the method easy to realize on quantum hardware.
This RGA method, originally proposed in the author’s previous work \cite{lai2025grover}, serves as a key foundation for the second-order method, the Grover-compatible Riemannian modified Newton method, developed in this work.

\subsection{From Euclidean iterations to manifold iterations}

Recall that in the classical Euclidean setting, to solve the optimization problem $\max _{x \in \mathbb{R}^n} f(x),$ one typically constructs a sequence $\left\{x_k\right\} \subseteq \mathbb{R}^n$ from an initial iterate $x_0 \in \mathbb{R}^n$ using the update rule
\begin{equation}\label{eq-Euc-update}
    x_{k+1}=x_k+t_k \eta_k,
\end{equation}
where $\eta_k \in \mathbb{R}^n$ is a search direction (e.g., usual gradient $\nabla f(x_k)$) and $t_k>0$ is a step size.
To extend the update scheme \cref{eq-Euc-update} to manifolds, the search direction $\eta_k$ is now chosen in the tangent space, and a mapping called \textit{retraction} is used to map $t_k\eta_k$ back to the manifold.

\begin{figure}[!t]
    \centering
    \includegraphics[width=0.9\linewidth]{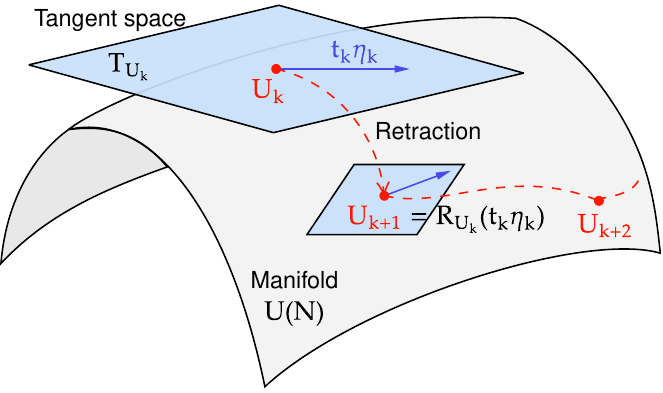}
    \caption{Schematic illustration of a manifold optimization iteration on the unitary manifold $\mathrm{U}(N)$. Starting from the current point $U_k$, a tangent direction $\eta_k \in T_{U_k}$ (e.g., the Riemannian gradient) is chosen in the tangent space. The retraction $\mathrm{R}_{U_k}$ then maps the scaled tangent vector $t_k \eta_k$ back onto the manifold, producing the next iterate $U_{k+1} = \mathrm{R}_{U_k}(t_k \eta_k)$.}
    \label{fig:manifold-iteration}
\end{figure}

Specifically, to solve the manifold optimization problem $\max_{U \in \mathrm{U}(N)} f(U)$, starting from an initial point $U_0 \in \mathrm{U}(N)$, a sequence $\{U_k\} \subseteq \mathrm{U}(N)$ is generated via update rule (see \cref{fig:manifold-iteration} for a schematic illustration):
\begin{equation*}
    U_{k+1} = \mathrm{R}_{U_k}\!\left(t_k \eta_k\right),
\end{equation*}
where $\eta_k \in T_{U_k}$ is a tangent vector at the current point (e.g., Riemannian gradient $\eta_k = \operatorname{grad} f(U_k)$), and $t_k > 0$ is still a step size.
The mapping $\mathrm{R}_U: T_U \rightarrow \mathrm{U}(N)$ is called a
\textit{retraction}, which enables movement along tangent directions while keeping iterates on the manifold. Formally, a retraction is a smooth mapping $\mathrm{R}_U: T_U \rightarrow \mathrm{U}(N)$ such that the induced curve $t \mapsto \gamma(t):= \mathrm{R}_U(t\eta)$ satisfies
\begin{equation}\label{eq-defn-retraction}
\gamma(0) = U, \quad \dot{\gamma}(0) = \eta, \quad \forall \eta \in T_U.
\end{equation}
A natural example of a retraction on $\mathrm{U}(N)$ is the \textit{Riemannian exponential map}, defined by
\begin{equation}\label{eq-Exp}
    \mathrm{R}_U^{\text{exp}}(\eta) = e^{\widetilde{\eta}} U,
    \quad
    \widetilde{\eta}:= \eta U^\dagger \in \mathfrak{u}(N).
\end{equation}
The induced curve then $\gamma(t) = e^{t\widetilde{\eta}} U$ satisfies $\gamma(0) = U$ and $\dot{\gamma}(0) = \widetilde{\eta}U = \eta$, verifying that $\mathrm{R}_U^{\text{exp}}$ is a valid retraction. Other examples are provided in \cref{app-geo}.

At this point, we can consider the simplest first-order optimization algorithm, i.e., the Riemannian gradient ascent method with the retraction $\mathrm{R}=\mathrm{R}^{\text{exp}}$ defined in \cref{eq-Exp}. Its update rule takes the form ($k=0,1,\ldots$)
\begin{equation}\label{eq-Exp-update}
    U_{k+1}
    = \mathrm{R}_{U_k}^{\text{exp}} \!\left(t_k \operatorname{grad} f(U_k)\right)
    = e^{\,t_k [H,\psi_k]} \, U_k,
\end{equation}
where $\psi_k:= U_k \psi_0 U_k^\dagger$ denotes the intermediate quantum state after the $k$-th iteration.
Setting the initial gate to $U_0=I$, the resulting state after $T$ iterations is
\begin{equation*}
    |\psi_T\rangle = e^{t_{T-1}[H, \psi_{T-1}]} \cdots e^{t_1[H, \psi_1]} e^{t_0[H, \psi_0]}|\psi_0\rangle.
\end{equation*}

However, exactly implementing the newly introduced gate $e^{t_k\left[H, \psi_k\right]}$ on quantum hardware is very challenging.
Current works \cite{magann2023randomized,malvetti2024randomized,pervez2025riemannian,lai2026quantum} address this issue by using stochastic subspace projected gradient. In these approaches, the Riemannian gradient $\left[H, \psi_k\right]$ is projected onto a small number of randomly selected Pauli words, which form a basis of a random subspace in the tangent space $\mathfrak{u}(N)$; a Trotter approximation \cite{trotter1959product} is then used as the retraction (see \cref{eq-trt} in \cref{app-geo}). However, such methods can only exploit partial information of the Riemannian gradient at each iteration.
As we will soon see, for the Hamiltonian $H=H^2$ arising in the Grover setting, there exists an exact implementation of the Riemannian gradient, rather than merely an approximate one.

\subsection{Grover-compatible retractions}

The difficulty of implementing the gate $e^{t_k[H,\psi_k]}$ in \cref{eq-Exp-update} suggests that the Riemannian exponential map $\mathrm{R}_{U}=\mathrm{R}^{\text{exp}}_{U}$ in \cref{eq-Exp} may not be a practical choice of retraction.
Note that the convergence guarantees of manifold optimization \cite{absil2008optimization,boumal2023introduction} only require a retraction $\mathrm{R}_{U}$ to satisfy the first-order properties: $\gamma(0)=U$, $\dot{\gamma}(0)=\eta$ in \cref{eq-defn-retraction}; while its specific form is unrestricted. This motivates us to adopt another retraction that is easier to implement on quantum devices.

Throughout the remainder of the paper, let $H=H^{\dagger}=H^2$ be an orthogonal projector on Hilbert space $\mathcal{H}$ and $M= \operatorname{rank}(H)$; let $\left|\psi_0\right\rangle \in \mathcal{H}$ be a unit vector and set $\psi_0 =\left|\psi_0\right\rangle\left\langle\psi_0\right|$. Assume that $\left\langle\psi_0\right| H\left|\psi_0\right\rangle \notin\{0, 1\}$. Define the Frobenius norm by $\|A\|_F^2 = \langle A, A \rangle$.
We begin with an auxiliary lemma.

\begin{lemma}[{\cite[Lemma 3.3]{lai2025grover}}]\label{lem-X0YO}
Let $\psi = |\psi\rangle\langle\psi|$ be any pure state and define $q:= \langle \psi | H | \psi \rangle$.
Define two skew-Hermitian operators by $X:= [H, \psi]$ and $Y:= i[H, X]$.
Then $\|X\|_F = \|Y\|_F = \sqrt{2q(1-q)}$ and $\langle X, Y \rangle = 0$.
\end{lemma}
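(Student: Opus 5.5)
The plan is a direct computation in which everything reduces to two scalars built from $q$, using only $H^2=H$ and $\psi^2=\psi$. First I will expand $X=H\psi-\psi H$ and $Y=i[H,X]=i(H\psi-2H\psi H+\psi H)$, where $H^2=H$ collapses the terms. Then every trace that appears will be reduced by the cyclic property together with the identity
\[
\operatorname{Tr}(H\psi H\psi)=\langle\psi|H|\psi\rangle^2=q^2,
\]
and the elementary facts $\operatorname{Tr}(H\psi)=q$ and $\operatorname{Tr}(\psi H\psi)=q$.

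For $\|X\|_F$ I will use $X^\dagger=-X$, so that $\|X\|_F^2=-\operatorname{Tr}(X^2)$. Expanding gives
\[
X^2=H\psi H\psi-H\psi H-\psi H\psi+\psi H\psi H,
\]
whose trace is $2q^2-2q$. Hence $\|X\|_F^2=2q(1-q)$.

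For $Y$, the factor $i$ times a commutator of the Hermitian $H$ with the skew-Hermitian $X$ makes $Y$ skew-Hermitian, as the statement asserts. I will still check this explicitly from the expanded form. Then $\|Y\|_F^2=-\operatorname{Tr}(Y^2)=\operatorname{Tr}(Z^2)$, where $Z:=H\psi-2H\psi H+\psi H$.

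Squaring $Z$ produces nine terms. Each reduces by cyclicity to either $\operatorname{Tr}(H\psi)=q$ or $\operatorname{Tr}(H\psi H\psi)=q^2$, with the coefficients to track carefully.
\begin{itemize}
\item The products $(H\psi)^2$ and $(\psi H)^2$ each give $q^2$.
\item $(H\psi)(\psi H)$ and $(\psi H)(H\psi)$ each give $q$.
\item The four cross terms with $-2H\psi H$ each give $-2q^2$.
\item $4(H\psi H)^2$ gives $4q^2$.
\end{itemize}
The total is $2q^2+2q-8q^2+4q^2=2q-2q^2$, so $\|Y\|_F^2=2q(1-q)$.

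For orthogonality I will compute $\langle X,Y\rangle=\operatorname{Tr}(X^\dagger Y)=-\operatorname{Tr}(XY)=-i\operatorname{Tr}(XZ)$. Writing $XZ=(H\psi-\psi H)(H\psi-2H\psi H+\psi H)$ and reducing each trace the same way, the terms from $H\psi$ and from $\psi H$ should cancel pairwise: they give $q^2-2q^2+q$ and $-(q-2q^2+q^2)$ respectively. So the trace vanishes. A cleaner route would be to note $\operatorname{Tr}(X[H,X])=\operatorname{Tr}(XHX-XXH)=0$ by cyclicity, which gives $\langle X,Y\rangle=0$ immediately without expanding; I would use this shortcut instead.

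The main obstacle is only bookkeeping in $\|Y\|_F^2$, namely getting the signs and the factors of $2$ and $4$ right. As an alternative check, I could work in the two-dimensional invariant subspace spanned by $H|\psi\rangle$ and $(I-H)|\psi\rangle$. There $H$ becomes $\mathrm{diag}(1,0)$ and $|\psi\rangle=(\sqrt q,\sqrt{1-q})^T$, and all three quantities become explicit $2\times 2$ computations that confirm the same values.
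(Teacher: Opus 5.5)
Your proof is correct. I checked each step:
\begin{itemize}
\item the expansion $Y=iZ$ with $Z=H\psi-2H\psi H+\psi H$ Hermitian;
\item $\operatorname{Tr}(X^2)=2q^2-2q$;
\item the nine-term sum $\operatorname{Tr}(Z^2)=2q^2+2q-8q^2+4q^2=2q(1-q)$;
\item the shortcut $\operatorname{Tr}(X[H,X])=\operatorname{Tr}(XHX)-\operatorname{Tr}(X^2H)=0$.
\end{itemize}
The paper does not prove this lemma itself; it imports it from \cite{lai2025grover}, so there is no in-paper argument to match. Your direct trace computation uses exactly the identities $H^2=H$, $\psi^2=\psi$, $\psi H\psi=q\psi$ that the paper uses in the proof of \cref{thm-hessian}, so it is the natural route here.

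Three small remarks. First, the orthogonality shortcut uses only that $H$ and $\psi$ are Hermitian, so $\langle X,Y\rangle=0$ holds with no projector or purity assumption. Only the norm identities need $H^2=H$ and $\psi^2=\psi$.

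Second, you can remove the nine-term bookkeeping entirely. Set $P:=H\psi(I-H)$. Then $X=P-P^\dagger$ and $Y=i(P+P^\dagger)$, and $P^2=0$ because $(I-H)H=0$. This gives all three claims at once:
\begin{itemize}
\item $\|X\|_F^2=\|Y\|_F^2=2\|P\|_F^2$;
\item $\langle X,Y\rangle=i\bigl(\|P\|_F^2-\|P\|_F^2\bigr)=0$;
\item $\|P\|_F^2=\operatorname{Tr}\bigl((I-H)\psi H\psi\bigr)=q(1-q)$.
\end{itemize}

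Third, your two-dimensional sanity check needs a word when $q\in\{0,1\}$, since one of $H|\psi\rangle$, $(I-H)|\psi\rangle$ then vanishes. Your trace computation, by contrast, covers every $q$ uniformly. Your check does correctly use a normalized basis. The $\{u,v\}$ matrices in the paper's appendix are written in a non-normalized basis, so Frobenius norms cannot be read off them without rescaling.
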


The following \cref{thm-grad-in-W} shows that if the circuit structure consists only of the diffusion and oracle operators defined in \cref{eq-diffusion,eq-oracle}, then the output states always remain in the Grover plane. More importantly, the corresponding Riemannian gradient also stays within a fixed two-dimensional subspace of the tangent space.
Note that $X_0$ and $Y_0$ defined in \cref{eq-X0YO} are orthogonal, as stated in \cref{lem-X0YO}.

\begin{theorem}[Invariant 2D gradient subspace {\cite[Theorem 3.6]{lai2025grover}}]\label{thm-grad-in-W}
Initialize the unitary by setting $U_0=I$.
 For $k=0,1, \ldots$, consider the update rule:
\begin{equation*}
    V_k:= \text{a finite product of the form } e^{i \theta H} \text{ and } e^{i \theta \psi_0},
\end{equation*}
and define
$U_{k+1}:= V_k U_k$. Moreover, let $\psi_k:=\left|\psi_k\right\rangle\left\langle\psi_k\right|$, where $\left|\psi_k\right\rangle=U_k\left|\psi_0\right\rangle$; equivalently, $\left|\psi_{k+1}\right\rangle=V_k\left|\psi_k\right\rangle$. Define the scalars $q_k:= f(U_k)=\left\langle\psi_k\right|H\left|\psi_k\right\rangle  \in[0, 1],$ and the skew-Hermitian operators
\begin{equation}\label{eq-X0YO}
X_0:=\left[H, \psi_0\right], \quad Y_0:=i\left[H, X_0\right].
\end{equation}
Then, for all $k \geq 0$, the following statements hold:
\begin{enumerate}
    \item The state $|\psi_k\rangle$ remains in the fixed 2-dimensional subspace (called Grover plane)
\begin{equation*}
    |\psi_k\rangle \in \mathcal S:= \operatorname{span}_{\mathbb{C}}\{|\psi_0\rangle, H|\psi_0\rangle\} \subseteq \mathcal{H}.
\end{equation*}
\item The skew-Hermitian part of Riemannian gradient, $[H, \psi_k]$, remains in the fixed 2-dimensional real subspace
\begin{equation*}
    [H, \psi_k] \in \mathcal W:= \operatorname{span}_{\mathbb{R}}\{X_0, Y_0\} \subseteq \mathfrak{u}(N),
\end{equation*}
and $\left\|\left[H, \psi_k\right]\right\|_F=\sqrt{2q_k\left (1-q_k\right)} $.
\end{enumerate}
\end{theorem}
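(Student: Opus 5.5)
Part 1 goes by induction on $k$, using that $\mathcal S$ is invariant under both generators. Since $H^2=H$, we have $H(H|\psi_0\rangle)=H|\psi_0\rangle$, so $H\mathcal S\subseteq\mathcal S$. Likewise $\psi_0|v\rangle=\langle\psi_0|v\rangle|\psi_0\rangle\in\mathcal S$ for every $|v\rangle$, so $\psi_0\mathcal S\subseteq\mathcal S$. From \cref{eq-diffusion,eq-oracle}, $e^{i\theta H}=I+(e^{i\theta}-1)H$ and $e^{i\theta\psi_0}=I+(e^{i\theta}-1)\psi_0$, so both preserve $\mathcal S$, and hence so does any finite product $V_k$. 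Because $|\psi_0\rangle\in\mathcal S$, induction gives $|\psi_k\rangle\in\mathcal S$ for all $k$.

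For part 2, the plan is to describe the real vector space $\{[H,|\phi\rangle\langle\phi|]:|\phi\rangle\in\mathcal S\}$ explicitly and show it lies in $\mathcal W$. Since $0<q_0<1$, the vectors $|a\rangle:=H|\psi_0\rangle/\sqrt{q_0}$ and $|b\rangle:=(I-H)|\psi_0\rangle/\sqrt{1-q_0}$ form an orthonormal basis of $\mathcal S$, with $H|a\rangle=|a\rangle$ and $H|b\rangle=0$. Write $|\psi_k\rangle=a_k|a\rangle+b_k|b\rangle$. A direct computation gives
\begin{equation*}
[H,\psi_k]=a_k\bar b_k\,|a\rangle\langle b|-\bar a_k b_k\,|b\rangle\langle a| .
\end{equation*}
Setting $z=a_k\bar b_k$, this equals $z|a\rangle\langle b|-\bar z|b\rangle\langle a|$. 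As $z$ ranges over $\mathbb C$, these matrices form the real two-dimensional space spanned by $E_1:=|a\rangle\langle b|-|b\rangle\langle a|$ and $E_2:=i(|a\rangle\langle b|+|b\rangle\langle a|)$.

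Next I identify $\mathcal W$ with $\operatorname{span}_{\mathbb R}\{E_1,E_2\}$. The state $|\psi_0\rangle$ corresponds to $a=\sqrt{q_0}$, $b=\sqrt{1-q_0}$, so $X_0=\sqrt{q_0(1-q_0)}\,E_1$. Using $[H,|a\rangle\langle b|]=|a\rangle\langle b|$ and $[H,|b\rangle\langle a|]=-|b\rangle\langle a|$, we get $Y_0=i[H,X_0]=\sqrt{q_0(1-q_0)}\,E_2$. Since $q_0\notin\{0,1\}$, the span of $X_0,Y_0$ is exactly $\operatorname{span}_{\mathbb R}\{E_1,E_2\}$, which proves $[H,\psi_k]\in\mathcal W$. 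This is consistent with the orthogonality of $X_0$ and $Y_0$ in \cref{lem-X0YO}.

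The norm identity then follows from \cref{lem-X0YO} applied to $\psi=\psi_k$, giving $\|[H,\psi_k]\|_F=\sqrt{2q_k(1-q_k)}$. One can also check it directly: $\|[H,\psi_k]\|_F^2=2|a_k|^2|b_k|^2$ and $q_k=|a_k|^2=1-|b_k|^2$.

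The main obstacle is the span identification in part 2, in particular the sign and factor conventions in computing $Y_0$. Everything else is bookkeeping once the orthonormal basis $\{|a\rangle,|b\rangle\}$, adapted to $H$, has been introduced.
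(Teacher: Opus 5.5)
Your proposal is correct and follows essentially the same route as the paper, which cites this result but sketches its proof in the appendix. That route is: invariance of the Grover plane under $e^{i\theta H}=I+(e^{i\theta}-1)H$ and $e^{i\theta\psi_0}=I+(e^{i\theta}-1)\psi_0$, then a $2\times 2$ representation in which $[H,\psi_k]$ is off-diagonal and determined by one complex number $z$, matched against the representations of $X_0$ and $Y_0$. The only difference is that you use the orthonormal basis $H|\psi_0\rangle/\sqrt{q_0}$, $(I-H)|\psi_0\rangle/\sqrt{1-q_0}$ instead of the paper's unnormalized $u=H|\psi_0\rangle$, $v=(I-H)|\psi_0\rangle$; this is a harmless rescaling (your $a_k\bar b_k$ equals $\sqrt{q_0(1-q_0)}\,(x_k+iy_k)$) that makes the Frobenius-norm identity immediate.
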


In what follows, whenever there is no ambiguity, the term \textit{gradient} also refers to the skew-Hermitian part of the Riemannian gradient.
\cref{thm-grad-in-W} provides an important insight: if the current gradient lies in the 2-dimensional subspace $\mathcal{W}\subseteq \mathfrak{u}(N)$ (when $U_0 = I$, the initial Riemannian gradient is simply $X_0\in\mathcal{W}$), then it suffices to consider a retraction that is well-defined only on $\mathcal{W}$, rather than on the full $N^2 = 4^n$-dimensional space $\mathfrak{u}(N)$. A typical contrast is the Trotter retraction defined in \cref{eq-trt} of \cref{app-geo}.
First, to ensure that it is a valid retraction, it must satisfy the standard first-order properties in \cref{eq-defn-retraction}. Second, to guarantee that the next Riemannian gradient remains in $\mathcal{W}$, its construction must involve only the form $e^{i\theta H}$ and $e^{i\theta \psi_0}$. In this way, the whole procedure forms a closed loop. This naturally leads to the following definition.

\begin{definition}\label{defn-grover-retraction}
A mapping $\mathrm{R}_U$ is called a \emph{Grover-compatible retraction} if for all $U \in \mathrm{U}(N)$ the following conditions hold:
\begin{enumerate}
    \item $\mathrm{R}_U: \mathcal{W}U \rightarrow \mathrm{U}(N)$ is a well-defined retraction restricted to the domain $\mathcal{W}U \subseteq T_U= \mathfrak{u}(N)U$; that is, the induced curve $\gamma(t) = \mathrm{R}_U(t\eta)$ for $t \ge 0$ satisfies
\begin{equation*}
    \gamma(0) = U, \quad \dot{\gamma}(0) = \eta, \quad \forall \eta \in \mathcal{W}U.
\end{equation*}
    \item $\mathrm{R}_U$ admits the form $\mathrm{R}_U(\eta)
=
V(\eta) U,$ where $V(\eta)$ is a finite product of the form $e^{i\theta H}$ and $e^{i\theta \psi_0}$.
\end{enumerate}
\end{definition}

In fact, retractions satisfying the above definition are easy to construct, and their forms may vary. Here we introduce a $5$-factor retraction in \cref{example-5factor}. According to the number $K$ of factors of the form $e^{i\theta H}$ and $e^{i\theta \psi_0}$ it contains, we refer to it as a $K$-factor retraction. More examples, such as $6$-factor and $8$-factor retractions, can be found in \cite[Proposition~3.10]{lai2025grover}.

\begin{example}[$5$-factor retraction]\label{example-5factor}
For any $U \in \mathrm{U}(N)$ and $\eta \in \mathcal{W}U$, write the skew-Hermitian part $\widetilde{\eta}:= \eta U^\dagger \in \mathcal{W}$ in the form $\widetilde{\eta} = x X_0 + y Y_0$ for unique coefficients $(x,y) \in \mathbb{R}^2$.
Let $A:= \operatorname{atan2}(y,x)$ and $R:= \sqrt{x^2+y^2}$, which correspond to the argument and modulus of the complex number $x+iy$, and define
\begin{equation*}
    a_1:= A + \frac{\pi}{2},
    \;
    a_2:= A - \frac{\pi}{2},
    \;
    b_1:= -\frac{R}{2},
    \;
    b_2:= \frac{R}{2}.
\end{equation*}
The $5$-factor retraction is defined by
\begin{equation}\label{eq-5factor-retraction}
    \mathrm{R}^{(5)}_U(\eta)
:=
    e^{i a_1 H}
    e^{i b_1 \psi_0}
    e^{i(a_2-a_1)H}
    e^{i b_2 \psi_0}
    e^{-i a_2 H}\, U.
\end{equation}
Since $a_2-a_1=-\pi$, the middle $H$-exponential simplifies to $e^{-i\pi H}$. For $t \ge 0$, the associated curve is given by
\begin{equation*}
    \gamma(t)
    =
    \underbrace{
    e^{i a_1 H}
    e^{i t b_1 \psi_0}
    e^{-i\pi H}
    e^{i t b_2 \psi_0}
    e^{-i a_2 H}
    }_{\text{newly added gates } V(t;x,y):=}
    \, U.
\end{equation*}
It can be verified that $\gamma(0)=U$ and $\dot{\gamma}(0)=\eta \in \mathcal{W}U$.
\end{example}

Indeed, the Grover-compatible retractions $\mathrm{R}_U\colon \mathcal{W}U \simeq \mathbb{R}^2 \rightarrow \mathrm{U}(N)$ admit the following general abstract form:
\begin{equation}\label{eq-grover-form}
    \mathrm{R}_U (t\eta)=\underbrace{\left (\prod_{\ell=1}^{K}
    e^{i\, \theta_{\ell}^{ (1)} (t; x, y)\, H}\,
    e^{i\, \theta_{\ell}^{ (2)} (t; x, y)\, \psi_0}\right)}_{\text{newly added gates } V(t;x,y):=} \,U,
\end{equation}
for $\widetilde{\eta}:= \eta U^\dagger =x X_0+y Y_0\in \mathcal{W}$. The order of $H$- and $\psi_0$-exponentials does not matter.
Now, we provide the Riemannian gradient ascent (RGA) method employing Grover-compatible retractions in \cref{alg-grover-ret}.

\begin{algorithm}[t]
\caption{Grover-compatible Riemannian gradient ascent (RGA) method}
\label{alg-grover-ret}

\KwIn{Choose a Grover-compatible retraction $\mathrm{R}$ in \cref{defn-grover-retraction} with formula \cref{eq-grover-form}, initial point $U_0=I$, and tolerance $\varepsilon$.}

 Set $k:= 0$, $q_0:= M/N$, $e_0 := 1-q_0$\;

\While{$e_k > \varepsilon$}{
  Decompose $[H,\psi_k] = x_k X_0 + y_k Y_0$ to find $(x_k,y_k) \in \R^2$\;

  Choose a step size $t_k > 0$\;

  Update $U_{k+1}
:= V(t_k;x_k,y_k)U_k $ by \cref{eq-grover-form}\;

  Update $|\psi_{k+1}\rangle:= V(t_k;x_k,y_k)|\psi_k\rangle$\;

  Set new cost value
  $q_{k+1}:= \langle \psi_{k+1}|H|\psi_{k+1}\rangle$\;

  Set success probability error $e_{k+1}:= 1-q_{k+1}$\;

  Set $k:= k+1$\;
}
\end{algorithm}

\cite[Theorem 4.6]{lai2025grover} provides explicit results on the convergence and complexity of \cref{alg-grover-ret}. Suppose we run \cref{alg-grover-ret} with the 5-factor retraction in \cref{eq-5factor-retraction}, and choose a fixed step size $t_k=1 / L_{\text {Rie}}$, where $L_{\text {Rie}}:=2+N/\sqrt{2 M(N-M)} \in \mathcal{O}(\sqrt{N/M})$. Then, for any $0<\varepsilon \leq q_0$, the iterates satisfy $1-q_T \leq \varepsilon$ within at most $T=\left\lceil 6 L_{\text {Rie}} \log \left(1/\varepsilon\right)\right\rceil$ iterations. In summary, we establish an iteration complexity of $\mathcal{O}(\sqrt{N / M} \log (1 / \varepsilon))$ to reach an $\varepsilon$-global maximizer, which is consistent with Grover's quadratic speedup. The term $\log (1 / \varepsilon)$ indicates that the method converges linearly with respect to the accuracy $\varepsilon$.

\begin{remark}
Note that as long as the initial state $\left|\psi_0\right\rangle$ is the uniform state, \cref{alg-grover-ret} is guaranteed to converge to the desired target state $\left|\psi^{\star}\right\rangle$ defined in \cref{eq-defn-psistar}. Recall that $\left|\psi^{\star}\right\rangle=\left(1 / \sqrt{q_0}\right) H\left|\psi_0\right\rangle$. For any $\left|\psi_k\right\rangle \in \mathcal{S}= \operatorname{span}_{\mathbb{C}}\left\{\left|\psi_0\right\rangle, H\left|\psi_0\right\rangle\right\}$, the cost value satisfies $q_k=\left\langle\psi_k\right| H\left|\psi_k\right\rangle=\left|\left\langle\psi_k | \psi^{\star}\right\rangle\right|^2$. Therefore, when $q_k \rightarrow 1$, the states $\left|\psi_k\right\rangle$ and $\left|\psi^{\star}\right\rangle$ differ only by a global phase.
\end{remark}

\subsection{Classical simulability of \texorpdfstring{\cref{alg-grover-ret}}{Algorithm 1}}

\cref{alg-grover-ret} does not need to be executed directly on a quantum device. As shown in \cref{thm-grad-classical}, the procedure is classically simulable. Intuitively, the entire evolution of \cref{alg-grover-ret} takes place within a two-dimensional subspace. We choose $u:=H\left|\psi_0\right\rangle$ and $v:=(I-H)\left|\psi_0\right\rangle$ as a basis for the Grover plane.
It therefore suffices to represent the state $\left|\psi_k\right\rangle$ and the gradient $\left[H, \psi_k\right]$ in this basis. The detailed proof can be found in \cite[Theorem 3.13]{lai2025grover}. In practice, the gate angles can be computed in advance on a classical computer, after which the resulting circuit is executed on quantum hardware.

\begin{theorem}[Classical simulability of \cref{alg-grover-ret} {\cite[Theorem 3.13]{lai2025grover}}]
\label{thm-grad-classical}
Consider the iterative process in \cref{alg-grover-ret}.
With the initial triplet $(x_0,y_0,q_0):=(1,0,q_0)$ and the $2\times 2$ matrix $\Psi_0 = \begin{bmatrix} q_0 & 1-q_0\\ q_0 & 1-q_0 \end{bmatrix}$, there exists an explicit, classically computable process
\begin{equation*}
    F_{\mathrm{RGA}}: (x_k, y_k, q_k; t_k) \mapsto (x_{k+1}, y_{k+1}, q_{k+1}),
\end{equation*}
described as follows. Initialize $\alpha_0=\beta_0=1$, and $z_0=\alpha_0\bar{\beta}_0.$ For each $k = 0, 1, \ldots,$

\begin{enumerate}
    \item For the additional update gates $V (t_k; x_k, y_k)$ in  \cref{eq-grover-form}, define the corresponding $2 \times 2$ matrix
\begin{equation*}
    M_k:=\prod_{\ell=1}^{K}
    E_H\! \big (\theta_{\ell}^{ (1)} (t_k; x_k, y_k)\big)\,
    E_{\psi_0}\! \big (\theta_{\ell}^{ (2)} (t_k; x_k, y_k)\big),
\end{equation*}
where $E_H(\theta)=
\begin{bmatrix}
e^{i \theta} & 0 \\
0 & 1
\end{bmatrix}$, $E_{\psi_0}(\theta) = I_2 + (e^{i\theta} - 1)\Psi_0$.
\item Update $\begin{pmatrix}\alpha_{k+1}\\ \beta_{k+1}\end{pmatrix}=M_k
\begin{pmatrix}\alpha_k\\ \beta_k\end{pmatrix}$, and let $q_{k+1}=q_0|\alpha_{k+1}|^2,$ $z_{k+1}=\alpha_{k+1} \bar{\beta}_{k+1},$ $x_{k+1}=\Re (z_{k+1}),$ $y_{k+1}=\Im (z_{k+1})$.
\item Set $k\leftarrow k+1$ and return to step 1.
\end{enumerate}
\end{theorem}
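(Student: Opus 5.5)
The proof works entirely in the Grover plane. The key is to represent every relevant operator as a $2\times 2$ matrix in the non-orthogonal basis $u:=H|\psi_0\rangle$, $v:=(I-H)|\psi_0\rangle$, and then track the coordinates of $|\psi_k\rangle$ in this basis.

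\textbf{Step 1: the base case.} By \cref{thm-grad-in-W}, each $|\psi_k\rangle$ lies in $\mathcal S$, so we may write $|\psi_k\rangle=\alpha_k u+\beta_k v$. At $k=0$ we have $|\psi_0\rangle=u+v$, which gives $\alpha_0=\beta_0=1$.

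\textbf{Step 2: matrices of the two gate families.} We compute how each family acts on the coordinate vector $(\alpha,\beta)^\top$.
\begin{itemize}
\item Since $Hu=u$ and $Hv=0$, the gate $e^{i\theta H}$ sends $\alpha u+\beta v$ to $e^{i\theta}\alpha u+\beta v$. This is exactly $E_H(\theta)$.
\item For $e^{i\theta\psi_0}=I+(e^{i\theta}-1)\psi_0$, we use $\langle\psi_0|u\rangle=q_0$ and $\langle\psi_0|v\rangle=1-q_0$. Then $\psi_0(\alpha u+\beta v)=(q_0\alpha+(1-q_0)\beta)(u+v)$.
\item In coordinates this map is $\Psi_0=\begin{bmatrix} q_0&1-q_0\\ q_0&1-q_0\end{bmatrix}$, so $e^{i\theta\psi_0}$ acts as $E_{\psi_0}(\theta)$.
\end{itemize}
Composing these in the order given by \cref{eq-grover-form} shows that $V(t_k;x_k,y_k)$ acts on the coordinates as $M_k$. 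The update $|\psi_{k+1}\rangle=V|\psi_k\rangle$ then becomes $(\alpha_{k+1},\beta_{k+1})^\top=M_k(\alpha_k,\beta_k)^\top$.

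\textbf{Step 3: the cost value.} We have $q_{k+1}=\|H|\psi_{k+1}\rangle\|^2=|\alpha_{k+1}|^2\|u\|^2=q_0|\alpha_{k+1}|^2$.

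\textbf{Step 4: the gradient coefficients.} This is the main step: expressing $[H,\psi_k]$ in the basis $X_0,Y_0$.
\begin{itemize}
\item Write $\psi_k=|\psi_k\rangle\langle\psi_k|$ and expand. Since $Hu=u$ and $Hv=0$, the commutator becomes
\[
[H,\psi_k]=\alpha_k\bar\beta_k\,|u\rangle\langle v|-\bar\alpha_k\beta_k\,|v\rangle\langle u| .
\]
\item Setting $k=0$ gives $X_0=|u\rangle\langle v|-|v\rangle\langle u|$.
\item Next, $Y_0=i[H,X_0]=i\big(|u\rangle\langle v|+|v\rangle\langle u|\big)$, using $H|u\rangle\langle v|=|u\rangle\langle v|$, $|u\rangle\langle v|H=0$ and similarly for the other term.
\item With $z_k=\alpha_k\bar\beta_k=x+iy$, we get $xX_0+yY_0=(x+iy)|u\rangle\langle v|-(x-iy)|v\rangle\langle u|=z_k|u\rangle\langle v|-\bar z_k|v\rangle\langle u|$.
\end{itemize}
This matches $[H,\psi_k]$ exactly. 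Hence $x_k=\Re z_k$ and $y_k=\Im z_k$, and these coefficients are unique because $X_0$ and $Y_0$ are orthogonal (\cref{lem-X0YO}). It also recovers $(x_0,y_0)=(1,0)$.

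\textbf{Conclusion.} Every step involves only $2\times 2$ arithmetic and the classically computable angles $\theta_\ell^{(j)}(t_k;x_k,y_k)$. The map $F_{\mathrm{RGA}}$ is therefore explicit, and induction on $k$ finishes the proof. The only delicate point is the bookkeeping with the non-orthonormal basis $u,v$ in Step 4; this is handled by working with the outer products $|u\rangle\langle v|$ directly rather than with matrix representations.
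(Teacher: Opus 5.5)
Your proof is correct and takes essentially the same route as the paper, whose argument is sketched in the main text and reproduced in the appendix proof of the accepted-radius lemma: the same basis $u=H|\psi_0\rangle$, $v=(I-H)|\psi_0\rangle$, the same $2\times 2$ representations $E_H$ and $\Psi_0$, $q=q_0|\alpha|^2$, and the identification $z=x+iy$ by comparing $[H,\psi]$ with $xX_0+yY_0$ (you compare coefficients of $|u\rangle\langle v|$ and $|v\rangle\langle u|$ rather than $2\times 2$ matrix entries, a purely cosmetic difference). One wording fix: $\{u,v\}$ is orthogonal because $H(I-H)=0$ and is merely not normalized, so ``non-orthonormal'' is accurate but ``non-orthogonal'' is not.
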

As we will see later, the Riemannian Newton method proposed in this work is also classically simulable and can be obtained with modifications based on \cref{thm-grad-classical}.

\section{Grover-compatible Riemannian modified Newton method}\label{sec-newton}

In the classical Euclidean setting, consider the optimization problem $\max _{x \in \mathbb{R}^n} f(x)$.
A prototypical second-order algorithm is the Newton method \cite{dennis1996numerical,kelley2003solving,nocedal2006numerical}, which updates via the Hessian matrix:
\begin{equation*}
x_{k+1}=x_k+d_k, \quad d_k:=-\left[\nabla^2 f\left(x_k\right)\right]^{-1} \nabla f\left(x_k\right),
\end{equation*}
where $d_k$ denotes the Newton direction. Under standard regularity conditions, the method exhibits quadratic convergence in a neighborhood of the solution. The Newton method has also been extended to optimization on manifolds \cite{adler2002newton,absil2008optimization,ferreira2012local}, while preserving the same quadratic convergence rate with respect to the accuracy $\varepsilon$.

In this section, building on the first-order framework developed in the previous section, we will replace the Riemannian gradient direction with the Riemannian Newton direction, thereby obtaining a Grover-compatible Riemannian Newton method. As proved in \cref{app-complexity}, this modification improves the complexity from $\mathcal{O}(\sqrt{N / M} \log (1 / \varepsilon))$ to $\mathcal{O}(\sqrt{N / M} +\log \log (1 / \varepsilon))$.

\subsection{Riemannian Hessian}

For $f: \mathbb{R}^n \rightarrow \mathbb{R}$, the Hessian matrix $\nabla^2 f(x)$ at a point $x$ captures the second-order information of $f$. It is an $n \times n$ symmetric matrix and can be viewed as a self-adjoint linear operator on $\mathbb{R}^n$. By analogy, when $f$ is defined on a manifold, the Riemannian Hessian (operator) can be regarded as a self-adjoint linear operator acting on the tangent space. For the cost function defined in \cref{eq-cost}, we have the following lemma.

\begin{lemma}[{\cite[Proposition 1]{lai2026quantum}}]\label{lem-hessian}
For the cost function $f: \mathrm{U}(N) \rightarrow \mathbb{R}$ defined by $f(U)=\operatorname{Tr}(H U \psi_0 U^\dagger)$, the Riemannian Hessian of $f$ at $U \in \mathrm{U}(N)$ is the self-adjoint linear map $\operatorname{Hess} f(U): T_U \rightarrow T_U$,
\begin{equation}\label{eq-hess-1}
    \operatorname{Hess} f(U)[\Omega U]
    =
    \frac{1}{2}\big([H,[\Omega,\psi_U]] + [[H,\Omega],\psi_U]\big)U,
\end{equation}
where $\psi_U=U \psi_0 U^{\dagger}$ and $T_U=\mathfrak{u}(N)  U$. Identifying $T_U \simeq \mathfrak{u}(N)$ yields $\widetilde{\operatorname{Hess}} f(U): \mathfrak{u}(N) \rightarrow \mathfrak{u}(N),$
\begin{equation*}
    \widetilde{\operatorname{Hess}} f(U)[\Omega]
    =
    \frac{1}{2}\big([H,[\Omega,\psi_U]] + [[H,\Omega],\psi_U]\big),
\end{equation*}
which is self-adjoint on $\mathfrak{u}(N)$.
\end{lemma}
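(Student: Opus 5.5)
We derive the Riemannian Hessian of $f(U)=\operatorname{Tr}(HU\psi_0U^\dagger)$ by computing the second derivative of $f$ along the exponential curves $\gamma(t)=e^{t\Omega}U$, $\Omega\in\mathfrak{u}(N)$. Since $\mathrm{U}(N)$ carries the bi-invariant metric induced by the Frobenius inner product, these curves are geodesics, so $\langle \operatorname{Hess} f(U)[\Omega U],\Omega U\rangle=\frac{d^2}{dt^2}f(\gamma(t))|_{t=0}$. The full bilinear form can then be recovered by polarization, using that the Hessian is self-adjoint. (Alternatively, we could use the projection formula $\operatorname{Hess}f(U)[\xi]=\mathrm{P}_U(\mathrm{D}\grad f(U)[\xi])$, where $\mathrm{P}_U$ is the orthogonal projection onto $T_U$. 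For a bi-invariant metric this gives the same result, and we can use it as a cross-check.)

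\textbf{Step 1: second derivative along the geodesic.} Write $\psi(t)=e^{t\Omega}\psi_Ue^{-t\Omega}$. Then $\dot\psi=[\Omega,\psi]$ and $\ddot\psi=[\Omega,[\Omega,\psi]]$, so
\begin{equation*}
\frac{d^2}{dt^2}f\Big|_0=\operatorname{Tr}(H[\Omega,[\Omega,\psi_U]]).
\end{equation*}
The Hessian is the self-adjoint operator $\mathcal{A}$ on $\mathfrak{u}(N)$ satisfying $\langle \mathcal{A}[\Omega],\Omega\rangle=\operatorname{Tr}(H[\Omega,[\Omega,\psi_U]])$ for all skew-Hermitian $\Omega$. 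Since $\langle A,B\rangle=\operatorname{Tr}(A^\dagger B)=-\operatorname{Tr}(AB)$ for skew-Hermitian $A$, we need to rewrite this quadratic form as $-\operatorname{Tr}(\mathcal{A}[\Omega]\,\Omega)$ with $\mathcal{A}$ symmetric.

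\textbf{Step 2: symmetrize.} Cyclicity of the trace gives $\operatorname{Tr}(H[\Omega,Z])=\operatorname{Tr}([H,\Omega]Z)$. Applying this with $Z=[\Omega,\psi_U]$,
\begin{equation*}
\operatorname{Tr}(H[\Omega,[\Omega,\psi_U]])=\operatorname{Tr}([H,\Omega][\Omega,\psi_U]).
\end{equation*}
This expression can be rewritten in two ways:
\begin{equation*}
\operatorname{Tr}([H,\Omega][\Omega,\psi_U]) = -\operatorname{Tr}(\Omega\,[[H,\Omega],\psi_U])
\end{equation*}
and
\begin{equation*}
\operatorname{Tr}([H,\Omega][\Omega,\psi_U]) = -\operatorname{Tr}(\Omega\,[H,[\Omega,\psi_U]]).
\end{equation*}
Averaging the two gives
\begin{equation*}
-\operatorname{Tr}\big(\Omega\cdot\tfrac12([H,[\Omega,\psi_U]]+[[H,\Omega],\psi_U])\big),
\end{equation*}
which is $\langle \mathcal{A}[\Omega],\Omega\rangle$ for the $\mathcal{A}$ in the claim. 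We must also check that $\mathcal{A}[\Omega]\in\mathfrak{u}(N)$. This holds because the commutator of a Hermitian matrix with a skew-Hermitian matrix is Hermitian, and the commutator of two Hermitian matrices is skew-Hermitian; applying these two facts to each term shows both terms are skew-Hermitian.

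\textbf{Step 3: self-adjointness, and the main obstacle.} We need $\langle\mathcal{A}[\Omega_1],\Omega_2\rangle=\langle\Omega_1,\mathcal{A}[\Omega_2]\rangle$. Expanding with trace cyclicity, the term $\operatorname{Tr}(\Omega_2[H,[\Omega_1,\psi_U]])$ equals $\operatorname{Tr}([\Omega_2,H][\Omega_1,\psi_U])$, and the term with $\Omega_1,\Omega_2$ swapped in the other summand matches it. So the two summands interchange under $\Omega_1\leftrightarrow\Omega_2$, and their average is symmetric. This bookkeeping of trace identities is the only delicate point.

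The main conceptual obstacle is justifying that the quadratic form along geodesics determines the Riemannian Hessian, i.e., that no extra connection term appears. We handle this by noting that for the bi-invariant metric the curves $t\mapsto e^{t\Omega}U$ are geodesics. This is the standard fact that one-parameter subgroups are geodesics of a bi-invariant metric, which can be checked directly: $\ddot\gamma=\Omega^2\gamma$ is normal to $T_\gamma$ because $\Omega^2$ is Hermitian. Since a self-adjoint operator is determined by its quadratic form, we obtain $\operatorname{Hess} f(U)[\Omega U]=\mathcal{A}[\Omega]U$.
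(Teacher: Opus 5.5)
Your argument is correct, but it takes a different route from the paper's. The paper (Appendix~A, ``Riemannian Hessian'') uses the embedded-submanifold formula $\operatorname{Hess} f(U)[\Omega U]=\mathcal P_U\bigl(\mathrm D\mathcal G(U)[\Omega U]\bigr)$ with the ambient extension $\mathcal G(U)=[H,U\psi_0U^\dagger]U$. It differentiates this to get $\bigl([H,[\Omega,\psi]]+[H,\psi]\Omega\bigr)U$. It then projects with $\operatorname{Skew}$, using $\operatorname{Skew}(AB)=\tfrac12[A,B]$ for skew-Hermitian $A,B$, to obtain $[H,[\Omega,\psi]]+\tfrac12[[H,\psi],\Omega]$. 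Finally it applies the Jacobi identity to reach the symmetric form. Self-adjointness is taken from general theory rather than checked.

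You compute only the quadratic form, as the second derivative along $e^{t\Omega}U$. You correctly justify that these curves are geodesics, since $\ddot\gamma=\Omega^2\gamma\in\mathcal H(N)\gamma$ is normal. You then choose a symmetric representative and verify its self-adjointness by hand. Your Step~3 bookkeeping is right: both pairings reduce to $-\tfrac12\bigl(\operatorname{Tr}([\Omega_2,H][\Omega_1,\psi_U])+\operatorname{Tr}([\Omega_1,H][\Omega_2,\psi_U])\bigr)$.

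The trade-off is as follows. The projection route produces the operator directly, but in a form that is not manifestly symmetric, so it needs Jacobi. Your route avoids choosing an ambient extension and doing any projection. It also establishes the lemma's self-adjointness claim by explicit computation. And it is exactly the second-order Taylor characterization the paper quotes after the lemma, with the exponential map as a second-order retraction. The price is that it relies on the geodesic property and on polarization.

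One small nit: your parenthetical says the projection formula agrees ``for a bi-invariant metric.'' Its validity actually comes from $\mathrm U(N)$ being a Riemannian submanifold of $\mathbb C^{N\times N}$ with the induced metric; bi-invariance plays no role there.
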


The self-adjointness means that $\langle\Xi, \widetilde{\operatorname{Hess}} f(U)[\Omega]\rangle=\langle\widetilde{\operatorname{Hess}} f(U)[\Xi], \Omega\rangle$ for any $\Omega, \Xi \in \mathfrak{u}(N)$.
A derivation of this Hessian expression can be found in \cite[Proposition 1]{lai2026quantum}. An equivalent definition is based on the second-order Taylor expansion on the manifold:
\begin{align*}
& f\left(\mathrm{R}_U(t \eta)\right) \\
& =f(U)+t\langle\operatorname{grad} f(U), \eta\rangle+\frac{t^2}{2}\langle\eta, \operatorname{Hess} f(U)[\eta]\rangle+ \mathcal{O}(t^3),
\end{align*}
where $\mathrm{R}_U$ is a retraction and $\eta \in T_U$.
In \cref{app-geo}, we verify that \cref{eq-hess-1} ensures the above expansion holds.

In the classical Euclidean setting, the Newton direction $d_k$ is defined as the solution to the Newton equation:
\begin{equation*}
    \nabla^2 f\left (x_k\right) d_k=-\nabla f\left (x_k\right).
\end{equation*}
By analogy, the Riemannian Newton direction $\Omega_k^{\mathrm{N}} U_k \in T_{U_k}$ is defined as the solution to the Riemannian Newton equation: $\operatorname{Hess} f (U_k)[\Omega_k^{\text{N}} U_k] = - \grad f (U_k)$, i.e.,
\begin{equation}\label{eq-Newton2}
  \widetilde{\operatorname{Hess}} f (U_k)[\Omega_k^{\text{N}}] =- \widetilde{\grad} f(U_k).
\end{equation}
Solving these equations is generally challenging. In \cite{lai2026quantum}, Pauli words are used as a basis of $\mathfrak{u}(N)$ to convert the above operator equation into a matrix equation. However, this approach is computationally expensive.

In contrast, in the Grover setting, where the Hamiltonian satisfies $H=H^2$, a remarkable relationship arises: the Riemannian gradient is always an eigenvector of the Riemannian Hessian. Consequently, the Newton equation admits an explicit solution. This result is formalized in the following theorem.

\begin{theorem}\label{thm-hessian}
Let \(H=H^\dagger=H^2\) be an orthogonal projector, and let \(\psi=|\psi\rangle\langle\psi|\) be an arbitrary pure state.
Define \(q:=\operatorname{Tr}(H\psi)\in[0,1]\) and \(g:=[H,\psi]\).
Consider the map \(L(\Omega):=\frac12\bigl([H,[\Omega,\psi]]+[[H,\Omega],\psi]\bigr)\).
Then \(L(g)=(1-2q)g\).
\end{theorem}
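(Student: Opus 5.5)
Direct algebraic computation: expand both double commutators with $\Omega=g=H\psi-\psi H$, then simplify using only $H^2=H$, $\psi^2=\psi$, and $\psi H\psi=q\psi$. The last identity holds because $\psi=|\psi\rangle\langle\psi|$ is rank one and $\langle\psi|H|\psi\rangle=q$.

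\emph{First term.} Compute $[g,\psi]=H\psi-\psi H\psi-\psi H\psi+\psi H = H\psi+\psi H-2q\psi$. Then
\begin{equation*}
[H,[g,\psi]] = H\psi+H\psi H-2qH\psi - H\psi H-\psi H+2q\psi H = (1-2q)(H\psi-\psi H),
\end{equation*}
using $H^2=H$. So $[H,[g,\psi]]=(1-2q)g$.

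\emph{Second term.} First $[H,g]=H\psi-H\psi H-H\psi H+\psi H=H\psi+\psi H-2H\psi H$. Then
\begin{equation*}
[[H,g],\psi]=H\psi+\psi H\psi-2H\psi H\psi-\psi H\psi-\psi H+2\psi H\psi H .
\end{equation*}
Substitute $\psi H\psi=q\psi$, so that $H\psi H\psi=qH\psi$ and $\psi H\psi H=q\psi H$. This gives $(1-2q)H\psi-(1-2q)\psi H=(1-2q)g$.

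Averaging the two terms gives $L(g)=(1-2q)g$. There is no real obstacle. The only point needing care is to reduce every product $\psi H\psi$ via the rank-one property, since that is where $q$ enters. The sign conventions in the nested commutators must also be tracked consistently. As a sanity check, I would compare with \cref{lem-X0YO} (with $X=g$, $Y=i[H,g]$): the second term is $-i[Y,\psi]$, consistent with the symmetric structure.
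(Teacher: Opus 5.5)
Your computation is correct and is essentially the paper's argument: both expand the nested commutators and reduce using $H^2=H$, $\psi^2=\psi$, and $\psi H\psi=q\psi$. The only difference is that the paper first applies the Jacobi identity, $[H,[g,\psi]]=[[H,g],\psi]+[g,[H,\psi]]=[[H,g],\psi]$ (since $[g,g]=0$), so it expands only the second term, whereas you expand both. Your separate evaluation of the first term gives $[H,[g,\psi]]=(1-2q)g$ explicitly, which is the identity used later in \cref{lem:newton-cancellation-revised}.
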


\begin{proof}
Substituting \(g=[H,\psi]\) into the definition of \(L\), we get \(L(g)=\frac12\bigl([H,[g,\psi]]+[[H,g],\psi]\bigr)\).
By the Jacobi identity, \([H,[g,\psi]]=[[H,g],\psi]+[g,[H,\psi]]=[[H,g],\psi]\), since \([H,\psi]=g\) and \([g,g]=0\).
Hence \(L(g)=[[H,g],\psi]\).
Using \(H^2=H\), we compute \([H,g]=[H,[H,\psi]]=H\psi-2H\psi H+\psi H\).
Therefore \(L(g)=[H\psi-2H\psi H+\psi H,\psi]\).
Using \(\psi^2=\psi\) and \(\psi H\psi=q\psi\), we have \([H\psi,\psi]=H\psi-q\psi\), \([\psi H,\psi]=q\psi-\psi H\), and \([H\psi H,\psi]=qH\psi-q\psi H\).
Combining them gives
\begin{align*}
L(g)&=(H\psi-q\psi)-2(qH\psi-q\psi H)+(q\psi-\psi H)\\
&=(1-2q)(H\psi-\psi H)=(1-2q)g.
\end{align*}
Thus \(L(g)=(1-2q)g\), which proves the claim.
\end{proof}

Note that the above result holds only under the condition $H=H^2$, and does not hold in general.

\subsection{Grover-compatible Riemannian modified Newton method}

In this subsection, we are now ready to present a Riemannian Newton method for problem \cref{eq-problem}, which is both Grover-compatible and classically simulable, similar to \cref{alg-grover-ret}.
Let $q_k=f\left (U_k\right)$, and define
\begin{equation*}
\mathrm{g}_k:=\widetilde{\operatorname{grad}} f\left(U_k\right) \neq 0, \quad \mathrm{H}_k:=\widetilde{\operatorname{Hess}} f\left(U_k\right).
\end{equation*}
Then \cref{thm-hessian} shows that
\begin{equation*}
    \mathrm{H}_k\left[\mathrm{g}_k\right]=\lambda_k \mathrm{g}_k, \quad \lambda_k:=1-2 q_k.
\end{equation*}
Since $q_k \in[0, 1]$, the eigenvalue $\lambda_k$ lies in the interval $[-1, 1]$. The original Newton equation $\mathrm{H}_k\left[\Omega_k^{\mathrm{N}}\right]=-\mathrm{g}_k$ in \cref{eq-Newton2} therefore admits the explicit solution
\begin{equation}\label{eq-solution-1}
    \Omega_k^{\mathrm{N}}=\frac{1}{-\lambda_k} \mathrm{g}_k \in \mathcal{W},
\end{equation}
whenever $q_k \neq \frac{1}{2}$. (Recall that $\mathcal{W}:=\operatorname{span}_{\mathbb{R}}\left\{X_0, Y_0\right\} \subseteq \mathfrak{u}(N)$ as defined in \cref{thm-grad-in-W}.) The corresponding standard Newton update is then given by $U_{k+1}=\mathrm{R}_{U_k}(\Omega_k^{\mathrm{N}})$ with unit step size $t_k=1.$

\begin{remark}
Since the Riemannian gradient is always an eigenvector of the Riemannian Hessian, the Newton direction is collinear with the gradient. Then, the Newton method can be viewed as a scaled gradient method.
\end{remark}

However, the standard Newton method converges only when the initial point $U_0$ is sufficiently close to the solution. To improve numerical stability, we therefore employ a modified Newton method \cite{nocedal2006numerical}. Specifically, we introduce a parameter $\mu_k \geq 0$ to modify the Hessian operator and solve the modified Newton equation
\begin{equation*}
    \left (\mathrm{H}_k-\mu_k \mathrm{I}\right)\left[\Omega_k^{\mathrm{N}}\right]=-\mathrm{g}_k,
\end{equation*}
where $\mathrm{I}$ denotes the identity operator on the tangent space $\mathfrak{u}(N)$.
By \cref{thm-hessian}, this equation also admits a solution, namely the modified Newton direction
\begin{equation*}
    \Omega_k^{\mathrm{N}}=\gamma_k \mathrm{g}_k \in \mathcal{W}, \quad \gamma_k:=\frac{1}{\mu_k-\lambda_k},
\end{equation*}
whenever $\mu_k\neq\lambda_k $.
If we set $\mu_k>\lambda_k$, then
\begin{equation*}
\left\langle\Omega_k^{\mathrm{N}}, \mathrm{g}_k\right\rangle=\gamma_k\left\|\mathrm{g}_k\right\|^2=\gamma_k \, 2 q_k\left(1-q_k\right)>0.
\end{equation*}
Hence, the modified Newton direction is an ascent direction, and there always exists a step size $t_k>0$ along this direction that increases the cost value. The parameter $\mu_k$ is determined according to the following two cases:
\begin{enumerate}
    \item If $q_k>0.5$, then $\lambda_k<0$. In this case, the Hessian operator is already negative along the gradient direction. We therefore set $\mu_k=0$, which yields $\gamma_k=\frac{1}{-\lambda_k}$ like \cref{eq-solution-1}. Hence, no correction is needed.
    \item  If $q_k \leq 0.5$, then $\lambda_k \geq 0$. In this case, the Hessian operator is positive along the gradient direction, and the unmodified Newton step would point toward a descent direction. To avoid this, we set $\mu_k=\lambda_k+\delta$ with $\delta>0$, which yields $\gamma_k= \frac{1}{\delta}$.
\end{enumerate}
Combining the two cases above, we set the scaling factor
\begin{equation*}
    \gamma_k=\frac{1}{\max \left(\delta, 2 q_k-1\right)}
\end{equation*}
where $\delta>0$ is a small constant, typically chosen in the range $10^{-3}\sim 10^{-6}$.

After ensuring that the Newton direction $\Omega_k^{\mathrm{N}}$ is an ascent direction, we further employ an Armijo backtracking line search; see \cite[Algorithm 3.2]{nocedal2006numerical}. Specifically, we seek the smallest integer $m \geq 0$ such that $t_k=\rho^m$ satisfies
\begin{equation*}
    f\left(\mathrm{R}_{U_k}(t_k \Omega_k^{\mathrm{N}})\right) \geq f\left(U_k\right)+c \, t_k\left\langle\Omega_k^{\mathrm{N}}, \mathrm{g}_k\right\rangle,
\end{equation*}
where $\rho=0.5$ and $c=10^{-4}$. Note that $\left\langle\Omega_k^{\mathrm{N}}, \mathrm{g}_k\right\rangle>0$, which guarantees that the above procedure terminates after finitely many iterations; see \cite[Lemma 3.1]{nocedal2006numerical}. Consequently, the cost value increases monotonically. Finally, a complete description of the proposed algorithm is given in \cref{alg-newton-armijo}.

This modified Newton method with line search ensures convergence from any initial point (i.e., it enjoys the global convergence property of first-order gradient methods), while achieving quadratic convergence once the iterate is sufficiently close to the solution. Intuitively, during the early iterations, \cref{alg-newton-armijo} behaves similarly to \cref{alg-grover-ret}. When the iterate is sufficiently close to the solution (so that $q_k>0.5$ and no direction modification is required), the Armijo condition is always satisfied with $t_k=1$; see \cite[Theorem 3.6 \& 3.8]{nocedal2006numerical}. Consequently, \cref{alg-newton-armijo} reduces to the standard Newton method.

For the rigorous convergence and complexity analysis, see \cref{app-complexity}, where \cref{thm:global-rmn-complexity-final} gives the final complexity bound $\mathcal{O}(\sqrt{N / M} +\log \log (1 / \varepsilon))$.

\begin{remark}[Iteration complexity and query complexity]
The complexity stated in \cref{thm:global-rmn-complexity-final} is an iteration complexity, as is standard in optimization. In the quantum-search setting, however, complexity is often measured by the number of oracle calls \(O_g(\beta)=e^{i\beta H}\), namely the query complexity. For the \(5\)-factor Grover-compatible retraction used in this paper, each RMN update contributes only a constant number of oracle calls; more precisely, after merging adjacent endpoint oracle factors between consecutive updates, \(T\) iterations require \(2T+1\) oracle calls. Hence the iteration complexity and the query complexity differ only by a constant factor determined by the chosen retraction.
\end{remark}

\begin{algorithm}[t]
\caption{Grover-compatible Riemannian modified Newton (RMN) method}
\label{alg-newton-armijo}

\KwIn{Choose a Grover-compatible retraction $\mathrm{R}$ in \cref{defn-grover-retraction} with formula \cref{eq-grover-form}, initial point $U_0=I$, damping $\delta = 10^{-3}$, Armijo parameters $c = 10^{-4}$, $\rho = \tfrac{1}{2}$, and tolerance $\varepsilon$.}

Set $k:= 0$, $q_0:= M/N$, $e_0 := 1-q_0$ and initial squared gradient norm $G_0:= 2q_0(1-q_0)$\;

\While{$e_k > \varepsilon$}{

    Decompose $[H,\psi_k] = x_k X_0 + y_k Y_0$ to find $(x_k,y_k) \in \mathbb{R}^2$\;

    Set modified Newton scaling $\gamma_k:= \frac{1}{\max(\delta,\, 2q_k - 1)}$\;

    Set step size $t:= 1$\;
    \Repeat{$q_{\mathrm{trial}} \ge q_k + c\, t\, \gamma_k\, G_k$}{
        Set $|\psi_{\mathrm{trial}}\rangle:= V(t\gamma_k; x_k, y_k)|\psi_k\rangle$ by \cref{eq-grover-form}\;
        Compute $q_{\mathrm{trial}}:= \langle \psi_{\mathrm{trial}} | H | \psi_{\mathrm{trial}} \rangle$\;
        \If{$q_{\mathrm{trial}} < q_k + c\, t\, \gamma_k\, G_k$}{
            Set $t:= \rho t$\;
        }
    }

    Update $U_{k+1}:= V(t\gamma_k; x_k, y_k)U_k$\;

    Update $|\psi_{k+1}\rangle:=|\psi_{\mathrm{trial}}\rangle$\;

    Set new cost value $q_{k+1}:= q_{\mathrm{trial}}$\;

    Set success probability error $e_{k+1}:= 1-q_{k+1}$\;

    Set $G_{k+1}:= 2q_{k+1}(1-q_{k+1})$\;

    Set $k:= k+1$\;
}
\end{algorithm}

\subsection{Classical simulability of \texorpdfstring{\cref{alg-newton-armijo}}{Algorithm 2}}

Compared with the gradient ascent method in \cref{alg-grover-ret}, the modified Newton method in \cref{alg-newton-armijo} differs only by the introduction of a scaling factor $\gamma_k$ and a line search procedure. Building on the classical simulability of \cref{alg-grover-ret} established in \cref{thm-grad-classical}, we can directly derive a classically simulable version of \cref{alg-newton-armijo}, as stated in the following theorem.
All operations reduce to computations with $2 \times 2$ matrices.
It should be emphasized that the line search procedure in \cref{alg-newton-armijo} is carried out through the following classical computation, rather than through repeated quantum measurements and sampling; therefore, it does not incur any additional quantum overhead.

\begin{theorem}
[Classical simulability of \cref{alg-newton-armijo}]
\label{thm-newton-classical}
Consider the iterative process in \cref{alg-newton-armijo}.
With the initial triplet $(x_0,y_0,q_0):=(1,0,q_0)$ and the $2\times 2$ matrix $\Psi_0 = \begin{bmatrix} q_0 & 1-q_0\\ q_0 & 1-q_0 \end{bmatrix}$, there exists an explicit, classically computable process
\begin{equation*}
    F_{\mathrm{RMN}}:(x_k,y_k,q_k)\mapsto (x_{k+1},y_{k+1},q_{k+1}),
\end{equation*}
described as follows. Initialize $G_0 = 2q_0(1-q_0),$ $\alpha_0=\beta_0=1$, and $z_0=\alpha_0\bar{\beta}_0.$ For each $k=0,1,\dots$,

\begin{enumerate}
\item Set modified scaling $\gamma_k:=\tfrac{1}{\max(\delta,\,2q_k-1)}.$
\item For the trial step $t$, and additional update gates $V (t \gamma_k; x_k, y_k)$ in  \cref{eq-grover-form}, define the corresponding $2 \times 2$ matrix
\begin{equation*}
    M_k(t):=
    \prod_{\ell=1}^{K}
    E_H\!\big(\theta_{\ell}^{(1)}(t\gamma_k;x_k,y_k)\big)\,
    E_{\psi_0}\!\big(\theta_{\ell}^{(2)}(t\gamma_k;x_k,y_k)\big),
\end{equation*}
where $E_H(\theta)=
\begin{bmatrix}
e^{i \theta} & 0 \\
0 & 1
\end{bmatrix}$, $E_{\psi_0}(\theta) = I_2 + (e^{i\theta} - 1)\Psi_0$.
\item For the trial step $t$, define $\begin{pmatrix}
\alpha_k^{\mathrm{temp}}(t)\\[2mm]
\beta_k^{\mathrm{temp}}(t)
\end{pmatrix}
:=
M_k(t)
\begin{pmatrix}
\alpha_k\\[2mm]
\beta_k
\end{pmatrix},$ and $q_k^{\mathrm{temp}}(t):=q_0\big|\alpha_k^{\mathrm{temp}}(t)\big|^2.$
\item Let \(m_k \geq 0 \) be the smallest nonnegative integer such that the Armijo condition
\begin{equation*}
    q_k^{\mathrm{temp}}(\rho^{m_k})
    \;\ge\;
    q_k + c\,\rho^{m_k}\gamma_k\,G_k
\end{equation*}
holds, and set $t_k:=\rho^{m_k}.$
\item Update $\begin{pmatrix}
\alpha_{k+1} \\[2mm]
\beta_{k+1}
\end{pmatrix}
=
M_k(t_k)
\begin{pmatrix}
\alpha_k \\[2mm]
\beta_k
\end{pmatrix}$, and let
$q_{k+1} = q_0 |\alpha_{k+1}|^2$,
$G_{k+1} = 2q_{k+1}(1-q_{k+1})$,
$z_{k+1} = \alpha_{k+1}\bar{\beta}_{k+1}$,
$x_{k+1} = \Re(z_{k+1})$,
$y_{k+1} = \Im(z_{k+1})$.
\item Set $k\leftarrow k+1$ and return to step 1.
\end{enumerate}
\end{theorem}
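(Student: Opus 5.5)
The plan is to reduce \cref{thm-newton-classical} to \cref{thm-grad-classical}. I will verify that \cref{alg-newton-armijo} and \cref{alg-grover-ret} differ only in the scalar parameter passed to the retraction curve, and that the Armijo test depends only on quantities the $2\times2$ simulation already produces.

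\textbf{Step 1: the state stays in the Grover plane.} By \cref{thm-grad-in-W}, every iterate $|\psi_k\rangle$ of \cref{alg-newton-armijo} lies in $\mathcal S$. The reason is that each update $V(t\gamma_k;x_k,y_k)$ is, by \cref{eq-grover-form}, a finite product of factors $e^{i\theta H}$ and $e^{i\theta\psi_0}$; the scaling $\gamma_k$ and the trial step $t$ only change the angles. Using the basis $u=H|\psi_0\rangle$, $v=(I-H)|\psi_0\rangle$, I write $|\psi_k\rangle=\alpha_k u+\beta_k v$, with $\alpha_0=\beta_0=1$.

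\textbf{Step 2: the $2\times2$ representation.} I check that $e^{i\theta H}$ acts in this basis as $E_H(\theta)$, since $Hu=u$ and $Hv=0$. I also check that $e^{i\theta\psi_0}=I+(e^{i\theta}-1)\psi_0$ acts as $E_{\psi_0}(\theta)$. This uses $\langle\psi_0|u\rangle=q_0$ and $\langle\psi_0|v\rangle=1-q_0$, together with $\psi_0 u=q_0(u+v)$ and $\psi_0 v=(1-q_0)(u+v)$, which give exactly the columns of $\Psi_0$. Both facts are already contained in \cref{thm-grad-classical}.

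From the representation I read off the remaining quantities. Since $\|u\|^2=q_0$, the cost is $q=\langle\psi|H|\psi\rangle=q_0|\alpha|^2$. For the gradient, I take the coordinates $(x,y)$ of $[H,\psi]$ in $\{X_0,Y_0\}$ to be $\Re(\alpha\bar\beta)$ and $\Im(\alpha\bar\beta)$. This is the identity proved in \cref{thm-grad-classical}, which I invoke rather than rederive. It only needs to be checked against the expansion $[H,\psi]=\alpha\bar\beta\,|u\rangle\langle v|-\bar\alpha\beta\,|v\rangle\langle u|$ together with $X_0=|u\rangle\langle v|-|v\rangle\langle u|$ and $Y_0=i(|u\rangle\langle v|+|v\rangle\langle u|)$.

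\textbf{Step 3: the Newton-specific ingredients.} The scaling $\gamma_k=1/\max(\delta,2q_k-1)$ depends only on $q_k$, which is already known. For any trial step $t$, the trial state equals $M_k(t)(\alpha_k,\beta_k)^\top$ in the basis, so $q_{\mathrm{trial}}=q_k^{\mathrm{temp}}(t)$.

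The Armijo threshold $q_k+c\,t\,\gamma_k\langle\Omega^{\mathrm N}_k,\mathrm g_k\rangle$ uses $\langle\Omega^{\mathrm N}_k,\mathrm g_k\rangle=\gamma_k\|\mathrm g_k\|_F^2$. By \cref{thm-grad-in-W} this equals $\gamma_k\cdot 2q_k(1-q_k)=\gamma_k G_k$, which is the right-hand side used in step 4. Hence the backtracking loop of \cref{alg-newton-armijo} coincides with the search for the smallest $m_k$ in step 4.

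I also need $m_k$ to exist. This follows from the ascent property $\gamma_k G_k>0$ and the standard finite termination of Armijo backtracking. Here $G_k>0$ holds because $q_k\in(0,1)$, which follows from monotonicity and the standing assumption $q_0\notin\{0,1\}$; the loop has already stopped once $e_k\le\varepsilon$. Finally, the accepted update $M_k(t_k)$ reproduces $U_{k+1}|\psi_0\rangle$, which closes the induction on $k$.

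\textbf{Main obstacle.} The only nontrivial identity is the gradient-coordinate formula $x+iy=\alpha\bar\beta$, and I will take it from \cref{thm-grad-classical}. Beyond that, the point needing care is the Armijo inner product. It is well defined only because the Newton direction is collinear with $\mathrm g_k$ (\cref{thm-hessian}), so no Hessian solve appears and the inner product reduces to the closed form $\gamma_k G_k$.
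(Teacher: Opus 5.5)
Your proposal is correct and follows essentially the paper's route: the paper obtains this theorem directly from \cref{thm-grad-classical}, noting that \cref{alg-newton-armijo} differs from \cref{alg-grover-ret} only through the scaling $\gamma_k$ entering the retraction angles and through an Armijo test whose threshold $q_k+c\,t\,\gamma_k G_k$ (via $\langle\Omega_k^{\mathrm N},\mathrm g_k\rangle=\gamma_k\|\mathrm g_k\|_F^2$) depends only on $q_k$. Your explicit checks in the basis $u=H|\psi_0\rangle$, $v=(I-H)|\psi_0\rangle$, namely the forms of $E_H$ and $\Psi_0$, $q=q_0|\alpha|^2$, and $x+iy=\alpha\bar\beta$ from the expansions of $[H,\psi]$, $X_0$, $Y_0$, match the matrix representation the paper writes out in its appendix and fill in the details the paper leaves implicit.
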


\section{Numerical experiments}\label{sec-experiments}

In this section, we compare the following methods via numerical simulations:
\begin{itemize}
\item the Grover-compatible Riemannian gradient ascent (\textbf{RGA}) method (\cref{alg-grover-ret}) \cite{lai2025grover}, and
\item the proposed Grover-compatible Riemannian modified Newton (\textbf{RMN}) method (\cref{alg-newton-armijo}).
\end{itemize}
All experiments are implemented in the NumPy package and conducted on a MacBook Pro (2021) with an Apple M1 Pro processor and 16 GB of memory.

Consider an unstructured search problem of size $N=2^n$ for an $n$-qubit system, with a single marked item ($M=1$).
We always use 5-factor retraction defined in \cref{example-5factor}. The initial state $|\psi_0\rangle$ is the uniform superposition.
The progress of the algorithms is measured by the cost value $q_k=f\left(U_k\right)=\operatorname{Tr}\left(H \psi_k\right),$
which represents the success probability of identifying the marked item and should converge to 1 under RGA and RMN.
Given $\varepsilon>0$ (typically $10^{-10}$), the algorithms terminate when success probability error $e_k:=1-q_k<\varepsilon$. Unless otherwise stated, we use the above as the default setting.

\paragraph{Correctness of classical simulability.}

We first examine whether the classical procedures described in \cref{thm-grad-classical,thm-newton-classical} can faithfully simulate the quantum implementations of \cref{alg-grover-ret,alg-newton-armijo}.
To this end, we set $n=4$ qubits.
At each iteration, we evaluate three quantities: the cost value $q_k$, and the coefficients $x_k, y_k$. Recall that $x_k$ and $y_k$ are the expansion coefficients of the Riemannian gradient $\left[H, \psi_k\right]$ on the two-dimensional subspace $\mathcal{W}$ with respect to the basis $\left\{X_0, Y_0\right\}$.

\cref{fig:task1_correctness_classical_vs_matrix} reports the absolute errors of these three quantities for RGA ((a)--(c)) and RMN ((d)--(f)), with and without classical simulation.
Here, ``without classical simulation'' refers to a direct NumPy implementation using explicit $N \times N$ full matrix operations, following \cref{alg-grover-ret,alg-newton-armijo} in a brute-force manner.
The results show that the classical simulation accurately reproduces the true iterations up to machine precision $(10^{-16})$.
Therefore, all subsequent experiments are conducted under classical simulation.

\begin{figure*}
    \centering
    \includegraphics[width=0.9\linewidth]{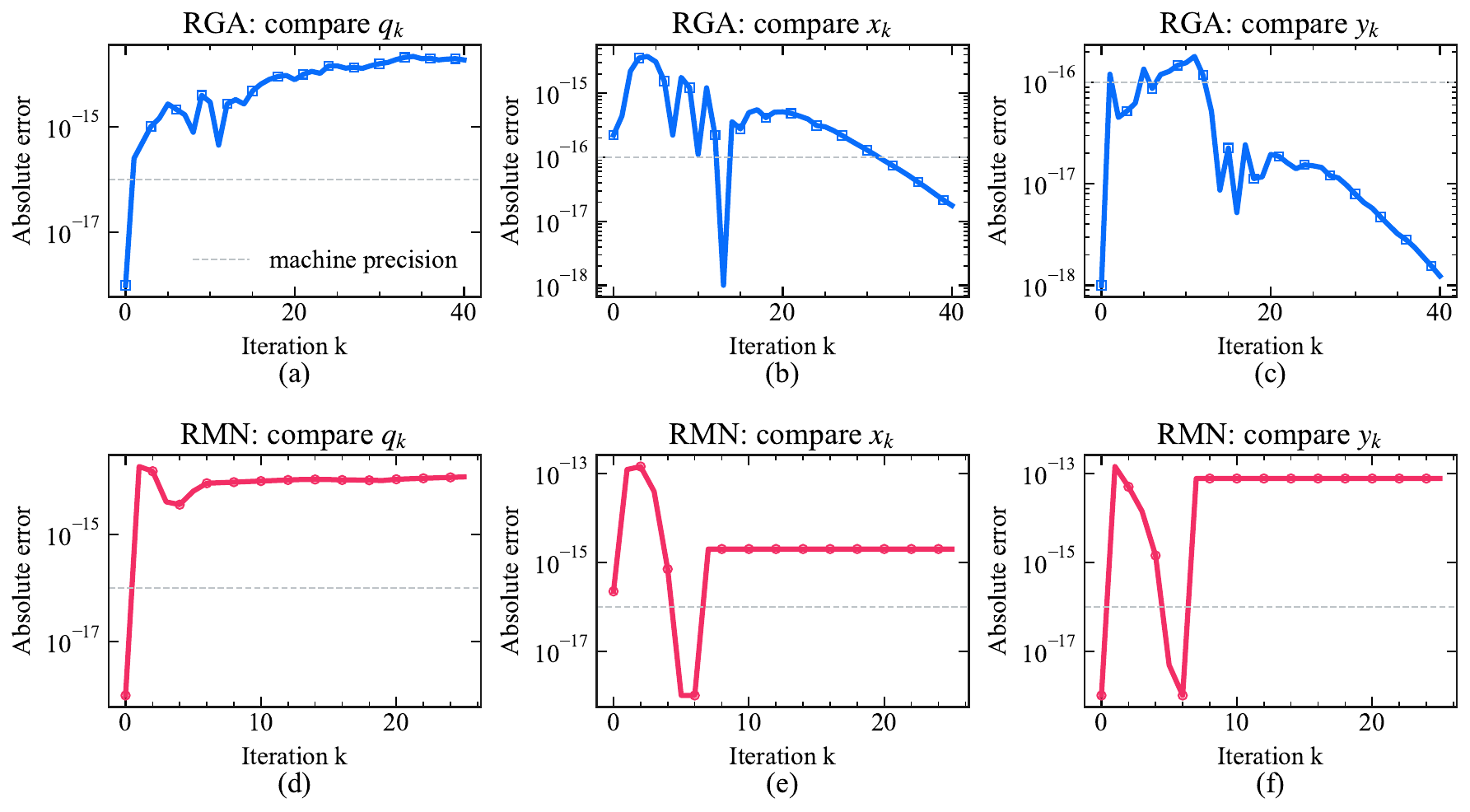}
    \caption{Absolute errors of the cost value $q_k$ and expansion coefficients $x_k, y_k$ between the classical simulation and the explicit full matrix implementation. (a)--(c) display the results for RGA, and (d)--(f) for RMN. All errors remain around machine precision ($10^{-16}$), verifying that the classical procedures accurately simulate the algorithms.}
    \label{fig:task1_correctness_classical_vs_matrix}
\end{figure*}

\paragraph{Comparison of convergence rates.}

As a first-order optimization method, RGA has been shown to converge linearly with respect to the success probability error $e_k:=1-q_k$ \cite{lai2025grover}. Then there exist an integer \(k_1\) and a contraction factor \(\rho\in(0,1)\) such that \(e_{k+1}\le \rho e_k\) for all \(k\ge k_1\).
In contrast, the proposed RMN method is second-order and achieves quadratic convergence: there exist an integer \(k_2\) and a constant \(C\) such that \(e_{k+1}\le C e_k^2\) for all \(k\ge k_2\). For more, see \cref{app-complexity}.

It is well known that quadratic convergence is significantly faster than linear convergence.
For each case, we run both RGA (with fixed step size 0.5) and RMN.
\cref{fig:task2_multi_qubit_comparison} shows the results on a log scale, including the gradient norm ((a)--(c)) and the success probability error ((d)--(f)).
The results show that RGA decreases linearly with respect to the iteration count. In contrast, RMN exhibits quadratic convergence and reaches a high-accuracy solution within only a few iterations.
This behavior is evident in both the gradient norm and the success probability error. For example, in (d), RMN reduces the error from $10^{-2}$ to $10^{-4}$, and then to $10^{-8}$ in just two iterations.

\begin{remark}[Gradient norm versus success probability error]
The two quantities shown in \cref{fig:task2_multi_qubit_comparison} are directly related.
Let \(\mathrm g_k:=[H,\psi_k]\). Since \(\|\mathrm g_k\|_F^2=2q_k(1-q_k)=2q_ke_k\) and \(q_k\to1\) near convergence, we have
\[
        \|\mathrm g_k\|_F^2\sim 2e_k,
        \qquad
        \|\mathrm g_k\|_F\sim \sqrt{2e_k}.
\]
Thus the gradient norm \(\|\mathrm g_k\|_F\) and the success probability error \(e_k\) differ by a square-root scaling. This changes the vertical scale in the plots, but it does not change the convergence type; this is most visible by comparing panels (a) and (d) in \cref{fig:task2_multi_qubit_comparison}. In particular, linear convergence of \(e_k\) still gives linear convergence of \(\|\mathrm g_k\|_F\), while quadratic convergence of \(e_k\) still gives quadratic convergence of \(\|\mathrm g_k\|_F\). Similarly, using the stopping criterion \(\|\mathrm g_k\|_F\le\varepsilon\) only changes the corresponding success probability tolerance to order \(\varepsilon^2\), which does not affect the double-logarithmic dependence because $\log\log\frac1{\varepsilon^2} = \mathcal O\!\left(\log\log\frac1\varepsilon\right).$ For a uniform presentation, throughout this paper we state the convergence and complexity results in terms of the success probability error \(e_k=1-q_k\), rather than the gradient norm.
\end{remark}

\begin{figure*}
    \centering
    \includegraphics[width=0.9\linewidth]{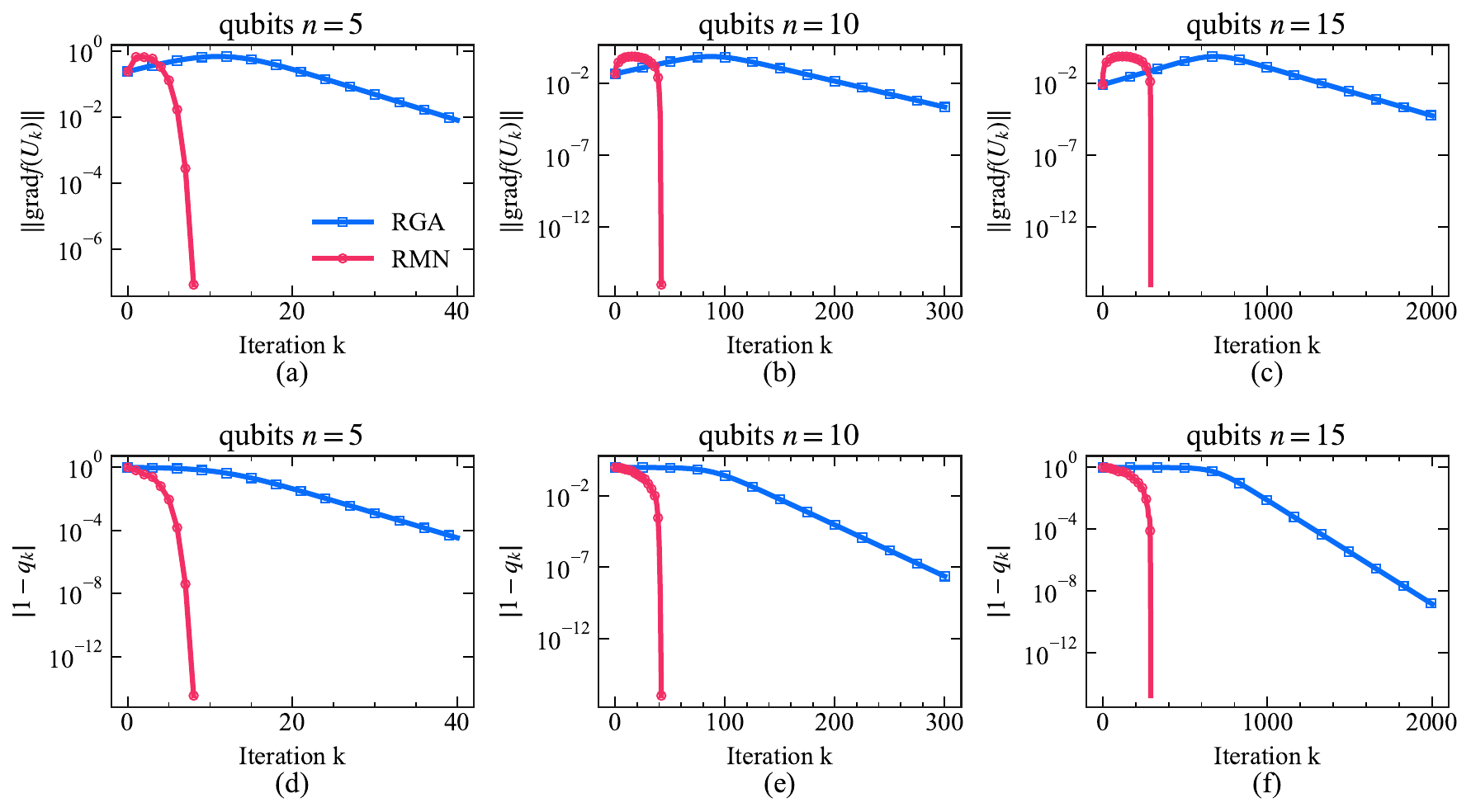}
    \caption{Convergence comparison between the RGA and RMN methods for problem sizes of $n=5$, $10$, and $15$ qubits. (a)--(c) illustrate the gradient norm, and (d)--(f) show the success probability error. The results demonstrate the linear convergence of RGA and the significantly faster, quadratic convergence of RMN.}
    \label{fig:task2_multi_qubit_comparison}
\end{figure*}

\paragraph{Scaling with problem size.}
RGA is shown in \cite{lai2025grover} to have complexity $\mathcal{O}(\sqrt{N / M} \log (1 / \varepsilon))$. Fixing the accuracy $\varepsilon$ and setting $M=1$, this reduces to the well-known quadratic speedup $\mathcal{O}(\sqrt{N})$, i.e., the iteration count grows linearly with $\sqrt{N}$.
The proposed RMN is a second-order optimization method, whose improvement targets the dependence on the accuracy $\varepsilon$ rather than on the problem size $N$. It is therefore expected to retain the same scaling $\mathcal{O}(\sqrt{N})$ with respect to the problem size. However, due to its quadratic convergence, the overall complexity improves to $\mathcal{O}(\sqrt{N}+ \log \log (1 / \varepsilon)).$

Next, we investigate the dependence of RMN on the problem size $N$. We fix $\varepsilon=10^{-6}$ and vary the number of qubits from $n=2$ to $n=28$, corresponding to $\sqrt{N}$ ranging from 2 to $\sqrt{2^{28}}=16384$.
\cref{fig:task3_single_top_complexity} shows a clear linear relationship: the number of iterations of RMN to reach the prescribed accuracy scales proportionally to $\sqrt{N}$.

\begin{figure}
    \centering
    \includegraphics[width=0.8\linewidth]{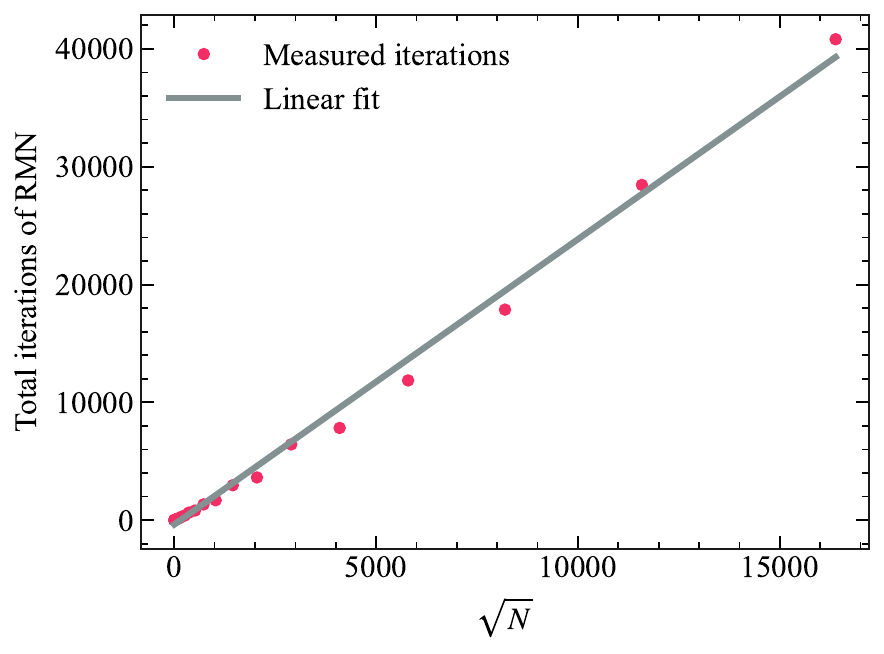}
    \caption{Iteration complexity of RMN versus the square root of the problem size. For a fixed tolerance $\varepsilon$ and varying qubit counts $n \in [2, 28]$ ($N=2^n$), the required iterations for RMN scale linearly with $\sqrt{N}$. This confirms that the second-order RMN method successfully retains the $\mathcal{O}(\sqrt{N})$ quantum speedup.}
    \label{fig:task3_single_top_complexity}
\end{figure}

\paragraph{Multiple marked items.}

We next test whether the same convergence behavior persists beyond the single marked item case. We fix \(n=10\) qubits and set \(M=1,2,4,16,32,64\). \cref{fig:task4_multi_marked_items} reports the success probability error \(e_k\) for these six ratios \(M/N\). Across all panels, RGA still exhibits linear convergence, whereas RMN retains quadratic convergence. In the underlying runs, the RMN iteration count decreases from \(42\) for \(M=1\) to \(8\) for \(M=32,64\), compared with a decrease from \(471\) to \(79\) iterations for RGA. Thus, for the tested \(M>1\) cases, the implementation behaves qualitatively the same as in the single marked item case, and the total iteration count decreases noticeably as \(M\) increases.

\begin{figure*}
    \centering
    \includegraphics[width=0.9\linewidth]{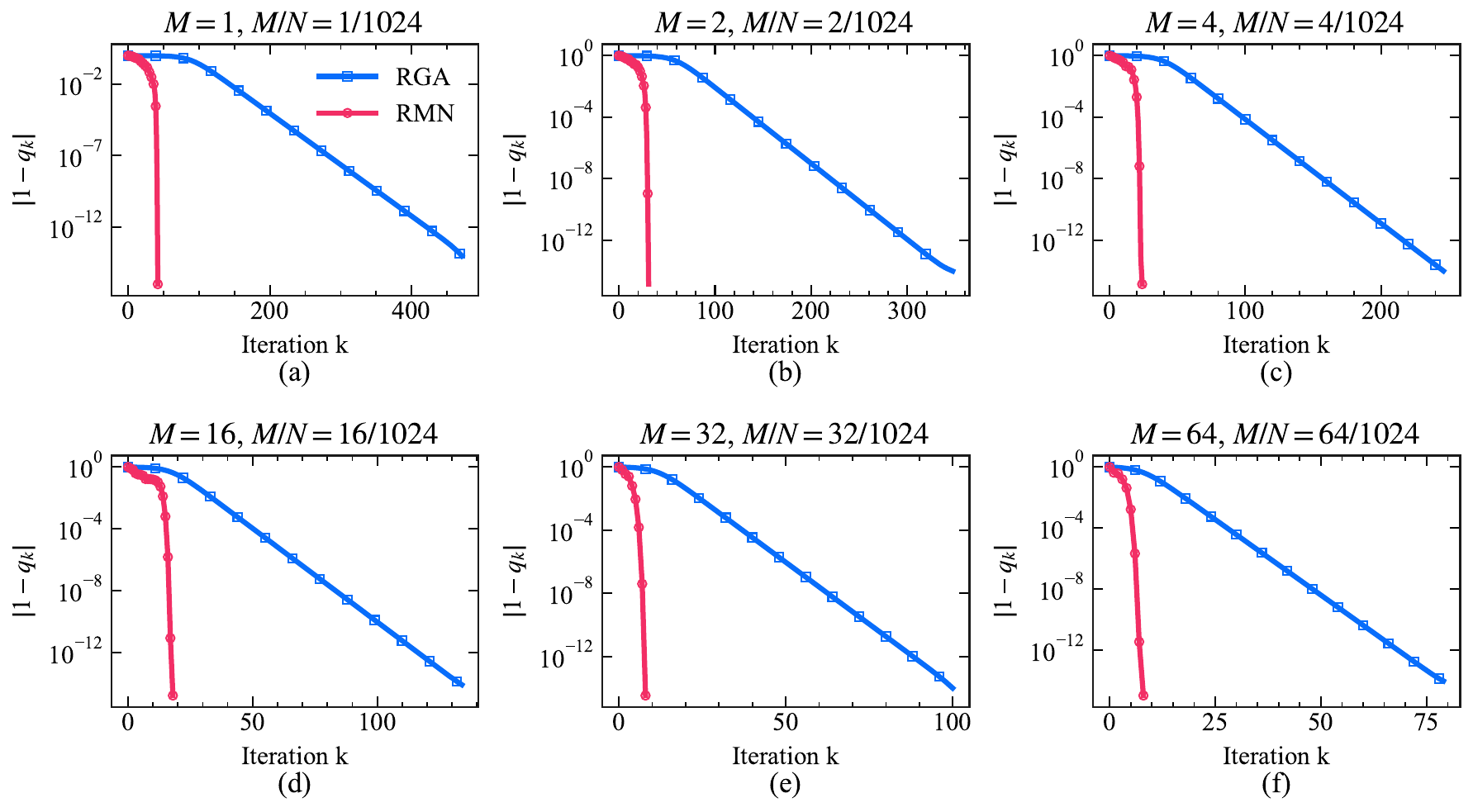}
    \caption{Convergence comparison between the RGA and RMN methods for a fixed problem size \(n=10\) and multiple numbers of marked items \(M=1,2,4,16,32,64\). RGA remains linearly convergent, while RMN retains quadratic convergence, and the iteration count decreases as \(M\) increases.}
    \label{fig:task4_multi_marked_items}
\end{figure*}

\paragraph{Line-search counts and damping sensitivity.}

Finally, we record the total iteration counts and Armijo line search statistics under different damping parameters in RMN. We reset \(M=1\) and run RMN for \(n=10,15\) with damping parameters \(\delta=10^{-2},10^{-3},10^{-4},10^{-5}\). \cref{tab:task5_armijo_delta_sensitivity} reports the total number of RMN iterations, the average and maximum numbers of Armijo backtracking reductions \(m_k\), and the percentage of accepted updates with \(m_k=0\). For \(n=10\), the iteration count remains between \(37\) and \(42\) for all tested \(\delta\), while the average number of backtracking reductions ranges from \(0.69\) to \(3.00\). For \(n=15\), the iteration count remains between \(272\) and \(305\), with average backtracking counts between \(1.51\) and \(4.65\). These values indicate that both the iteration count and the line-search cost remain moderate over the tested damping range.

\begin{table}[!t]
    \centering
    \caption{Line-search counts and damping sensitivity of RMN for \(M=1\) and \(n=10,15\). The table reports iteration counts, Armijo backtracking statistics \(m_k\), and full-step acceptance rates for different \(\delta\).}
    \label{tab:task5_armijo_delta_sensitivity}
    \footnotesize
    \setlength{\tabcolsep}{3pt}
    \begin{ruledtabular}
    \begin{tabular}{@{}cccccc@{}}
        \(n\) & \(\delta\) & Iter. & Avg. \(m_k\) & Max \(m_k\) & \(m_k=0\) \\
        \hline
        10 & \(10^{-2}\) & 42 & 0.69 & 4  & 55\% \\
        10 & \(10^{-3}\) & 42 & 1.19 & 6  & 38\% \\
        10 & \(10^{-4}\) & 38 & 1.74 & 7  & 45\% \\
        10 & \(10^{-5}\) & 37 & 3.00 & 11 & 32\% \\

        15 & \(10^{-2}\) & 305 & 1.51 & 10 & 38\% \\
        15 & \(10^{-3}\) & 292 & 3.28 & 14 & 25\% \\
        15 & \(10^{-4}\) & 278 & 3.59 & 14 & 24\% \\
        15 & \(10^{-5}\) & 272 & 4.65 & 17 & 25\% \\
    \end{tabular}
    \end{ruledtabular}
\end{table}

\section{Discussion}\label{sec-discussion}

In this work, we develop a deeper understanding of the quantum search problem through the lens of the manifold optimization framework by introducing the Grover-compatible Riemannian modified Newton (RMN) method.
While the previous first-order Riemannian gradient ascent (RGA) method \cite{lai2025grover} successfully recasts quantum search as a maximization problem on the unitary group and achieves $\mathcal{O}(\sqrt{N/M}\log(1/\varepsilon))$ query complexity, it is limited to a linear convergence rate with respect to the target accuracy $\varepsilon$. By incorporating second-order geometric information, we overcome this limitation.

The cornerstone of our RMN method is the theoretical observation that, for projector Hamiltonians $H=H^2$, the Riemannian gradient is an eigenvector of the corresponding Riemannian Hessian. This property eliminates the need for costly Hessian inversion, reducing the Newton update to a scaled gradient step.
Consequently, our RMN method can achieve a quadratic convergence rate, driving the total complexity down to $\mathcal{O}(\sqrt{N/M}+\log\log(1/\varepsilon))$ without incurring additional overhead per iteration.
Furthermore, we demonstrate that the classical simulability of the parameter update process is also preserved, allowing for the efficient precomputation of gate angles prior to execution on quantum hardware.

Our results suggest several promising directions for future research. The exact collinearity of the Riemannian gradient and the Newton direction established in our framework (i.e., \cref{thm-hessian}) relies strictly on the projective nature of the Grover Hamiltonian ($H=H^2$). Therefore, extending this second-order geometric property to more general quantum tasks presents a compelling theoretical challenge.
On the other hand, for broader quantum applications such as ground-state preparation for generic molecular Hamiltonians \cite{peruzzo2014variational,kandala2017hardware}, implementing exact Newton or gradient steps may be intractable on quantum computers. In such settings, it is of interest to investigate whether similar invariant gradient subspaces (as in \cref{thm-grad-in-W}) can be identified to reduce the effective problem dimensionality. Furthermore, the application of Riemannian quasi-Newton methods \cite{huang2015broyden}, particularly Riemannian BFGS \cite{huang2018riemannian}, could offer a viable pathway to achieving superlinear convergence without the need for explicit Hessian evaluations, enabling accelerations across a wider spectrum of Hamiltonians.

Beyond those theoretical generalizations, future work can evaluate the resilience of the RMN method under realistic hardware noise, as high-precision algorithms may be sensitive to gate infidelities and decoherence in NISQ devices. Finally, integrating this manifold optimization framework with quantum error mitigation techniques may be essential to stabilize its quadratic convergence for practical implementations.

\begin{acknowledgments}
This work was supported by the National Natural Science Foundation of China under the grant numbers 12501419, 12331010 and 12288101, the National Key R\&D Program of China under the grant number 2024YFA1012901, the Quantum Science and Technology-National Science and Technology Major Project via Project 2024ZD0301900, and the Fundamental Research Funds for the Central Universities, Peking University.
We thank Zongqi Wan for their valuable feedback on the manuscript.
\end{acknowledgments}

\section*{Data availability}

The source code of the numerical experiments is publicly available \cite{GroverNewton}.

\appendix

\section{Geometric tools on \texorpdfstring{$\mathrm{U}(N)$}{U(N)}}\label{app-geo}

The set of $N \times N$ unitary matrices,
\begin{equation*}
\mathrm{U}(N)=\left\{U \in \mathbb{C}^{N \times N} \mid U^{\dagger} U=I\right\},
\end{equation*}
forms a Riemannian manifold, despite being a subset of a Euclidean space.
As a result, optimization over $\mathrm{U}(N)$ requires geometric tools adapted to the manifold structure.
This appendix briefly reviews the basic ingredients for optimization on $\mathrm{U}(N)$, including tangent spaces, Riemannian gradients, retractions, and Riemannian Hessians. For a comprehensive treatment, see \cite{absil2008optimization,boumal2023introduction}.

\subsection{Tangent space}

Geometrically, the \textit{tangent space} provides a local linear approximation of the manifold at a given point, analogous to a tangent plane to a two-dimensional sphere (see \cref{fig:gradient}). From an optimization perspective, it consists of all feasible directions (i.e., tangent vectors) along curves on the manifold. This structure can also be characterized algebraically.

Fix a point $U \in \mathrm{U}(N)$, and consider a smooth curve $t \mapsto U(t) \in \mathrm{U}(N)$ with $U(0)=U$. Differentiating the unitarity constraint $U(t) U(t)^{\dagger}=I$ with respect to $t$ at $t=0$ yields
\begin{equation*}
    U \dot{U}(0)^{\dagger}+\dot{U}(0) U^{\dagger}=0.
\end{equation*}
Letting the tangent vector be $\eta:=\dot{U}(0)$, the above condition implies $(\eta U^{\dagger})^{\dagger}=-(\eta U^{\dagger}).$ Hence, the matrix $\Omega:=\eta U^{\dagger}$ is skew-Hermitian.
Consequently, the tangent space at $U$ is defined as
\begin{equation*}
    T_U=\left\{\Omega U: \Omega^{\dagger}=-\Omega\right\}=\mathfrak{u}(N) U.
\end{equation*}
Here, $\mathfrak{u}(N)$ denotes the Lie algebra of the unitary group, consisting of all $N \times N$ skew-Hermitian matrices and closed under the Lie bracket $[X, Y]=X Y-Y X$. In particular, at the identity $I \in \mathrm{U}(N)$, we have $T_I=\mathfrak{u}(N)$. Since $\operatorname{dim} \mathfrak{u}(N)=N^2$, the manifold $\mathrm{U}(N)$ has dimension $N^2$. Note that the tangent spaces of the unitary group differ only through right multiplication by $U$.

\begin{figure}
    \centering
    \includegraphics[width=0.9\linewidth]{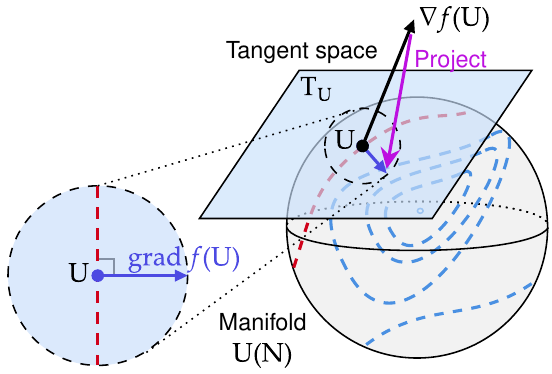}
    \caption{Riemannian gradient on the manifold. Viewing $\mathrm{U}(N)$ as a sphere for illustration, the Riemannian gradient $\operatorname{grad} f(U)$ is the orthogonal projection of the Euclidean gradient $\nabla f(U)$ onto the tangent space at $U$. It is perpendicular to the contour and points in the direction of the fastest increase of $f$.}
    \label{fig:gradient}
\end{figure}

\subsection{Riemannian gradient}

Consider a function $f: \mathrm{U}(N) \subseteq \mathbb{C}^{N \times N} \rightarrow \mathbb{R}$. The Euclidean gradient $\nabla f(U)$ gives the direction of steepest ascent in the ambient space $\mathbb{C}^{N \times N}$, but it generally does not lie in the tangent space $T_U$ and thus is not a feasible direction on the manifold.
The standard approach is to orthogonally project the Euclidean gradient onto $T_U$ (see \cite[Proposition 3.61]{boumal2023introduction}); the resulting tangent vector is called the \textit{Riemannian gradient}, $\grad f(U) \in T_U$.
Geometrically, the Riemannian gradient is perpendicular to the contour line of $f$ passing through $U$ (see \cref{fig:gradient}).

Here, we equip all tangent spaces $T_U$ with the Frobenius inner product $\langle A, B\rangle=\operatorname{Tr}\left(A^{\dagger} B\right),$ which is the metric inherited from the ambient space $\mathbb{C}^{N \times N}$.

The normal space at $U \in \mathrm{U}(N)$ is defined as the orthogonal complement of the tangent space:
\begin{equation*}
    N_U:=\left\{Z \in \mathbb{C}^{N \times N}:\langle Z, X\rangle=0, \forall X \in T_U\right\}.
\end{equation*}
Algebraically, this space can be characterized as
\begin{equation*}
    N_U=\left\{M U: M^{\dagger}=M\right\}=\mathcal{H}(N) U,
\end{equation*}
where $\mathcal{H}(N)$ is the space of $N \times N$ Hermitian matrices.

Using the orthogonal decomposition $\mathbb{C}^{N \times N}=T_U \oplus N_U$, any $Z \in \mathbb{C}^{N \times N}$ can be written as $Z=(\Omega+M) U,$ where $\Omega^{\dagger}=-\Omega$ and $M^{\dagger}=M$. Multiplying by $U^{\dagger}$ on the right gives $Z U^{\dagger}=\Omega+M.$ Taking the skew-Hermitian part yields $\Omega=\operatorname{Skew}\left(Z U^{\dagger}\right)$ with $\operatorname{Skew}(A):= \frac{1}{2}(A - A^\dagger)$. Hence, the orthogonal projection of $Z$ onto $T_U$ is
\begin{equation}\label{eq-projection}
    \mathcal{P}_U(Z):= \operatorname{Skew}(Z U^\dagger) U.
\end{equation}
The Riemannian gradient is then $\operatorname{grad} f(U)=\mathcal{P}_U(\nabla f(U))$.

For the cost function $f(U)=\operatorname{Tr}\left(H U \psi_0 U^{\dagger}\right)$, the Euclidean gradient is $\nabla f(U)=2 H U \psi_0$. In general, if $A$ and $B$ are (skew-)Hermitian, then $\operatorname{Skew}(2 A B)=[A, B].$ Applying this identity with $A:=H$ and $B:=U \psi_0 U^{\dagger}$, and using \cref{eq-projection}, we obtain
\begin{equation*}
    \grad f (U)
    =\Skew (2 H U \psi_0 U^\dagger)U
    =[H, U \psi_0 U^\dagger]U,
\end{equation*}
We denote the left skew-Hermitian factor by $\widetilde{\operatorname{grad}} f(U):= [H, U \psi_0 U^\dagger] \in \mathfrak{u}(N)$.

\subsection{Retractions}

Standard additive updates $U+t \eta$ leave the unitary manifold $\mathrm{U}(N)$, as they violate the unitarity constraint. To map such updates back onto $\mathrm{U}(N)$ while preserving the search direction, one introduces a geometric mapping called a \textit{retraction} \cite{adler2002newton,absil2008optimization}, which is a central tool in the modern manifold optimization framework.

Define $T \mathrm{U}(N):=\left\{(U, \eta): U \in \mathrm{U}(N), \eta \in T_U\right\}.$
A retraction on $\mathrm{U}(N)$ is a smooth mapping
\begin{equation*}
    \mathrm{R}: T \mathrm{U}(N) \rightarrow \mathrm{U}(N), \quad(U, \eta) \mapsto \mathrm{R}_U(\eta),
\end{equation*}
such that for any $U \in \mathrm{U}(N)$ and $\eta \in T_U$, the induced curve $t \mapsto \gamma(t):=\mathrm{R}_U(t \eta)$ satisfies
\begin{equation*}
    \gamma (0)=U \quad \text { and } \quad \dot{\gamma} (0)=\eta.
\end{equation*}
For the unitary manifold $\mathrm{U}(N)$, several retractions are available. Given $U \in \mathrm{U}(N)$ and a tangent vector $\eta=\tilde{\eta} U \in T_U$ with $\tilde{\eta} \in \mathfrak{u}(N)$, common retractions include:

\begin{enumerate}
    \item Riemannian exponential map \cite{hall2015liegroups}:
\begin{equation}\label{eq-app-exp}
    \mathrm{R}_U^{\text{exp}}(\eta) = \exp(\widetilde{\eta})\, U,
\end{equation}
which generates the exact geodesic curves on $\mathrm{U}(N)$.

    \item (First-order) Trotter retraction \cite{trotter1959product,lloyd1996universal,lai2026quantum}:
\begin{equation}\label{eq-trt}
    \mathrm{R}_U^{\text{trt}}(\eta)
    = \left( \prod_{j=1}^{N^2} \exp\!\left( i \omega^j P^j \right) \right) U,
\end{equation}
where $\widetilde{\eta} \in \mathfrak{u}(N)$ is expanded in the $\log_2(N)$-qubit Pauli basis $\{P^j\}_{j=1}^{N^2}$ as
$\widetilde{\eta} = i \sum_j \omega^j P^j$ for $\omega^j \in \mathbb{R}$.

    \item Cayley transform retraction \cite{wen2013feasible}:
\begin{equation*}
    \mathrm{R}_U^{\text{cay}}(\eta)
    = \left(I - \tfrac{1}{2}\widetilde{\eta}\right)^{-1}
      \left(I + \tfrac{1}{2}\widetilde{\eta}\right) U.
\end{equation*}

    \item QR decomposition retraction \cite{absil2008optimization}:
\begin{equation*}
    \mathrm{R}_U^{\text{qr}}(\eta) = \operatorname{qf}(U + \eta),
\end{equation*}
where $\operatorname{qf}(A)$ denotes the orthogonal $Q$-factor in the QR decomposition of $A$.

    \item Polar decomposition retraction \cite{absil2008optimization}:
\begin{equation*}
    \mathrm{R}_U^{\text{polar}}(\eta)
    = (U+\eta)(I + \eta^\dagger \eta)^{-1/2},
\end{equation*}
where $(I+\eta^{\dagger}\eta)^{-1/2}$ is the inverse of the unique positive definite Hermitian square root of $I+\eta^{\dagger}\eta$.
\end{enumerate}

The effectiveness of these retractions strongly depends on the computational paradigm. The Riemannian exponential map and the Trotter retraction correspond to physical Hamiltonian evolution, making them naturally suited for implementation on quantum hardware; however, their reliance on matrix exponentials renders them computationally inefficient on classical computers.
Conversely, the Cayley, QR, and polar retractions rely on matrix inversions and factorizations. While implementing these nonlinear operations on quantum circuits incurs additional overhead, they are highly optimized within numerical linear algebra, making them the preferred choice in classical computing environments.

\subsection{Riemannian Hessian}

Unlike the usual setting, where the Hessian is represented by a symmetric matrix of second-order partial derivatives, the Riemannian Hessian at \(U\), denoted by \(\operatorname{Hess} f(U)\), is a self-adjoint linear operator on \(T_U\). It describes the covariant derivative of the Riemannian gradient field along a tangent direction, thereby naturally incorporating both the second-order variation of the objective function and the curvature of the manifold.

A rigorous derivation of \(\operatorname{Hess} f(U)\) requires second-order differential geometric tools that are beyond the scope of this paper. Here we present only the main computation, based on the simplification afforded by the embedded submanifold setting in \cite[Corollary 5.16]{boumal2023introduction}: one differentiates a smooth ambient extension of the Riemannian gradient in \(\mathbb C^{N\times N}\), and then projects the result back onto the tangent space \(T_U\).

For \(f(U)=\operatorname{Tr}(HU\psi_0U^\dagger)\), set \(\psi:=U\psi_0U^\dagger\). The Riemannian gradient is \(\operatorname{grad} f(U)=[H,\psi]U\). Consider its ambient extension
\[
        \mathcal G:\mathbb C^{N\times N}\to\mathbb C^{N\times N},
        \qquad
        \mathcal G(U):=[H,U\psi_0U^\dagger]U .
\]
A direct differentiation gives
\[
        \mathrm D\mathcal G(U)[V]=[H,V\psi_0U^\dagger+U\psi_0V^\dagger]U+[H,U\psi_0U^\dagger]V.
\]
Restricting this derivative to a tangent direction \(V=\Omega U\in T_U\), with \(\Omega\in\mathfrak u(N)\), yields
\[
        \mathrm D\mathcal G(U)[\Omega U]
        =
        \bigl([H,[\Omega,\psi]]+[H,\psi]\Omega\bigr)U.
\]
Since this ambient derivative is not necessarily tangent to \(\mathrm U(N)\), we project it onto \(T_U\). By \cref{eq-projection},
\begin{align*}
        \operatorname{Hess} f(U)[\Omega U]
        &=\mathcal P_U\bigl(\mathrm D\mathcal G(U)[\Omega U]\bigr)\\
        &=\operatorname{Skew}\bigl(\mathrm D\mathcal G(U)[\Omega U]U^\dagger\bigr)U\\
        &=\operatorname{Skew}\bigl([H,[\Omega,\psi]]+[H,\psi]\Omega\bigr)U.
\end{align*}
Here \([H,[\Omega,\psi]]\) is skew-Hermitian, while \([H,\psi]\) and \(\Omega\) are both skew-Hermitian. Hence, we obtain
\begin{align*}
\operatorname{Hess} f(U)[\Omega U] & =\left([H,[\Omega, \psi]]+\frac{1}{2}[[H, \psi], \Omega]\right) U \\
& =\frac{1}{2}([H,[\Omega, \psi]]+[[H, \Omega], \psi]) U,
\end{align*}
where we used the Jacobi identity \( [[H,\psi],\Omega]=-[H,[\Omega,\psi]]+[[H,\Omega],\psi]\). This is the Hessian formula used in \cref{lem-hessian}.

In general manifold optimization, computing a Newton step necessitates the computationally expensive inversion of this Hessian operator. However, as demonstrated in \cref{thm-hessian}, when the Hamiltonian is a projector \((H=H^2)\), the Riemannian gradient strictly emerges as an exact eigenvector of this Hessian operator.

\section{Complexity analysis for the Grover-compatible RMN method}\label{app-complexity}

This section proves the convergence and iteration complexity of \cref{alg-newton-armijo}, the Grover-compatible Riemannian modified Newton (RMN) method.
We use the notation introduced in the main text:
\[
\begin{alignedat}{2}
        f(U)&=\operatorname{Tr}(HU\psi_0U^\dagger), \qquad\qquad&
        \psi_k&=U_k\psi_0U_k^\dagger,\\
        q_k&=f(U_k),&
        \mathrm g_k&=[H,\psi_k],\\
        G_k&:=\|\mathrm g_k\|_F^2=2q_k(1-q_k),&
        e_k&:=1-q_k .
\end{alignedat}
\]
The RMN direction used in \cref{alg-newton-armijo} is $\gamma_k \mathrm g_k U_k,$ where
\[
        \gamma_k=\frac{1}{\max\{\delta,2q_k-1\}},
\]
where \(\delta>0\) is the damping parameter. We also write $q_0=\frac{M}{N}.$ Unless otherwise specified, all results in this section are stated for the \(5\)-factor Grover-compatible retraction $\mathrm R_U\equiv \mathrm R_U^{(5)}$ defined in \cref{eq-5factor-retraction}; the proofs for other forms of Grover-compatible retractions are similar.

The complexity proof for \cref{alg-newton-armijo} has four parts.
First, we prove that the iterates are well defined, monotone, and globally convergent to success probability one.
Second, we identify a Newton region \(e_k\le r_{\mathrm N}\), where the damping is inactive, the full Newton step is accepted, and the error satisfies the quadratic reduction \(e_{k+1}\le C_{\mathrm N}e_k^2\).
Third, we show that the global phase enters this Newton region in \(\mathcal O(\sqrt{N/M})\) iterations.
Finally, combining the entrance estimate with the local quadratic reduction yields the total complexity
\[
        \mathcal O\!\left(
        \sqrt{\frac{N}{M}}+\log\log\frac1\varepsilon
        \right)
\]
for \cref{alg-newton-armijo}, as stated in \cref{thm:global-rmn-complexity-final} at the end of this section.

\subsection{Global monotone convergence}

In \cref{sec-rga}, we revisit the first-order Riemannian gradient ascent (RGA) method, namely \cref{alg-grover-ret}, proposed in \cite{lai2025grover}.
We first state two auxiliary lemmas from \cite{lai2025grover}, which are also important for proving the global convergence of our second-order algorithm; their proofs can be found therein.

\begin{lemma}[First and second order bounds {\cite[Lemma 4.2]{lai2025grover}}]
\label{lem:two-bounds}
For any \(U\in\mathrm{U}(N)\) and any \(\eta\in\mathcal W U\), we have
\begin{equation*}
\begin{aligned}
        \|\mathrm R_U^{(5)}(\eta)-U\|_F
        &\le \|\eta\|_F,\\
        \|\mathrm R_U^{(5)}(\eta)-U-\eta\|_F
        &\le
        \frac{1}{4\sqrt{2q_0(1-q_0)}}\|\eta\|_F^2.
\end{aligned}
\end{equation*}
\end{lemma}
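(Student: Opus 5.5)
Since $\mathrm R^{(5)}_U(\eta)=V U$ with $U$ unitary and $\|\eta\|_F=\|\widetilde\eta\|_F$, both inequalities reduce to statements about the added gates $V$ alone. They become $\|V-I\|_F\le\|\widetilde\eta\|_F$ and $\|V-I-\widetilde\eta\|_F\le \|\widetilde\eta\|_F^2/(4s_0)$, where $s_0:=\sqrt{2q_0(1-q_0)}$.

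\textbf{Step 1: collapse the $H$-factors.} Because $a_1-\pi-a_2=0$, the $H$-exponentials cancel once we conjugate them through. Set $\phi_a:=e^{iaH}\psi_0e^{-iaH}$, a rank-one projector. Then
\begin{equation*}
V=e^{i a_1H}e^{ib_1\psi_0}e^{-ia_1H}\,e^{ia_2H}e^{ib_2\psi_0}e^{-ia_2H}=e^{-i\frac R2\phi_{a_1}}\,e^{i\frac R2\phi_{a_2}}.
\end{equation*}
We treat this as a one-parameter family $V(s):=W_1(s)W_2(s)$ with $W_1(s)=e^{-is\phi_1/2}$ and $W_2(s)=e^{is\phi_2/2}$, writing $\phi_j:=\phi_{a_j}$ and $\Delta:=\phi_2-\phi_1$. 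Then $V(0)=I$ and $V=V(R)$.

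\textbf{Step 2: first derivative.} Since $W_1$ commutes with $\phi_1$ and $W_2$ with $\phi_2$,
\begin{equation*}
V'(s)=\tfrac{i}{2}\,W_1(s)\,\Delta\,W_2(s).
\end{equation*}
Hence $\|V'(s)\|_F=\tfrac12\|\Delta\|_F$ for every $s$, and $V'(0)=\tfrac i2\Delta$. The retraction property already asserted in \cref{example-5factor} says $\dot\gamma(0)=\eta$. By linearity in $R$ this gives $R\,V'(0)=\widetilde\eta=xX_0+yY_0$. Using \cref{lem-X0YO} (the vectors $X_0,Y_0$ are orthogonal with equal norms $s_0$), we get $\|\widetilde\eta\|_F=Rs_0$, and therefore $\tfrac12\|\Delta\|_F=s_0$. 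If I want to avoid relying on the example, I would verify $\tfrac i2(\phi_{a_2}-\phi_{a_1})=\cos A\,X_0+\sin A\,Y_0$ directly from $e^{-i\pi H}=I-2H$. Integrating then gives the first bound:
\begin{equation*}
\|V(R)-I\|_F\le \int_0^R\|V'\|_F=Rs_0=\|\widetilde\eta\|_F.
\end{equation*}

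\textbf{Step 3: second derivative and Taylor remainder.} Differentiating once more gives
\begin{equation*}
V''(s)=\tfrac14 W_1(s)\bigl(\phi_1\Delta-\Delta\phi_2\bigr)W_2(s)=\tfrac14W_1(s)\bigl(2\phi_1\phi_2-\phi_1-\phi_2\bigr)W_2(s).
\end{equation*}
The key computation is an exact Frobenius norm. With $c:=\langle\phi_1,\phi_2\rangle=|\langle\varphi_1|\varphi_2\rangle|^2$ for the rank-one projectors, expanding the inner products ($\langle \phi_1\phi_2,\phi_1\rangle=\langle\phi_1\phi_2,\phi_2\rangle=c$) yields
\begin{equation*}
\|2\phi_1\phi_2-\phi_1-\phi_2\|_F^2=2-2c=\|\Delta\|_F^2.
\end{equation*}
So $\|V''(s)\|_F\le\tfrac14\|\Delta\|_F=s_0/2$ uniformly in $s$. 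Taylor's formula with integral remainder then gives
\begin{equation*}
\|V(R)-I-RV'(0)\|_F\le \tfrac{R^2}{2}\cdot\tfrac{s_0}{2}=\tfrac{R^2 s_0}{4}=\frac{\|\widetilde\eta\|_F^2}{4s_0},
\end{equation*}
which is exactly the second bound.

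\textbf{Main obstacle.} The delicate point is the constant in Step 3. A naive termwise expansion of $e^{i\theta\phi}=I+(e^{i\theta}-1)\phi$ only gives $\mathcal O(R^2)$ with no factor of $s_0$, which is far too weak when $q_0$ is small. The proof has to exploit that the remainder is governed by $\Delta$, which is small. The identity $\|2\phi_1\phi_2-\phi_1-\phi_2\|_F=\|\Delta\|_F$ is what makes this work, so I would check it carefully, including the case $M>1$; it only uses that $\phi_1,\phi_2$ are rank-one projectors.
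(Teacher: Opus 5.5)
Your proof is correct and gives exactly the constants in the statement. The paper itself does not prove this lemma; it imports it verbatim from \cite[Lemma 4.2]{lai2025grover}. Your argument is therefore a self-contained derivation of a result the text only cites, and I cannot compare it against a proof in the paper.

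The key steps check out:
\begin{itemize}
\item Conjugating the $H$-exponentials through, using $a_1-a_2=\pi$, gives $V=e^{-i\frac R2\phi_{a_1}}e^{i\frac R2\phi_{a_2}}$.
\item The resulting curve has constant speed $\tfrac12\|\Delta\|_F$.
\item The direct verification you mention does hold. Write $u=H|\psi_0\rangle$ and $v=(I-H)|\psi_0\rangle$. Then $\phi_a=uu^\dagger+vv^\dagger+e^{ia}uv^\dagger+e^{-ia}vu^\dagger$, $X_0=uv^\dagger-vu^\dagger$, and $Y_0=i(uv^\dagger+vu^\dagger)$. Hence $\tfrac i2(\phi_{a_2}-\phi_{a_1})=e^{iA}uv^\dagger-e^{-iA}vu^\dagger=\cos A\,X_0+\sin A\,Y_0$ and $\|\Delta\|_F^2=8q_0(1-q_0)$.
\end{itemize}
This direct computation means you need not rely on the unproved assertion in \cref{example-5factor}. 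It also gives $\tfrac12\|\Delta\|_F=\sqrt{2q_0(1-q_0)}$ without dividing by $R$.

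For Step~3 there is a shorter route to your norm identity: $2\phi_1\phi_2-\phi_1-\phi_2=-(I-2\phi_1)\Delta$. Since $I-2\phi_1$ is a unitary reflection, $\|V''(s)\|_F=\tfrac14\|\Delta\|_F$ follows immediately, and the identity holds for any pair of orthogonal projectors.

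Your concern about $M>1$ is moot. The $\phi_{a_j}$ are unitary conjugates of the rank-one $\psi_0$, whatever the rank of $H$.
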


\begin{lemma}[Riemannian Lipschitz constant {\cite[Proposition 4.3]{lai2025grover}}]
Define the constant
\begin{equation}
\label{eq:Lrie-rmn-final}
\begin{aligned}
        L_{\mathrm{Rie}}
        &:=2+\frac{1}{\sqrt{2q_0(1-q_0)}}\\
        &=2+\frac{N}{\sqrt{2M(N-M)}}\\
        &\in\mathcal O(\sqrt{N/M}).
\end{aligned}
\end{equation}
Then, for any \(U\in\mathrm{U}(N)\), the pullback \(f\circ \mathrm R_U^{(5)}:\mathcal W U\to\mathbb R\) satisfies: $\forall \eta \in \mathcal{W} U$,
\begin{equation}
\label{eq:pullback-ascent-revised}
\begin{aligned}
 f\bigl(\mathrm R_U^{(5)}(\eta)\bigr)
\ge
 f(U)+\langle \operatorname{grad} f(U),\eta\rangle
      -\frac{L_{\mathrm{Rie}}}{2}\|\eta\|_F^2.
\end{aligned}
\end{equation}
\end{lemma}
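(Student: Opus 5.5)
This inequality is cited from \cite[Proposition 4.3]{lai2025grover}; my plan for reproving it is a Taylor-type argument for the pullback, built on the two retraction bounds in \cref{lem:two-bounds}. Fix \(U\) and \(\eta\in\mathcal W U\), and set \(U_+:=\mathrm R_U^{(5)}(\eta)\) and \(\Delta:=U_+-U\). Since \(f(W)=\operatorname{Tr}(HW\psi_0W^\dagger)\) is a real quadratic form in \(W\), expanding \(f(U+\Delta)\) exactly gives
\[
f(U_+)=f(U)+\operatorname{Re}\langle \nabla f(U),\Delta\rangle+\operatorname{Tr}(H\Delta\psi_0\Delta^\dagger),
\]
with \(\nabla f(U)=2HU\psi_0\). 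The last term equals \(\|H\Delta|\psi_0\rangle\|^2\ge 0\), so for a \emph{lower} bound it can simply be dropped. I expect this sign to account for the asymmetry in the final constant.

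Next I would split \(\Delta=\eta+(\Delta-\eta)\) in the linear term. Because \(\eta\in T_U\), the normal component of \(\nabla f(U)\) pairs to zero with it, so \(\operatorname{Re}\langle\nabla f(U),\eta\rangle=\langle\operatorname{grad} f(U),\eta\rangle\). For the remainder, Cauchy--Schwarz gives
\[
|\langle\nabla f(U),\Delta-\eta\rangle|\le\|\nabla f(U)\|\,\|\Delta-\eta\|_F .
\]
Here I would use the operator norm of \(\nabla f(U)\) together with the trace-norm of \(\Delta-\eta\) if that is tighter. The crude bound is \(\|2HU\psi_0\|_F\le 2\), and the second estimate of \cref{lem:two-bounds} gives \(\|\Delta-\eta\|_F\le \|\eta\|_F^2/(4\sqrt{2q_0(1-q_0)})\). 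Together these yield a loss of
\[
\frac{1}{2\sqrt{2q_0(1-q_0)}}\|\eta\|_F^2=\tfrac12\bigl(L_{\mathrm{Rie}}-2\bigr)\|\eta\|_F^2 .
\]

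If I keep the quadratic term with its true sign, I do not need the remaining \(2\|\eta\|_F^2/2\) slack at all. That slack only becomes necessary if one bounds \(|\operatorname{Tr}(H\Delta\psi_0\Delta^\dagger)|\le\|\Delta\|_F^2\le\|\eta\|_F^2\) using the first estimate of \cref{lem:two-bounds}, which would be needed for a two-sided version of the inequality. Either route produces a constant at most \(L_{\mathrm{Rie}}/2\).

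The main obstacle is the middle step: making sure \(\|\nabla f(U)\|\) is paired with \(\|\Delta-\eta\|\) in compatible norms, so that no dimension-dependent factor appears. I would first try the Frobenius bound \(\|2HU\psi_0\|_F=2\sqrt{q}\le2\). This is dimension-free because \(\psi_0\) has rank one, so it should suffice.
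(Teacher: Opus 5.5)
Your proof is correct, and it takes a somewhat different route from the one that produces the stated constant. The paper does not reprove this lemma; it quotes it from \cite{lai2025grover}. Still, $L_{\mathrm{Rie}}=2+1/\sqrt{2q_0(1-q_0)}$ is exactly what the standard two-sided pullback argument gives. The Euclidean gradient $W\mapsto 2HW\psi_0$ is $2$-Lipschitz and satisfies $\|2HU\psi_0\|_F\le 2$. So with $\Delta=\mathrm R_U^{(5)}(\eta)-U$ and $E=\Delta-\eta$ one gets $\bigl|f(\mathrm R_U^{(5)}(\eta))-f(U)-\langle\operatorname{grad}f(U),\eta\rangle\bigr|\le\|\Delta\|_F^2+2\|E\|_F\le\bigl(1+\tfrac{1}{2\sqrt{2q_0(1-q_0)}}\bigr)\|\eta\|_F^2=\tfrac{L_{\mathrm{Rie}}}{2}\|\eta\|_F^2$. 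This uses both estimates of \cref{lem:two-bounds}, and the same bookkeeping, doubled, appears in the proof of \cref{lem:gradient-map-residual-revised}.

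You instead use two facts: $f$ is exactly quadratic in $W$, so there is no Taylor remainder to control, and the second-order term $\|H\Delta|\psi_0\rangle\|^2$ is nonnegative because $H\succeq 0$. Dropping that term, you need only the second estimate of \cref{lem:two-bounds}, and you get the strictly smaller loss $\tfrac{1}{2\sqrt{2q_0(1-q_0)}}\|\eta\|_F^2$.

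The standard route gives a two-sided bound; your ``route~2'' recovers it with exactly $L_{\mathrm{Rie}}/2$. Your route gives a sharper one-sided constant. The one-sided bound is all the paper actually uses, namely in deriving the effective-step bound \cref{eq:effective-step-revised}.

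Two minor remarks. First, under $\langle A,B\rangle=\operatorname{Tr}(A^\dagger B)$ the normal component of $\nabla f(U)$ pairs with $\eta$ to a purely imaginary number, so it is the real part that vanishes; your $\operatorname{Re}$ handles this correctly. Second, the operator-norm/trace-norm pairing you float is unnecessary, and \cref{lem:two-bounds} supplies no trace-norm estimate anyway. The Frobenius pairing with $\|2HU\psi_0\|_F=2\sqrt{f(U)}\le 2$ already closes the argument with no dimension factor.
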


The Riemannian Lipschitz constant \(L_{\mathrm{Rie}}\) is the manifold analogue of the Lipschitz constant used in Euclidean complexity analysis, and the lower bound in \cref{eq:pullback-ascent-revised} is the basic estimate needed in the convergence proof.

\begin{lemma}[Lower bound of the effective step]
Assume that \cref{alg-newton-armijo} is run with \(0<\delta\le1\), \(0<c<1\), and \(0<\rho<1\). At every nonstationary iterate, the Armijo backtracking loop terminates after finitely many reductions. Moreover, if \(t_k\) is the accepted step and \(\gamma_k\) is the corresponding scaling factor, then accepted effective step $t_k\gamma_k$ satisfies
\begin{equation}\label{eq:effective-step-revised}
t_k\gamma_k\ge \sigma:=\min\left\{1,\frac{2\rho(1-c)}{L_{\mathrm{Rie}}}\right\}.
\end{equation}
\end{lemma}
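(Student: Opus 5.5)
The plan is the standard Armijo backtracking argument, with the pullback lower bound \cref{eq:pullback-ascent-revised} playing the role of the descent lemma.

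\textbf{Setup.} Fix a nonstationary iterate $U_k$, so $q_k\in(0,1)$ and $G_k=2q_k(1-q_k)>0$. For a trial step $t>0$, set $\eta:=t\gamma_k\,\mathrm g_kU_k\in\mathcal W U_k$; this lies in $\mathcal W U_k$ by \cref{thm-grad-in-W}. Then
\[
\langle\operatorname{grad} f(U_k),\eta\rangle=t\gamma_kG_k,
\qquad
\|\eta\|_F^2=t^2\gamma_k^2G_k .
\]
Applying \cref{eq:pullback-ascent-revised} gives
\[
f\bigl(\mathrm R^{(5)}_{U_k}(\eta)\bigr)\ge q_k+t\gamma_kG_k\Bigl(1-\tfrac{L_{\mathrm{Rie}}}{2}t\gamma_k\Bigr).
\]
The Armijo test in \cref{alg-newton-armijo} is $q_{\mathrm{trial}}\ge q_k+c\,t\gamma_kG_k$. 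It is therefore guaranteed to pass whenever $1-\tfrac{L_{\mathrm{Rie}}}{2}t\gamma_k\ge c$, that is, whenever
\[
t\gamma_k\le \frac{2(1-c)}{L_{\mathrm{Rie}}}.
\]
This sufficient condition holds for all small enough $t$. Since $t=\rho^m\to0$, the loop terminates after finitely many reductions.

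\textbf{Lower bound on the effective step.} There are two cases.
\begin{enumerate}
\item If the full step $t_k=1$ is accepted, then $t_k\gamma_k=\gamma_k$. Because $\max\{\delta,2q_k-1\}\le1$ (using $\delta\le1$ and $q_k\le1$), we get $\gamma_k\ge1\ge\sigma$.
\item If backtracking occurred, then the previous trial $t_k/\rho$ failed. By the sufficient condition above, a failed trial must have had
\[
(t_k/\rho)\gamma_k>\frac{2(1-c)}{L_{\mathrm{Rie}}}.
\]
Hence $t_k\gamma_k>2\rho(1-c)/L_{\mathrm{Rie}}\ge\sigma$.
\end{enumerate}
Combining the two cases gives \cref{eq:effective-step-revised}.

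\textbf{Main obstacle.} The delicate point is making sure that \cref{eq:pullback-ascent-revised} applies to the direction the algorithm actually uses. Two things need checking.
\begin{itemize}
\item The trial state $V(t\gamma_k;x_k,y_k)|\psi_k\rangle$ must coincide with $\mathrm R^{(5)}_{U_k}(t\gamma_k\mathrm g_kU_k)|\psi_0\rangle$. This holds because \cref{eq-grover-form} is parametrized by the coefficients of $t\gamma_k\mathrm g_k$ in the basis $\{X_0,Y_0\}$.
\item The inner product identity $\langle\operatorname{grad}f(U_k),\eta\rangle=t\gamma_kG_k$ must be real and use the right normalization. It does, because the Frobenius inner product of the tangent vectors $\mathrm g_kU_k$ reduces, via unitarity, to $\langle \mathrm g_k,\mathrm g_k\rangle=\|\mathrm g_k\|_F^2=G_k$ by \cref{lem-X0YO}.
\end{itemize}
Once these are in place, the rest is the routine Armijo argument above.
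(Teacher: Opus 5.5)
Your proof is correct and follows the same argument as the paper: the pullback bound gives the sufficient Armijo condition $t\gamma_k\le 2(1-c)/L_{\mathrm{Rie}}$, which yields finite termination, and a rejected previous trial $t_k/\rho$ gives the lower bound. The only difference is that you split cases on $m_k=0$ versus $m_k\ge1$ rather than on $B_k\ge1$ versus $B_k<1$, which organizes the same reasoning a little more cleanly.
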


\begin{proof}
Let \(G_k=\|\mathrm g_k\|_F^2\). Since the iterate is nonstationary, \(G_k>0\). Applying \cref{eq:pullback-ascent-revised} with \(\eta=t\gamma_k\mathrm g_kU_k\) gives
\[
f\bigl(\mathrm R_{U_k}^{(5)}(t\gamma_k\mathrm g_kU_k)\bigr)
\ge
q_k+t\gamma_kG_k-\frac{L_{\mathrm{Rie}}}{2}t^2\gamma_k^2G_k.
\]
Thus the Armijo condition is guaranteed if the right-hand side above is at least \(q_k+c t\gamma_kG_k\), namely,
\[
        q_k+t\gamma_kG_k-\frac{L_{\mathrm{Rie}}}{2}t^2\gamma_k^2G_k
        \ge
        q_k+c t\gamma_kG_k.
\]
After subtracting \(q_k\) and dividing by \(t\gamma_kG_k>0\), this sufficient condition becomes $1-\frac{L_{\mathrm{Rie}}}{2}t\gamma_k\ge c,$ or equivalently
\[
        t\le B_k:=\frac{2(1-c)}{L_{\mathrm{Rie}}\gamma_k}.
\]
Therefore every trial step \(t\le B_k\) satisfies the Armijo condition. Since the backtracking trial steps are \(1,\rho,\rho^2,\ldots\) and \(\rho^m\to0\), there exists a finite \(m\) such that \(\rho^m\le B_k\). Hence the Armijo backtracking loop terminates after finitely many reductions.

It remains to prove the lower bound $ \sigma$ given in \cref{eq:effective-step-revised} for the accepted effective step \(t_k\gamma_k\). We consider two cases.
\noindent(Case 1)
First suppose that \(B_k\ge1\). Then $t_k=1$ is accepted. Moreover, since \(\gamma_k:=1/\max\{\delta,2q_k-1\}\), \(0<\delta\le1\), and \(2q_k-1\le1\), we have \(\max\{\delta,2q_k-1\}\in(0,1]\). Hence \(\gamma_k\ge1\). Therefore
\begin{equation}\label{eq-165684}
t_k\gamma_k=\gamma_k\ge1.
\end{equation}
\noindent(Case 2)
Now suppose that \(B_k<1\). Let \(t_k=\rho^m\) be the first accepted trial step. If \(m=0\), then \(t_k=1>\rho B_k\), because \(B_k<1\) and \(0<\rho<1\). If \(m\ge1\), then the previous trial step \(\rho^{m-1}\) was rejected. Since every step \(t\le B_k\) is guaranteed to satisfy Armijo, the rejected step cannot satisfy \(\rho^{m-1}\le B_k\). Hence $\rho^{m-1}>B_k.$
Multiplying this inequality by \(\rho\), we obtain \(t_k=\rho^m>\rho B_k\). Thus, whether \(m=0\) or \(m\ge1\), we have \(t_k>\rho B_k\). Since \(\gamma_k>0\), it follows that
\begin{equation}\label{eq-165685}
t_k\gamma_k>\gamma_k\rho B_k=\frac{2\rho(1-c)}{L_{\mathrm{Rie}}}.
\end{equation}
Combining the two cases, \cref{eq-165684,eq-165685} gives the lower bound in \cref{eq:effective-step-revised}.
\end{proof}

\begin{proposition}[Global monotone convergence of \cref{alg-newton-armijo}]
\label{prop:global-monotone-revised}
Let $\{q_k\}$ be generated by \cref{alg-newton-armijo} with the $5$-factor retraction, $0<\delta\le1$, $0<c\le1/2$, and $0<\rho<1$.  Then
\begin{equation}
\label{eq:global-logistic-revised}
        q_{k+1}
        \ge
        q_k+2c\sigma q_k(1-q_k),
        \qquad k=0,1,2,\ldots,
\end{equation}
where $\sigma$ is defined in \cref{eq:effective-step-revised}.  Consequently, $\{q_k\}$ is monotonically nondecreasing and converges to $1$.
\end{proposition}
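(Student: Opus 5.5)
We combine the Armijo acceptance condition with the effective-step lower bound \cref{eq:effective-step-revised}. Let $t_k$ be the accepted step. By construction of \cref{alg-newton-armijo},
\[
q_{k+1}\ge q_k+c\,t_k\gamma_k G_k ,
\]
where $G_k=\|\mathrm g_k\|_F^2=2q_k(1-q_k)$ by \cref{thm-grad-in-W}. Inserting $t_k\gamma_k\ge\sigma$ from the preceding lemma gives $q_{k+1}\ge q_k+2c\sigma q_k(1-q_k)$, which is \cref{eq:global-logistic-revised}.

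This argument needs the iterate to be nonstationary, so that the backtracking loop terminates and the lemma applies. We check this by induction. Initially $q_0=M/N\in(0,1)$. If $q_k\in(0,1)$, then $G_k>0$, so the lemma applies and the inequality just derived holds. It remains to confirm $q_{k+1}\in(0,1)$.
\begin{itemize}
\item We have $q_{k+1}\ge q_k>0$.
\item If $q_{k+1}=1$, the while loop stops, since $e_{k+1}=0\le\varepsilon$. The bound then holds trivially for any later index one chooses to consider, because $q$ stays at $1$.
\end{itemize}
Also, $q_k\le1$ always holds because $q_k$ is a probability. Hence the sequence is well defined and nondecreasing, since $2c\sigma q_k(1-q_k)\ge0$.

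For convergence, observe that $(q_k)$ is monotone and bounded by $1$, so it has a limit $q^\ast\in[q_0,1]$. Letting $k\to\infty$ in \cref{eq:global-logistic-revised}, and using $c\sigma>0$, gives $q^\ast(1-q^\ast)\le0$. Since $q^\ast\ge q_0>0$, this forces $q^\ast=1$.

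A quantitative form is also available. The map $\phi(q)=q+2c\sigma q(1-q)$ is increasing on $[0,1]$ whenever $2c\sigma\le1$. This holds here because $c\le1/2$ and $\sigma\le1$, and it also shows $\phi(q)\le1$. One can therefore compare $q_k$ with the iterates of $\phi$, although the limit argument above already suffices.

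The only delicate point is the induction hypothesis that every iterate is nonstationary, which is needed to invoke the lemma. The induction step excludes $q_k=0$ by monotonicity and excludes $q_k=1$ by the stopping rule, or trivially if the iteration were allowed to continue. The rest is a direct substitution.
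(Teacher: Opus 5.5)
Your proof is correct and follows the paper's argument: you combine the accepted Armijo inequality with the effective-step bound $t_k\gamma_k\ge\sigma$ and the identity $G_k=2q_k(1-q_k)$, then use monotonicity and boundedness to obtain a limit $q^\ast$ and conclude $q^\ast=1$. The differences are cosmetic: you pass to the limit in the recursion to get $q^\ast(1-q^\ast)\le 0$, whereas the paper argues by contradiction using a uniform positive lower bound on the increments; and you state explicitly the nonstationarity induction (handling $q_k=1$ through the stopping rule), which the paper leaves implicit.
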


\begin{proof}
The accepted Armijo step gives $q_{k+1}\ge q_k+c\,t_k\gamma_kG_k .$
Using \cref{eq:effective-step-revised} and the identity $G_k=2q_k(1-q_k)$, we obtain $q_{k+1} \ge q_k+c\sigma G_k = q_k+2c\sigma q_k(1-q_k),$ which proves \cref{eq:global-logistic-revised}. In particular, since \(c>0\), \(\sigma>0\), and \(q_k(1-q_k)\ge0\), the sequence \(\{q_k\}\) is monotonically nondecreasing.

Since \(q_k=f(U_k)\) is a success probability, we always have \(0\le q_k\le1\). Therefore \(\{q_k\}\) is monotone and bounded above, and hence it has a limit. Denote this limit by
\[
        \bar q:=\lim_{k\to\infty}q_k\in[q_0,1].
\]
We claim that \(\bar q=1\). Suppose, to the contrary, that \(\bar q<1\). Since \(q_0=M/N>0\) and \(\{q_k\}\) is nondecreasing, we have \(q_k\ge q_0>0\) for all \(k\). By \(q_k\to\bar q\) and \(\bar q<1\), there exists \(K\) such that \(|q_k-\bar q|<\varepsilon_0:=(1-\bar q)/2\) for all \(k\ge K\). Thus, for all \(k\ge K\), we have
\[
        q_k<\bar q+\frac{1-\bar q}{2}=\frac{1+\bar q}{2},
        \quad
\text{ and }
        \quad
        1-q_k>\frac{1-\bar q}{2}>0.
\]
Then \cref{eq:global-logistic-revised} implies that, for all \(k\ge K\),
\[
        q_{k+1}-q_k
        \ge
        2c\sigma q_k(1-q_k)
        >
        c\sigma q_0(1-\bar q)>0.
\]
This contradicts the convergence of \(\{q_k\}\), because \(q_{k+1}-q_k\to\bar q-\bar q=0\). Hence \(\bar q=1\).
\end{proof}

\subsection{Newton region and local quadratic convergence}

In this subsection, we will define the Newton region that the global phase must reach.
Its radius is chosen sufficiently small so that the damping in the modified Newton direction is inactive and the full Newton step satisfies the Armijo condition.
Consequently, once \cref{alg-newton-armijo} enters this region, it reduces to the standard Newton method.
The following two lemmas are specific to the projector $H=H^2$.  They are the key identities behind the quadratic convergence result.

\begin{lemma}[Second-order residual of the gradient map]
\label{lem:gradient-map-residual-revised}
Define the gradient map
\begin{equation*}
    \mathcal G(U):=[H,U\psi_0U^\dagger].
\end{equation*}
For $\eta=\Omega U$ with $\Omega\in\mathcal W$, its Euclidean directional derivative is
\begin{equation}
\label{eq:G-derivative-revised}
        D\mathcal G(U)[\eta]=[H,[\Omega,\psi_U]],
        \qquad \psi_U:=U\psi_0U^\dagger.
\end{equation}
Moreover, it satisfies
\begin{equation}
\label{eq:G-residual-revised}
        \|
        \mathcal G\bigl(\mathrm R_U^{(5)}(\eta)\bigr)
        -\mathcal G(U)-D\mathcal G(U)[\eta]
        \|_F
        \le
        L_{\mathrm{Rie}}\|\eta\|_F^2.
\end{equation}
\end{lemma}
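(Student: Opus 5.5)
The proof has two parts: the derivative formula \eqref{eq:G-derivative-revised}, which is a direct computation, and the residual bound \eqref{eq:G-residual-revised}, which I would get from a Taylor-type estimate along the retraction curve.

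For the derivative, differentiate $\mathcal G(U+s\eta)=[H,(U+s\eta)\psi_0(U+s\eta)^\dagger]$ at $s=0$. This gives $[H,\eta\psi_0U^\dagger+U\psi_0\eta^\dagger]$. With $\eta=\Omega U$ and $\Omega^\dagger=-\Omega$, the inner term becomes $\Omega\psi_U-\psi_U\Omega=[\Omega,\psi_U]$, which is \eqref{eq:G-derivative-revised}. The same computation appears in \cref{app-geo}.

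For the residual, the key observation is that $\mathcal G$ is a \emph{quadratic} polynomial map in $U$. Write $\mathcal G(U)=[H,U\psi_0U^\dagger]$ and set $V:=\mathrm R_U^{(5)}(\eta)$ and $\Delta:=V-U$. Expanding exactly gives
\[
\mathcal G(V)-\mathcal G(U)=[H,\Delta\psi_0U^\dagger+U\psi_0\Delta^\dagger]+[H,\Delta\psi_0\Delta^\dagger].
\]
Subtracting $D\mathcal G(U)[\eta]=[H,\eta\psi_0U^\dagger+U\psi_0\eta^\dagger]$ leaves
\[
[H,(\Delta-\eta)\psi_0U^\dagger+U\psi_0(\Delta-\eta)^\dagger]+[H,\Delta\psi_0\Delta^\dagger].
\]
I would bound this using $\|[H,A]\|_F\le 2\|H\|\,\|A\|_F=2\|A\|_F$ (since $\|H\|=1$) together with $\|\psi_0\|=\|U\|=1$. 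This gives
\[
\|\text{residual}\|_F\le 4\|\Delta-\eta\|_F+2\|\Delta\|_F^2 .
\]
Now apply \cref{lem:two-bounds}. It gives $\|\Delta\|_F\le\|\eta\|_F$ and $\|\Delta-\eta\|_F\le \|\eta\|_F^2/(4\sqrt{2q_0(1-q_0)})$. Substituting, the residual is at most
\[
\Bigl(\tfrac{1}{\sqrt{2q_0(1-q_0)}}+2\Bigr)\|\eta\|_F^2=L_{\mathrm{Rie}}\|\eta\|_F^2,
\]
which matches the constant in \eqref{eq:Lrie-rmn-final} exactly.

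The main obstacle is getting the constants to fit exactly, since a crude bound would overshoot. Two factors need care. The first is the factor $2$ in $\|[H,A]\|_F\le2\|A\|_F$. It can be sharpened: $HA-AH=HA(I-H)-(I-H)AH$ has two orthogonal blocks, so in fact $\|[H,A]\|_F\le\|A\|_F$. The second is the cross term $\Delta\psi_0U^\dagger+U\psi_0\Delta^\dagger$, whose norm is at most $2\|\Delta-\eta\|_F$.

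If the bound $4\|\Delta-\eta\|_F$ appears too loose, the fallback is the sharpened commutator estimate. It gives
\[
\|\text{residual}\|_F\le 2\|\Delta-\eta\|_F+\|\Delta\|_F^2\le\Bigl(\tfrac{1}{2\sqrt{2q_0(1-q_0)}}+1\Bigr)\|\eta\|_F^2,
\]
which is comfortably below $L_{\mathrm{Rie}}\|\eta\|_F^2$. The plan is therefore to use the simple factor-$2$ bound if it closes with the exact constant, and otherwise switch to the sharpened estimate.
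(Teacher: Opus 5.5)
Your proposal is correct and follows the paper's proof essentially step for step: the same exact expansion of the quadratic map into $[H,E\psi_0U^\dagger+U\psi_0E^\dagger]+[H,\Delta\psi_0\Delta^\dagger]$ with $E=\Delta-\eta$, the same bound $\|[H,A]\|_F\le 2\|A\|_F$, and the same use of \cref{lem:two-bounds}, which gives exactly $\bigl(2+1/\sqrt{2q_0(1-q_0)}\bigr)\|\eta\|_F^2=L_{\mathrm{Rie}}\|\eta\|_F^2$. Your sharper block-orthogonality estimate $\|[H,A]\|_F\le\|A\|_F$ is also valid and would halve the constant, but the factor-$2$ bound already closes, so the fallback is not needed.
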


\begin{proof}
We first compute the Euclidean directional derivative. Since \(\eta=\Omega U\) and \(\Omega^\dagger=-\Omega\), we have
\begin{equation*}
\begin{aligned}
&\frac{d}{dt}\bigg|_{t=0}
        (U+t\eta)\psi_0(U+t\eta)^\dagger\\
&\quad
        =\eta\psi_0U^\dagger+U\psi_0\eta^\dagger\\
&\quad
        =\Omega\psi_U-\psi_U\Omega\\
&\quad
        =[\Omega,\psi_U].
\end{aligned}
\end{equation*}
Taking the commutator with \(H\) gives \cref{eq:G-derivative-revised}. It remains to prove the second-order residual bound in \cref{eq:G-residual-revised}. It remains to prove the second-order residual bound. Set $W:=\mathrm R_U^{(5)}(\eta)$, $\Delta:=W-U$, and $E:=\Delta-\eta=W-U-\eta$. The two estimates for the $5$-factor retraction in \cref{lem:two-bounds} give
\begin{equation}
\label{eq:Delta-E-revised}
        \|\Delta\|_F\le\|\eta\|_F,
        \qquad
        \|E\|_F\le
        \frac{1}{4\sqrt{2q_0(1-q_0)}}\|\eta\|_F^2.
\end{equation}
Since $W=U+\Delta$ and $\Delta=\eta+E$, subtracting the linear state variation $\eta\psi_0U^\dagger+U\psi_0\eta^\dagger$ from the identity
\begin{equation*}
\begin{aligned}
W\psi_0W^\dagger
&=(U+\Delta)\psi_0(U+\Delta)^\dagger\\
&=U\psi_0U^\dagger+\Delta\psi_0U^\dagger\\
&\quad
  +U\psi_0\Delta^\dagger+\Delta\psi_0\Delta^\dagger
\end{aligned}
\end{equation*}
yields
\begin{equation*}
\begin{aligned}
&W\psi_0W^\dagger-U\psi_0U^\dagger
        -\bigl(\eta\psi_0U^\dagger+U\psi_0\eta^\dagger\bigr)\\
&\quad
=E\psi_0U^\dagger+U\psi_0E^\dagger+\Delta\psi_0\Delta^\dagger .
\end{aligned}
\end{equation*}
We now apply the commutator with \(H\). Since \(D\mathcal G(U)[\eta]=[H,\eta\psi_0U^\dagger+U\psi_0\eta^\dagger]\),
\begin{align}
\label{eq:G-residual-expansion-revised}
&        \mathcal G(W)-\mathcal G(U)-D\mathcal G(U)[\eta] \notag\\
&\quad =
        [H,W\psi_0W^\dagger]
        -[H,U\psi_0U^\dagger]
        -[H,\eta\psi_0U^\dagger+U\psi_0\eta^\dagger] \notag\\
&\quad =
        [H,W\psi_0W^\dagger-U\psi_0U^\dagger
        -(\eta\psi_0U^\dagger+U\psi_0\eta^\dagger)] \notag\\
&\quad =
        [H,E\psi_0U^\dagger+U\psi_0E^\dagger+\Delta\psi_0\Delta^\dagger] \notag\\
&\quad =
        [H,E\psi_0U^\dagger+U\psi_0E^\dagger]
        +[H,\Delta\psi_0\Delta^\dagger].
\end{align}

Since \(H\) is an orthogonal projector, it spectral norm \(\|H\|_2\le1\). Hence, for any matrix \(A\), the mixed-norm inequalities give
\begin{equation*}
\begin{aligned}
\|[H,A]\|_F
&= \|HA-AH\|_F\\
&\le \|HA\|_F+\|AH\|_F\\
&\le \|H\|_2\|A\|_F+\|A\|_F\|H\|_2\\
&\le 2\|A\|_F.
\end{aligned}
\end{equation*}
Also, using $\|\psi_0\|_2=\|U\|_2=1$, we have $\|E\psi_0U^\dagger+U\psi_0E^\dagger\|_F\le2\|E\|_F$, while $\|\Delta\psi_0\Delta^\dagger\|_F\le\|\Delta\|_F^2$. Combining these estimates with \cref{eq:G-residual-expansion-revised} gives
\[
\begin{aligned}
\left\|
        \mathcal G(W)-\mathcal G(U)-D\mathcal G(U)[\eta]
\right\|_F
&\le
        4\|E\|_F+2\|\Delta\|_F^2
\\
&\le
        \bigl(\tfrac{1}{\sqrt{2q_0(1-q_0)}}+2\bigr)\|\eta\|_F^2
\\
&=
        L_{\mathrm{Rie}}\|\eta\|_F^2,
\end{aligned}
\]
where the last step uses \cref{eq:Delta-E-revised} and the definition of $L_{\mathrm{Rie}}$ in \cref{eq:Lrie-rmn-final}. This proves \cref{eq:G-residual-revised}.

\end{proof}

\begin{lemma}[Newton cancellation]
\label{lem:newton-cancellation-revised}
Let $\psi=|\psi\rangle\langle\psi|$ be a pure state,
$q=\operatorname{Tr}(H\psi)$, and $g=[H,\psi]$.  Then
\begin{equation}
\label{eq:commutator-eigen-revised}
        [H,[g,\psi]]=(1-2q)g.
\end{equation}
Consequently, if $q>1/2$ and $\gamma=(2q-1)^{-1}$, then, for
$\eta=\gamma gU$ with $\psi=U\psi_0U^\dagger$,
\begin{equation*}
        \mathcal G(U)+D\mathcal G(U)[\eta]=0.
\end{equation*}
\end{lemma}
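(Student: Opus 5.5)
We prove the identity \eqref{eq:commutator-eigen-revised} by a direct computation with $H^2=H$, $\psi^2=\psi$ and $\psi H\psi=q\psi$. We then obtain the cancellation by substituting into \eqref{eq:G-derivative-revised}.

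\emph{First step.} Expand $[g,\psi]$ with $g=H\psi-\psi H$:
\[
[g,\psi]=H\psi-\psi H\psi-\psi H\psi+\psi H=H\psi+\psi H-2q\psi .
\]
Then take the commutator with $H$, using $H^2=H$:
\[
H(H\psi+\psi H-2q\psi)=H\psi+H\psi H-2qH\psi,
\qquad
(H\psi+\psi H-2q\psi)H=H\psi H+\psi H-2q\psi H .
\]
Subtracting, the $H\psi H$ terms cancel, and we are left with
\[
[H,[g,\psi]]=(1-2q)(H\psi-\psi H)=(1-2q)g,
\]
which is \eqref{eq:commutator-eigen-revised}. As a cross-check, \cref{thm-hessian} already gives $L(g)=(1-2q)g$. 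Its proof showed, via the Jacobi identity, that $[H,[g,\psi]]=[[H,g],\psi]$, so $L(g)=[H,[g,\psi]]$, and the identity also follows from there.

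\emph{Second step.} Let $\eta=\gamma gU$, so that $\Omega=\gamma g$. To apply \eqref{eq:G-derivative-revised} we need $\Omega\in\mathcal W$ whenever $U$ is an iterate, which \cref{thm-grad-in-W} provides. In fact the derivative formula $D\mathcal G(U)[\Omega U]=[H,[\Omega,\psi_U]]$ holds for any skew-Hermitian $\Omega$, since its derivation used only $\Omega^\dagger=-\Omega$. We also check that $g$ is skew-Hermitian: $g^\dagger=[\psi,H]=-g$. By linearity,
\[
D\mathcal G(U)[\eta]=\gamma[H,[g,\psi]]=\gamma(1-2q)g .
\]
With $\gamma=(2q-1)^{-1}$, which is well defined because $q>1/2$, this equals $-g$. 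Since $\mathcal G(U)=[H,\psi]=g$, we conclude $\mathcal G(U)+D\mathcal G(U)[\eta]=0$.

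The argument is pure algebra and contains no real obstacle. The step needing care is the bookkeeping of the terms $\psi H\psi=q\psi$ in the expansion of $[g,\psi]$, together with confirming that the derivative formula applies to the direction $\gamma g$.
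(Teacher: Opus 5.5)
Your proof is correct and essentially the paper's. The paper gets $[H,[g,\psi]]=(1-2q)g$ by citing the proof of \cref{thm-hessian}: the Jacobi identity reduces $L(g)$ to $[[H,g],\psi]$, and then $[H,g]$ is expanded. You instead expand $[g,\psi]=H\psi+\psi H-2q\psi$ directly and skip the Jacobi step, a harmless and slightly more self-contained variant.

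The cancellation step is identical to the paper's. Your remark that \eqref{eq:G-derivative-revised} uses only $\Omega^\dagger=-\Omega$, so it applies to $\gamma g$ for any $U$ and not just iterates with $g\in\mathcal W$, correctly tightens a point the paper passes over.
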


\begin{proof}
\cref{eq:commutator-eigen-revised} follows directly from the proof of \cref{thm-hessian}, where we showed that
$L(g)=[[H,g],\psi]=(1-2q)g$ for the linear map $L(\Omega):= \tfrac{1}{2}\bigl([H, [\Omega, \psi]] + [[H, \Omega], \psi]\bigr)$.
If \(q>1/2\) and \(\gamma=(2q-1)^{-1}\), then \cref{eq:G-derivative-revised,eq:commutator-eigen-revised} give
\[
        D\mathcal G(U)[\gamma gU]=[H,[\gamma g,\psi]]=\gamma(1-2q)g=-g=-\mathcal G(U).
\]
This proves the cancellation identity.
\end{proof}

The next result identifies the local Newton phase. We use the radius $r_{\mathrm N}:=1/(128L_{\mathrm{Rie}}^2)$ to define the Newton region \(e_k\le r_{\mathrm N}\). Once the iteration enters this region, it stays there, the damping is inactive, and the Armijo line search accepts the full Newton step.

\begin{proposition}[Newton region and local quadratic convergence of \cref{alg-newton-armijo}]
\label{thm:local-quadratic-revised}
Let \(\{q_k\}\) be generated by \cref{alg-newton-armijo} with the \(5\)-factor retraction, where \(0<\delta\le1/2\), \(0<c\le1/8\), and \(0<\rho<1\). Define the constants
\begin{equation}
\label{eq:CN-rN-rmn-final}
        C_{\mathrm N}:=64L_{\mathrm{Rie}}^2,
        \qquad
        r_{\mathrm N}:=\frac{1}{128L_{\mathrm{Rie}}^2}=\frac{1}{2C_{\mathrm N}}.
\end{equation}
If an iterate enters this region, namely
\[
        e_k:=1-q_k\le r_{\mathrm N},
\]
then either \(e_k=0\), in which case a maximizer has already been reached, or the Armijo line search accepts the full step \(t_k=1\) and
\begin{equation}
\label{eq:quadratic-recurrence-revised}
        e_{k+1}\le C_{\mathrm N}e_k^2\le\frac12 e_k.
\end{equation}
Consequently, \(e_{k+\ell}\le r_{\mathrm N}\) for every \(\ell\ge0\), and the RMN iteration is quadratically convergent in the success probability error.
\end{proposition}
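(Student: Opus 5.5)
The plan is to show three things in the region $e_k\le r_{\mathrm N}$: the damping is inactive, the full step $t=1$ produces $e_{k+1}\le C_{\mathrm N}e_k^2$, and this full step passes the Armijo test. The statement then follows by induction. The case $e_k=0$ is trivial, since then $\mathrm g_k=0$ and $q_k=1$, so assume $0<e_k\le r_{\mathrm N}$.

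\emph{Step 1 (damping inactive, size of the step).} Since $L_{\mathrm{Rie}}\ge2$, we have $r_{\mathrm N}\le 1/512$. Hence $2q_k-1=1-2e_k\ge 1-2r_{\mathrm N}>1/2\ge\delta$, so $\gamma_k=(2q_k-1)^{-1}\le (1-2r_{\mathrm N})^{-1}<2$. The damping is therefore inactive and $\gamma_k$ is exactly the Newton scaling of \cref{lem:newton-cancellation-revised}. The full-step direction $\eta:=\gamma_k\mathrm g_kU_k$ satisfies
\[
\|\eta\|_F^2=\gamma_k^2G_k=2\gamma_k^2q_ke_k\le 8e_k .
\]

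\emph{Step 2 (quadratic decay of the gradient, then of the error).} Let $W:=\mathrm R^{(5)}_{U_k}(\eta)$ be the trial point with $t=1$. By \cref{lem:newton-cancellation-revised}, $\mathcal G(U_k)+D\mathcal G(U_k)[\eta]=0$. Then \cref{lem:gradient-map-residual-revised} gives
\[
\|\mathcal G(W)\|_F\le L_{\mathrm{Rie}}\|\eta\|_F^2\le 8L_{\mathrm{Rie}}e_k .
\]
Writing $q^+:=f(W)$ and $e^+:=1-q^+$, \cref{thm-grad-in-W} gives $\|\mathcal G(W)\|_F^2=2q^+e^+$, so $2q^+e^+\le 64L_{\mathrm{Rie}}^2e_k^2$.

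The main obstacle is that this identity only bounds the product $q^+e^+$. We must exclude the spurious branch where $q^+$ is near $0$, i.e.\ where the new state is near a minimizer rather than the maximizer. I would handle this with the first bound of \cref{lem:two-bounds}, $\|W-U_k\|_F\le\|\eta\|_F$. It yields
\[
\|W\psi_0W^\dagger-\psi_k\|_F\le 2\|W-U_k\|_F\le 2\|\eta\|_F .
\]
Since $\|H\|_2\le1$, this gives $|q^+-q_k|\le 2\|\eta\|_F\le 2\sqrt{8e_k}$. Therefore $q^+\ge 1-e_k-2\sqrt{8r_{\mathrm N}}\ge1/2$, using $r_{\mathrm N}\le 1/512$. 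Dividing by $2q^+\ge1$ then gives
\[
e^+\le 64L_{\mathrm{Rie}}^2e_k^2=C_{\mathrm N}e_k^2\le C_{\mathrm N}r_{\mathrm N}e_k=\tfrac12e_k .
\]

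\emph{Step 3 (Armijo accepts $t=1$).} The Armijo test at $t=1$ reads $q^+\ge q_k+c\gamma_kG_k$. Equivalently,
\[
e^+\le e_k\bigl(1-2c\gamma_kq_k\bigr)=e_k\Bigl(1-\tfrac{2cq_k}{2q_k-1}\Bigr).
\]
With $c\le1/8$ and $2q_k-1\ge1-2r_{\mathrm N}$, we get $2cq_k/(2q_k-1)\le \tfrac14(1-2r_{\mathrm N})^{-1}<\tfrac12$. So the right-hand side exceeds $\tfrac12e_k\ge e^+$. Hence $m_k=0$, $t_k=1$, $e_{k+1}=e^+$, and \cref{eq:quadratic-recurrence-revised} holds.

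\emph{Step 4 (invariance).} Since $e_{k+1}\le\tfrac12e_k\le r_{\mathrm N}$, the hypothesis holds again at step $k+1$. Induction on $\ell$ gives $e_{k+\ell}\le r_{\mathrm N}$ for all $\ell\ge0$ together with the recurrence $e_{k+\ell+1}\le C_{\mathrm N}e_{k+\ell}^2$, which is quadratic convergence in the success probability error.
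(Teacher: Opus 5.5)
Your proposal is correct and follows the paper's proof essentially step for step. You show the damping is inactive with $\gamma_k\le 2$, combine Newton cancellation with \cref{lem:gradient-map-residual-revised} to get $\|\mathcal G(W)\|_F\le L_{\mathrm{Rie}}\|\eta\|_F^2$, use $\|W-U_k\|_F\le\|\eta\|_F$ to force $q^+>1/2$ before dividing $2q^+e^+$, then check Armijo at $t=1$ and conclude by induction; the paper differs only in bounding $|q^+-q_k|$ by a line-segment integral of $\nabla f(U)=2HU\psi_0$ rather than through the state difference. One justification needs repair: $\|H\|_2\le1$ alone does not give $|\operatorname{Tr}(HA)|\le\|A\|_F$ for general $A$, but it holds for your $A=W\psi_0W^\dagger-\psi_k$, which is traceless Hermitian of rank at most two, so $|\operatorname{Tr}(HA)|\le\|A\|_F/\sqrt2$; alternatively, argue directly that $|\langle w|H|w\rangle-\langle u|H|u\rangle|\le 2\,\||w\rangle-|u\rangle\|\le 2\|W-U_k\|_F$ with $|w\rangle=W|\psi_0\rangle$ and $|u\rangle=U_k|\psi_0\rangle$.
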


\begin{proof}
By \cref{eq:CN-rN-rmn-final} and \(L_{\mathrm{Rie}}>2\), we have \(r_{\mathrm N}=1/(128L_{\mathrm{Rie}}^2)<1/512\). Thus \(e_k\le r_{\mathrm N}\) implies \(e_k<1/512\), and hence \(q_k=1-e_k>511/512>3/4\). Therefore \(2q_k-1>1/2\ge\delta\). In this region, the damping in \cref{alg-newton-armijo} is inactive, so the modified Newton scaling becomes
\[
        \gamma_k=\frac{1}{\max\{\delta,2q_k-1\}}=\frac{1}{2q_k-1}\le2.
\]
At this point, $\gamma_k\mathrm g_k$ is the standard Newton direction. If \(e_k=0\), then \(q_k=1\), so the current iterate is already a global maximizer. We therefore assume \(e_k>0\) in the rest of the proof.

The following proof is organized around the full Newton candidate:
\[
\begin{aligned}
        U_k^+&:=\mathrm R_{U_k}^{(5)}(\gamma_k\mathrm g_kU_k),\\
        q_k^+&:=f(U_k^+),\\
        e_k^+&:=1-q_k^+,
        \qquad
        \mathrm g_k^+:=\mathcal G(U_k^+).
\end{aligned}
\]
We first show that the norm of gradient at \(U_k^+\) is of order \(G_k\), then use \(q_k^+>1/2\) to convert this into a quadratic bound for the error \(e_k^+=1-f(U_k^+)\), and finally verify that this candidate satisfies the Armijo condition with \(t_k=1\).

\medskip
\noindent {(Step 1: Show that the gradient at the full Newton trial point is of second order.)}
We apply \cref{lem:gradient-map-residual-revised} with \(U=U_k\) and \(\eta=\gamma_k\mathrm g_kU_k\). Since \(q_k>1/2\) and \(\gamma_k=(2q_k-1)^{-1}\), the Newton cancellation identity in \cref{lem:newton-cancellation-revised} gives $\mathcal G(U_k)+D\mathcal G(U_k)[\gamma_k\mathrm g_kU_k]=0.$ Consequently, \cref{lem:gradient-map-residual-revised} implies
\begin{equation}
\label{eq:gplus-bound-revised}
        \|\mathrm g_k^+\|_F
        \le
        L_{\mathrm{Rie}}\|\gamma_k\mathrm g_kU_k\|_F^2
        =
        L_{\mathrm{Rie}}\gamma_k^2\|\mathrm g_k\|_F^2
        \le
        4L_{\mathrm{Rie}}G_k,
\end{equation}
Here we used \(\|\mathrm g_kU_k\|_F=\|\mathrm g_k\|_F\), \(G_k=\|\mathrm g_k\|_F^2\), and \(\gamma_k\le2\). Thus Step~1 gives the key estimate that the gradient at the full Newton trial point is of second order in the current gradient, namely \(\|\mathrm g_k^+\|_F=O(\|\mathrm g_k\|_F^2)\).

\medskip
\noindent {(Step 2: Show that the full Newton candidate still satisfies \(q_k^+>1/2\).)}
The estimates in \cref{lem:two-bounds} give
\[
        \|U_k^+-U_k\|_F
        \le
        \|\gamma_k\mathrm g_kU_k\|_F
        =
        \gamma_k\|\mathrm g_k\|_F
        \le
        2\|\mathrm g_k\|_F.
\]
We next bound the corresponding change in the cost function. Let \(Z(s):=U_k+s(U_k^+-U_k)\), \(s\in[0,1]\), be the Euclidean line segment joining \(U_k\) and \(U_k^+\). Although \(Z(s)\) is not necessarily unitary for \(s\in(0,1)\), it satisfies
\[
        \|Z(s)\|_2
        \le
        (1-s)\|U_k\|_2+s\|U_k^+\|_2
        =
        1,
\]
because both \(U_k\) and \(U_k^+\) are unitary. For the Euclidean extension of \(f\), one has \(\nabla f(U)=2HU\psi_0\). Hence, along this segment, $\|\nabla f(Z(s))\|_F = \|2HZ(s)\psi_0\|_F \le 2\|H\|_2\|Z(s)\|_2\|\psi_0\|_F \le 2.$ By the fundamental theorem of calculus and the Cauchy--Schwarz inequality,
\[
\begin{aligned}
        |q_k^+-q_k|
        &=
        |f(U_k^+)-f(U_k)| \\
        &=
        \left|
        \int_0^1
        \left\langle \nabla f(Z(s)),\,U_k^+-U_k\right\rangle_F\,ds
        \right| \\
        &\le
        \int_0^1
        \|\nabla f(Z(s))\|_F\|U_k^+-U_k\|_F\,ds \\
        &\le
        2\|U_k^+-U_k\|_F
        \le
        4\|\mathrm g_k\|_F.
\end{aligned}
\]
Using \(G_k=\|\mathrm g_k\|_F^2=2q_ke_k\le2e_k\), we obtain $|q_k^+-q_k| \le 4\sqrt{2e_k}.$
Since \(e_k\le r_{\mathrm N}<1/512\), we have \(4\sqrt{2e_k}<1/4\). Together with \(q_k>3/4\), this yields
\[
        q_k^+
        \ge
        q_k-|q_k^+-q_k|
        >
        \frac34-\frac14
        =
        \frac12.
\]
Thus the full Newton candidate stays in the region where \(q>1/2\).

\medskip
\noindent {(Step 3: Show that the error at the full Newton candidate is quadratic in \(e_k\).)}
At \(U_k^+\), the identity \(\|\mathrm g_k^+\|_F^2=2q_k^+(1-q_k^+)=2q_k^+e_k^+\) and the fact \(q_k^+>1/2\) imply
\[
        e_k^+\le \|\mathrm g_k^+\|_F^2 .
\]
Combining this with \cref{eq:gplus-bound-revised} and \(G_k=2q_ke_k\le2e_k\), we obtain
\[
        e_k^+
        \le
        \|\mathrm g_k^+\|_F^2
        \le
        16L_{\mathrm{Rie}}^2G_k^2
        \le
        64L_{\mathrm{Rie}}^2e_k^2
        =
        C_{\mathrm N}e_k^2,
\]
where \(C_{\mathrm N}:=64L_{\mathrm{Rie}}^2\) is defined in \cref{eq:CN-rN-rmn-final}. Since \(e_k\le r_{\mathrm N}=1/(2C_{\mathrm N})\), this also gives
\begin{equation}\label{eq:full-step-error-halving-revised}
        e_k^+
        \le
        C_{\mathrm N}e_k^2
        \le
        \frac12 e_k .
\end{equation}
Thus the full Newton candidate has quadratic error bound and reduces the current error by at least a factor of two.

\medskip
\noindent {(Step 4: Show that the full Newton candidate satisfies the Armijo condition.)}
It remains to show that \cref{alg-newton-armijo} actually accepts the candidate analyzed above.
By \cref{eq:full-step-error-halving-revised}, the full-step increase satisfies
\begin{equation}\label{eq:full-step-increase-lower-revised}
q_k^+-q_k
        =
        (1-e_k^+)-(1-e_k)
        =
        e_k-e_k^+
        \ge
        \frac12 e_k.
\end{equation}
On the other hand, the Armijo increase required for the full step \(t=1\) satisfies
\begin{equation}\label{eq:armijo-demand-upper-revised}
c\gamma_kG_k
        =
        c\gamma_k\,2q_ke_k
        \le
        4ce_k
        \le
        \frac12 e_k,
\end{equation}
where we used \(\gamma_k\le2\), \(q_k\le1\), and \(c\le1/8\). Therefore, \cref{eq:full-step-increase-lower-revised,eq:armijo-demand-upper-revised} yield $q_k^+-q_k\ge c\gamma_kG_k,$ or equivalently,
\[
        q_k^+\ge q_k+c\gamma_kG_k.
\]
This is precisely the Armijo condition with \(t_k=1\). Hence \cref{alg-newton-armijo} accepts the full step, so \(U_{k+1}=U_k^+\), \(e_{k+1}=e_k^+\), and $e_{k+1} \leq C_{\mathrm N} e_k^2$ in \cref{eq:quadratic-recurrence-revised} follows from \cref{eq:full-step-error-halving-revised} in Step~3.
Finally, since \(e_{k+1}\le e_k/2\le r_{\mathrm N}\), the next iterate remains in the Newton region defined by \cref{eq:CN-rN-rmn-final}. Thus the same argument applies inductively, proving local quadratic convergence in the success probability error.
\end{proof}

\subsection{Entrance into the Newton region during the global phase}

Having identified the Newton region \(e_k\le r_{\mathrm N}\) and proved local quadratic convergence inside it, we now return to the global phase and show that \cref{alg-newton-armijo} reaches this region within \(\mathcal O(\sqrt{N/M})\) iterations.
The proof mainly relies on the dynamics in the two-dimensional Grover plane $\mathcal{S}=\operatorname{span}_{\mathbb{C}}\left\{\left|\psi_0\right\rangle, H\left|\psi_0\right\rangle\right\}$.

\begin{remark}[Why a standard first-order optimization route is not used]
Although a standard first-order optimization argument can prove global convergence, it gives a weaker bound for entering the Newton region.
Indeed, such an argument gives the usual progress estimate
\[
        q_{k+1}-q_k\ge \Theta(L_{\mathrm{Rie}}^{-1})\left\|\mathrm{g}_k\right\|_F^2=\Theta(L_{\mathrm{Rie}}^{-1})q_k(1-q_k),
\]
and therefore requires
\[
        \mathcal O\!\left(
        L_{\mathrm{Rie}}
        \left[
        \log\frac1{q_0}+\log\frac1{\varepsilon}
        \right]
        \right)
\]
iterations to reduce the success probability error to a prescribed tolerance \(\varepsilon\), which corresponds to a linear convergence rate; see \cite[Theorem~4.6]{lai2025grover}.
For entrance into the Newton region, the tolerance is chosen as
\[
        \varepsilon_{\mathrm{enter}}=r_{\mathrm N}=\frac{1}{128L_{\mathrm{Rie}}^2}.
\]
Since \(q_0=M/N\) and \(L_{\mathrm{Rie}}\in\mathcal O(\sqrt{N/M})\), this route gives only
\[
        \mathcal O\!\left(\sqrt{\frac NM}\log\frac NM\right).
\]
The extra logarithmic factor cannot be removed by this standard argument.
To recover the Grover scaling \(\mathcal O(\sqrt{N/M})\), the proof below uses the two-dimensional Grover-plane dynamics and measures progress in the Grover angle.
\end{remark}

We first introduce the scalar quantities used only in this subsection.
Recall that \(q_k=f(U_k)\) and \(e_k=1-q_k\), and define the Grover angle by
\begin{equation}\label{eq:grover-angle-enter}
\begin{aligned}
        \vartheta_k&:=\arcsin\sqrt{q_k}\in(0,\pi/2],\\
        q_k&=\sin^2\vartheta_k,
        \qquad
        e_k=\cos^2\vartheta_k .
\end{aligned}
\end{equation}
Since \(\{q_k\}\) is nondecreasing by \cref{prop:global-monotone-revised}, the angle sequence \(\{\vartheta_k\}\) is also nondecreasing.
Next, write the skew-Hermitian component of the gradient in the fixed two-dimensional space \(\mathcal W=\operatorname{span}_{\mathbb R}\{X_0,Y_0\}\) as $\mathrm g_k=[H,\psi_k]=x_kX_0+y_kY_0,$ and define the corresponding coefficient length by
\begin{equation*}
        \xi_k:=\sqrt{x_k^2+y_k^2}.
\end{equation*}
Thus \(\xi_k\) is simply the Euclidean length of the coordinate vector \((x_k,y_k)\) in the fixed basis \(\{X_0,Y_0\}\).  Then, \(\frac{1}{\xi_k}\mathrm g_k \) is a normalized gradient direction.

\begin{lemma}[Lower bound for the coefficient length]
\label{lem:coefficient-length-enter}
For every iterate of \cref{alg-newton-armijo}, we have
\begin{equation}
\label{eq:xi-q-enter}
        \xi_k^2
        =
        \frac{q_k(1-q_k)}{q_0(1-q_0)}.
\end{equation}
Moreover, with \(r_{\mathrm N}\) defined in \cref{eq:CN-rN-rmn-final}, if \(e_k>r_{\mathrm N}\), then
\begin{equation*}
    \xi_k\ge \frac1{32}.
\end{equation*}
\end{lemma}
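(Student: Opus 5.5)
The identity \cref{eq:xi-q-enter} comes from computing $\|\mathrm g_k\|_F^2$ in two ways. The lower bound then follows from a two-case estimate on $q_k(1-q_k)$, using monotonicity of $q_k$ and the definition of $r_{\mathrm N}$.

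\emph{Identity.} By \cref{thm-grad-in-W}, $\mathrm g_k=[H,\psi_k]\in\mathcal W$, so the coefficients $(x_k,y_k)$ are well defined. Apply \cref{lem-X0YO} to the pure state $\psi_0$, whose expectation value is $q_0$. It gives $\langle X_0,Y_0\rangle=0$ and $\|X_0\|_F^2=\|Y_0\|_F^2=2q_0(1-q_0)$. Expanding the norm of $x_kX_0+y_kY_0$ then yields
\[
\|\mathrm g_k\|_F^2=(x_k^2+y_k^2)\,2q_0(1-q_0)=2q_0(1-q_0)\,\xi_k^2 .
\]
On the other hand, \cref{thm-grad-in-W} gives $\|\mathrm g_k\|_F^2=2q_k(1-q_k)$. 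Since $q_0\in(0,1)$, we may divide and obtain \cref{eq:xi-q-enter}.

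\emph{Lower bound.} It suffices to show $q_k(1-q_k)\ge q_0(1-q_0)/1024$ whenever $e_k>r_{\mathrm N}$. By \cref{prop:global-monotone-revised}, $q_k\ge q_0$.

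Case $q_k\le 1/2$: here $1-q_k\ge 1/2$, so
\[
q_k(1-q_k)\ge \tfrac12 q_k\ge \tfrac12 q_0\ge \tfrac12 q_0(1-q_0).
\]

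Case $q_k>1/2$: here
\[
q_k(1-q_k)\ge \tfrac12 e_k>\tfrac12 r_{\mathrm N}=\frac{1}{256L_{\mathrm{Rie}}^2}.
\]
It remains to check $L_{\mathrm{Rie}}^2\le 4/(q_0(1-q_0))$. Set $s:=\sqrt{2q_0(1-q_0)}\le 1/\sqrt2$. Then $L_{\mathrm{Rie}}=2+1/s=(2s+1)/s\le(\sqrt2+1)/s$. Hence
\[
L_{\mathrm{Rie}}^2\le \frac{3+2\sqrt2}{2q_0(1-q_0)}<\frac{3}{q_0(1-q_0)}\le\frac{4}{q_0(1-q_0)},
\]
which gives $q_k(1-q_k)\ge q_0(1-q_0)/1024$.

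In both cases, \cref{eq:xi-q-enter} gives $\xi_k^2\ge 1/1024$, i.e.\ $\xi_k\ge 1/32$. The only step needing care is the second case, where the constant in $r_{\mathrm N}$ has to be compared against $q_0(1-q_0)$ through the explicit form of $L_{\mathrm{Rie}}$. The elementary bound $s\le 1/\sqrt2$ suffices for this.
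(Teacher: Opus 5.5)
Your proof is correct and follows the paper's argument. The identity comes from expanding $\|\mathrm g_k\|_F^2$ in the orthogonal pair $\{X_0,Y_0\}$ and comparing with $2q_k(1-q_k)$, and the lower bound uses $q_k\ge q_0$ together with $L_{\mathrm{Rie}}\le 2/\sqrt{q_0(1-q_0)}$ to compare $r_{\mathrm N}$ against $q_0(1-q_0)$. The only difference is cosmetic: you split at $q_k=1/2$, whereas the paper uses concavity of $q(1-q)$ on $[q_0,1-r_{\mathrm N}]$ and checks the endpoints, which yields the same estimate.
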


\begin{proof}
By \cref{lem-X0YO}, \(X_0\) and \(Y_0\) are orthogonal and satisfy \(\|X_0\|_F^2=\|Y_0\|_F^2=2q_0(1-q_0)\). Hence, from \(\mathrm g_k=x_kX_0+y_kY_0\) and \(\xi_k^2=x_k^2+y_k^2\),
\[
\|\mathrm g_k\|_F^2
= x_k^2 \|X_0\|_F^2 +y_k^2 \|Y_0\|_F^2
        =
        2q_0(1-q_0)\xi_k^2 .
\]
Since also \(\|\mathrm g_k\|_F^2=2q_k(1-q_k)\), \cref{eq:xi-q-enter} follows.

It remains to prove the lower bound before entering the Newton region. Set \(p_0:=q_0(1-q_0)\). Since \(0<p_0\le1/4\), we have \(2\le1/\sqrt{p_0}\) and \(1/\sqrt{2p_0}\le1/\sqrt{p_0}\), and therefore
\[
        L_{\mathrm{Rie}}
        =
        2+\frac{1}{\sqrt{2p_0}}
        \le
        \frac{1}{\sqrt{p_0}}+\frac{1}{\sqrt{p_0}}
        =
        \frac{2}{\sqrt{p_0}}.
\]
Consequently, $r_{\mathrm N} = \frac{1}{128L_{\mathrm{Rie}}^2} \ge \frac{p_0}{512}.$ Also \(L_{\mathrm{Rie}}>2\), so \(r_{\mathrm N}<1/512<1/2\).

Now assume \(e_k=1-q_k>r_{\mathrm N}\). Then \(q_k<1-r_{\mathrm N}\), while monotonicity gives \(q_k\ge q_0\). Thus \(q_k\in[q_0,1-r_{\mathrm N}]\). Since the function \(q\mapsto q(1-q)\) is concave on \([0,1]\), its minimum over this interval is attained at an endpoint, and hence
\[
        q_k(1-q_k)
        \ge
        \min\{q_0(1-q_0),\,r_{\mathrm N}(1-r_{\mathrm N})\}.
\]
The first endpoint contributes \(1\) after division by \(p_0=q_0(1-q_0)\), while the second endpoint satisfies
\[
        \frac{r_{\mathrm N}(1-r_{\mathrm N})}{p_0}
        \ge
        \frac{(p_0/512)(1/2)}{p_0}
        =
        \frac1{1024}.
\]
Combining this endpoint bound with \cref{eq:xi-q-enter} yields \(\xi_k^2\ge1/1024\), and therefore \(\xi_k\ge1/32\).
\end{proof}

For a trial line search parameter \(t\), the tested tangent vector is \(t\gamma_k\mathrm g_kU_k\). Since \(\mathrm g_k=x_kX_0+y_kY_0\), we measure this trial step by its coefficient length in the fixed basis \(\{X_0,Y_0\}\):
\begin{equation*}
       R:=t\gamma_k\xi_k .
\end{equation*}
We call \(R\) the trial radius. The next lemma shows that, before entering the Newton region, the Armijo line search admits a uniform lower bound for the accepted radius.

\begin{lemma}[Lower bound for the accepted radius]
Suppose that \cref{alg-newton-armijo} is run with parameters \(0<\delta\le1\), \(0<c\le1/4\), and \(0<\rho<1\). Then there exists a constant \(R_*>0\) such that, whenever \(e_k>r_{\mathrm N}\), the accepted radius $R_k:=t_k\gamma_k\xi_k$ satisfies
\begin{equation}
\label{eq:accepted-radius-lower-enter}
        R_k\ge \rho R_* .
\end{equation}
\end{lemma}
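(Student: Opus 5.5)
The plan is to reduce the Armijo test, for a trial radius $R$, to an explicit scalar inequality in the Grover angle. We then show that this inequality holds for every $R\le R_*$, where $R_*$ is an absolute constant independent of $N$ and $M$. The lower bound \cref{eq:accepted-radius-lower-enter} then follows from the same backtracking argument used in the proof of \cref{eq:effective-step-revised}. If the full step $t_k=1$ is accepted, then $R_k=\gamma_k\xi_k\ge \xi_k\ge 1/32$ by $\gamma_k\ge1$ and \cref{lem:coefficient-length-enter}. Otherwise the rejected trial radius $\rho^{-1}R_k$ must exceed $R_*$, so $R_k>\rho R_*$. Taking $R_*\le 1/32$ covers both cases.

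\textbf{Step 1: Grover-plane dynamics.} Using the $2\times2$ representation of \cref{thm-newton-classical} in the basis $u=H|\psi_0\rangle$, $v=(I-H)|\psi_0\rangle$, I would compute the action of $V(t\gamma_k;x_k,y_k)$ from \cref{eq-5factor-retraction} on $|\psi_k\rangle$. The direction angle $A=\operatorname{atan2}(y_k,x_k)$ is exactly the phase of $z_k=\alpha_k\bar\beta_k$, so the conjugating factors $e^{ia_1H}$ and $e^{-ia_2H}$ align the relative phase of $|\psi_k\rangle$. After this alignment, the middle block $e^{ib_1\psi_0}e^{-i\pi H}e^{ib_2\psi_0}$ with $b_{1,2}=\mp R/2$ should act as a real rotation in the Grover plane. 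The goal is a closed-form expression
\[
  \vartheta(R)=\vartheta_k+\Phi(R),\qquad \Phi(R)\ge \kappa\sqrt{p_0}\,R\quad\text{for } 0\le R\le R_*,
\]
with $p_0=q_0(1-q_0)$ and $\kappa$ an absolute constant. Consistency with $\dot\gamma(0)=\eta$ fixes $\Phi'(0)$. Indeed, $\frac{d}{dt}q=\langle \mathrm g_k,\cdot\rangle$ gives $\frac{dq}{dR}\big|_{0}=2p_0\xi_k=2\sqrt{p_0}\sqrt{q_k(1-q_k)}$. Since $dq/d\vartheta=\sin 2\vartheta=2\sqrt{q(1-q)}$, this yields $\Phi'(0)=\sqrt{p_0}$.

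\textbf{Step 2: the Armijo inequality.} Write $G_k=2p_0\xi_k^2$ and $t\gamma_k=R/\xi_k$. The Armijo demand is then
\[
  c\,t\gamma_kG_k=2c\sqrt{p_0}\,R\sqrt{q_k(1-q_k)} .
\]
The actual gain is
\[
  \sin^2(\vartheta_k+\Phi)-\sin^2\vartheta_k=\sin\Phi\,\sin(2\vartheta_k+\Phi).
\]
It therefore suffices to show $\sin\Phi\,\sin(2\vartheta_k+\Phi)\ge 2c\sqrt{p_0}R\sin\vartheta_k\cos\vartheta_k$. Here we use two facts. First, $c\le1/4$. Second, $e_k>r_{\mathrm N}\ge p_0/512$, so $\cos\vartheta_k\gtrsim\sqrt{p_0}$. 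Also $\sin\vartheta_k\ge\sqrt{q_0}\ge\sqrt{p_0}$. Together these keep $2\vartheta_k+\Phi$ away from $0$ and $\pi$ by an amount of order $\sqrt{p_0}$. This is comparable to $\Phi\lesssim \sqrt{p_0}R_*$, so a small absolute $R_*$ gives $\sin(2\vartheta_k+\Phi)\ge \tfrac12\sin 2\vartheta_k$. Combined with $\sin\Phi\ge \tfrac{2}{\pi}\Phi\ge\tfrac{2\kappa}{\pi}\sqrt{p_0}R$, this gives the required inequality once $\kappa$ dominates $2\pi c$, or once $\Phi\ge \sqrt{p_0}R(1-O(R))$, which we expect.

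\textbf{Main obstacle.} The hardest step is Step 1: obtaining an explicit, uniform-in-$N$ formula, or at least the lower bound, for the angle increment $\Phi(R)$ produced by the 5-factor retraction. In particular, we must check that the phase alignment via $A$ makes the motion a pure increase of $\vartheta$ with no leftover relative phase that would waste progress. If the exact formula is messy, my fallback is a second-order Taylor bound on $\vartheta(R)$ in $R$, with curvature controlled by $\sqrt{p_0}$. This should follow from $\|e^{ib\psi_0}-I-ib\psi_0\|\le b^2/2$ applied inside the $2\times2$ products, together with $\Phi'(0)=\sqrt{p_0}$.
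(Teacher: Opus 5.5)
\textbf{Verdict: the proposal has a genuine gap.} Your outer structure matches the paper's. You prove that every trial radius $R\le R_*$ passes the Armijo test uniformly over pre-Newton iterates, then backtrack with $R_*\le 1/32$, and that part is fine. The gap is Step~1, which carries the whole content of the lemma. Neither mechanism you propose for it works as described.

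\textbf{First mechanism: the step is not a pure increase of $\vartheta$.} Set $P:=e^{-i\pi H}=I-2H$ and use $e^{2\pi iH}=I$. Then \cref{eq-5factor-retraction} regroups as
\[
V=D\,e^{-iR\psi_0/2}e^{iR\psi_0'/2}D^\dagger,\qquad D:=e^{i(A+\pi/2)H},\quad \psi_0':=P\psi_0P .
\]
Your ``middle block'' contains the reflection $P$, so it is not near the identity. The choice of $A$ does make $D^\dagger|\psi_k\rangle\propto -i\sin\vartheta_k|\psi^\star\rangle+\cos\vartheta_k|r\rangle$, where $|r\rangle$ is the unit vector along $(I-H)|\psi_0\rangle$. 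However, $e^{-iR\psi_0/2}e^{iR\psi_0'/2}$ is a Bloch rotation by an angle $\phi$ with $\sin(\phi/4)=2\sqrt{p_0}\sin(R/4)$. Its equatorial axis is tilted by $\epsilon$, with $\tan\epsilon=(1-2q_0)\tan(R/4)$, away from the axis that would move the state along its meridian. So a relative phase is left over. For $q_0\neq\frac12$, no iterate-independent $\Phi(R)$ with $\vartheta(R)=\vartheta_k+\Phi(R)$ exists. The tilt turns out to be harmless, but it has to be computed.

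\textbf{Second mechanism: the fallback is too crude.} It cannot produce curvature of order $\sqrt{p_0}$. Each factor $e^{\mp iR\psi_0/2}$ moves the state by order $R$, so the factor-wise bound $\|e^{ib\psi_0}-I-ib\psi_0\|\le b^2/2$ leaves an $O(R^2)$ remainder. The $\sqrt{p_0}$ scale of the net step appears only after the two factors cancel across $P$. The first-order gain is $2\sqrt{p_0}\sqrt{q_k(1-q_k)}\,R\ge p_0R/16$, so an $O(R^2)$ loss forces $R_*=O(p_0)$. That is enough for the lemma as literally stated, which only asks for iterate-independence, but not for your absolute constant. Even \cref{eq:pullback-ascent-revised} gives a loss $L_{\mathrm{Rie}}p_0R^2\approx\sqrt{p_0/2}\,R^2$, hence only $R_*=O(\sqrt{p_0})$. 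An absolute $R_*$ needs a loss of order $p_0R^2$.

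A smaller slip: your threshold should read $\kappa\ge\pi c$, not $2\pi c$. The latter is unattainable for $c$ near $1/4$, because $\Phi'(0)=\sqrt{p_0}$ forces $\kappa\le1$.

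\textbf{How to close the gap.} The regrouping itself closes it. Since $D$ commutes with $H$, the trial value depends only on $(R,\vartheta_k)$ for fixed $q_0$. A Bloch-sphere computation gives exactly
\[
q(R)-q_k=\tfrac12\bigl[\sin2\vartheta_k\,\sin\phi\,\cos\epsilon+\cos2\vartheta_k\,(1-\cos\phi)\bigr].
\]
Work with this rather than with $\vartheta$. You have $2\sqrt{p_0}R(1-R^2/96)\le\phi\le2.1\sqrt{p_0}R$ and $\cos\epsilon\ge\cos(R/4)$. From \cref{lem:coefficient-length-enter} you also have $\sqrt{p_0}\le16\sin2\vartheta_k$. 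Hence the only possible loss, $\frac12(1-\cos\phi)\le\phi^2/4$, is at most $18\sqrt{p_0}R^2\sin2\vartheta_k$. The gain is essentially $\sqrt{p_0}R\sin2\vartheta_k$, and the Armijo demand is $c\sqrt{p_0}R\sin2\vartheta_k$. So $R_*=1/64$ works for all $c\le1/4$, independently of $N$ and $M$.

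\textbf{Comparison with the paper.} The paper avoids explicit formulas. It writes the normalized trial objective as a smooth $\widehat\Phi(R;q,s)$ through the $2\times2$ representation. It then bounds $\partial_R^2\widehat\Phi$ by compactness of $[0,R_0]\times[q_0,1-r_{\mathrm N}]\times\mathbb S^1$. That route is shorter and gives exactly the iterate-uniform constant the statement asks for. Its $B_Q$, and hence $R_*$, comes from a $q_0$-dependent compact set and $q_0$-dependent matrices, and is not shown to be uniform in $N/M$. Your repaired route gives an absolute $R_*$. That is what the later count $K_{\mathrm{enter}}=\mathcal O(\sqrt{N/M})$ implicitly needs through $\omega_*$.
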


\begin{proof}
Fix an iterate \(U_k\) of \cref{alg-newton-armijo} before the Newton region and omit the subscript \(k\). Write
\[
\begin{aligned}
        q&=f(U),\\
        \psi&=U\psi_0U^\dagger,\\
        \mathrm g&=[H,\psi]=xX_0+yY_0,\\
        \xi&=\sqrt{x^2+y^2}.
\end{aligned}
\]
Since \(e:=1-q>r_{\mathrm N}\), \cref{lem:coefficient-length-enter} gives \(\xi\ge1/32>0\). For a trial step \(t\), define its Grover-plane radius by
\[
        R:=t\gamma\xi .
\]
Equivalently, \(t\gamma=R/\xi\), so the corresponding trial point can be written as $\mathrm R_U^{(5)} \left( \frac{R}{\xi}\mathrm gU \right).$ We therefore define the radius parametrized trial objective function by
\begin{equation}
\label{eq:Q-R-def-enter}
        Q_U(R):=
        f\!\left(
        \mathrm R_U^{(5)}
        \left(
        \frac{R}{\xi}\mathrm gU
        \right)
        \right).
\end{equation}
In terms of \(R\), the Armijo condition becomes
\begin{equation}
\label{eq:small-radius-armijo-enter}
        Q_U(R)
        \ge
        q+c\,\frac{R}{\xi}\|\mathrm g\|_F^2 .
\end{equation}
We first show that this condition holds uniformly for all sufficiently small radii. More precisely, there exists a constant \(R_*>0\), independent of the iterate \(U\), such that \cref{eq:small-radius-armijo-enter} holds for every \(0\le R\le R_*\).

Since \(\mathrm R_U^{(5)}\) is a retraction, \(Q_U(0)=q\) and
\begin{equation}\label{eq-6111335}
\begin{aligned}
Q'_U(0)
&=\left\langle \operatorname{grad} f(U),
        \frac{1}{\xi}\mathrm gU\right\rangle\\
&=\left\langle \mathrm gU,
        \frac{1}{\xi}\mathrm gU\right\rangle\\
&=\frac{1}{\xi}\|\mathrm g\|_F^2 .
\end{aligned}
\end{equation}
Using \(\|\mathrm g\|_F^2=2q(1-q)\) and \(\xi^2=q(1-q)/(q_0(1-q_0))\) from \cref{lem:coefficient-length-enter}, we obtain
\begin{equation*}
        Q'_U(0)=2\sqrt{q_0(1-q_0)}\sqrt{q(1-q)} .
\end{equation*}
Set \(A:=\sqrt{q_0(1-q_0)}\sqrt{q(1-q)}\).
Since \(U\) is nonstationary, \(A>0\). Defining an  auxiliary function
\[
        \Phi_U(R):=\frac{Q_U(R)-q}{A}.
\]
Then \(\Phi_U(0)=0\) and \(\Phi_U'(0)=2\).

We now show that the functions \(\Phi_U\) belongs to a fixed compactly parametrized family. We use the same non-normalized Grover-plane basis as in \cref{thm-grad-classical}, also in \cref{thm-newton-classical}:
\[
        u:=H|\psi_0\rangle,
        \qquad
        v:=(I-H)|\psi_0\rangle .
\]
In this basis, the restrictions of \(H\) and \(\psi_0\) are the fixed \(2\times2\) matrices
\[
        E_H(0):=
        \begin{bmatrix}
        1&0\\
        0&0
        \end{bmatrix},
        \qquad
        \Psi_0:=
        \begin{bmatrix}
        q_0&1-q_0\\
        q_0&1-q_0
        \end{bmatrix}.
\]
More generally,
\[
        E_H(\theta):=
        \begin{bmatrix}
        e^{i\theta}&0\\
        0&1
        \end{bmatrix},
        \qquad
        E_{\psi_0}(\theta)=:I_2+(e^{i\theta}-1)\Psi_0 .
\]
By the Grover-plane invariance, the current state vector generated by $U$ has coordinates $|\psi\rangle=\alpha u+\beta v$ in this same basis. Hence
\begin{equation}\label{eq-6102222}
q=f(U)=\langle\psi|H|\psi\rangle=q_0|\alpha|^2,
        \qquad
        1-q=(1-q_0)|\beta|^2 .
\end{equation}
The corresponding rank-one projector \(\psi=|\psi\rangle\langle\psi|\), represented as a linear operator in the basis \(\{u,v\}\), is
\[
        [\psi]_{u,v}=
        \begin{bmatrix}
        q_0|\alpha|^2&(1-q_0)\alpha\overline{\beta}\\
        q_0\beta\overline{\alpha}&(1-q_0)|\beta|^2
        \end{bmatrix}.
\]
If we set $z:=\alpha\overline{\beta},$ then this becomes
\begin{equation}\label{eq-6102225}
[\psi]_{u,v}=
        \begin{bmatrix}
        q&(1-q_0)z\\
        q_0\overline z&1-q
        \end{bmatrix}.
\end{equation}
Thus, the real number \(q\) and the complex number \(z\) determine the matrix of \(\psi\).

We now relate \(z\) to the gradient coefficients $x$ and $y$. Since \(H\) is represented by \(\begin{bmatrix}1&0\\0&0\end{bmatrix}\), we have
\[
        [[H,\psi]]_{u,v}=
        \begin{bmatrix}
        0&(1-q_0)z\\
        -q_0\overline z&0
        \end{bmatrix}.
\]
On the other hand, from \(X_0=[H,\psi_0]\) and \(Y_0=i[H,X_0]\), the same non-normalized basis gives
\[
        [X_0]_{u,v}=
        \begin{bmatrix}
        0&1-q_0\\
        -q_0&0
        \end{bmatrix},
        \qquad
        [Y_0]_{u,v}=
        \begin{bmatrix}
        0&i(1-q_0)\\
        iq_0&0
        \end{bmatrix}.
\]
Therefore, for \(\mathrm g=[H,\psi]=xX_0+yY_0\), comparison of matrix entries gives $z=x+iy .$ Indeed, by \cref{eq-6102222},
\[
\xi=\sqrt{x^2+y^2}=|z|=|\alpha|\,|\beta|=\sqrt{\frac{q(1-q)}{q_0(1-q_0)}},
\]
which is the same result of \cref{lem:coefficient-length-enter}.
The above analysis based on the matrix representation is the main technical tool used to prove \cref{thm-grad-in-W}; see proof of \cite[Theorem 3.5]{lai2025grover} for details.

Consequently, if set real vector
\[
\begin{aligned}
        s&:=\frac{1}{\xi}(x,y)=(s_1,s_2)\in\mathbb S^1,\\
        \mathbb S^1&:=\{s\in\mathbb R^2:s_1^2+s_2^2=1\},
\end{aligned}
\]
then
\[
\begin{aligned}
        z\equiv z(q,s)
        &=\xi(s_1+is_2)\\
        &=\sqrt{\frac{q(1-q)}{q_0(1-q_0)}}(s_1+is_2).
\end{aligned}
\]
Substituting this into the matrix of \(\psi\) in \cref{eq-6102225}, we obtain the explicit representation
\[
        [\psi]_{u,v}\equiv [\psi]_{u,v}(q,s)=
        \begin{bmatrix}
        q&(1-q_0)z(q,s)\\
        -q_0 \overline{z(q,s)}&1-q
        \end{bmatrix}.
\]
This shows that, the current state \(\psi\) is completely determined by the pair \((q,s)\in (0,1) \times  \mathbb S^1\). The vector $s\in\mathbb R^2$ also determines the normalized trial generator in retraction, because
\[
        \frac{1}{\xi}\mathrm g=s_1X_0+s_2Y_0 .
\]
Therefore the trial curve in \cref{eq:Q-R-def-enter} depends on the current iterate \(U\) (i.e., current state $\psi$) only through \(q\) and \(s\).
Indeed, by \cref{thm-newton-classical}, the fixed Grover-compatible retraction gives a \(2\times2\) matrix \(M\equiv M(R,s)\), built only from the factors \(E_H(\theta)\) and \(E_{\psi_0}(\theta)\), such that the trial projector \(\psi(R)=|\psi(R)\rangle\langle\psi(R)|\) is represented by
\[
        [\psi(R)]_{u,v}=M(R,s)[\psi]_{u,v}M(R,s)^\dagger,
\]
Hence,
\[
\begin{aligned}
        Q_U(R)
        &=f(\mathrm R_U^{(5)}((R/\xi)\mathrm gU))\\
        &=\operatorname{tr}\!(H\psi(R))\\
        &=\operatorname{tr}\!\left(
        \begin{bmatrix}
        1&0\\
        0&0
        \end{bmatrix}
        [\psi(R)]_{u,v}
        \right),
\end{aligned}
\]
Thus there exists a scalar function induced by this fixed \(2\times2\) product such that
\[
        \Phi_U(R)=\widehat\Phi(R;q,s).
\]
The function \(\widehat\Phi\) is smooth in \((R,q,s)\), because all entries above are obtained from finite products of the smooth matrices \(E_H(\theta)\), \(E_{\psi_0}(\theta)\), and the smooth expression for \([\psi]_{u,v}\) in terms of \(q\) and \(s\).

It remains to make the uniformity precise. Since the iterates are monotone, \(q\ge q_0\). Since we are before the Newton region, hence \(q<1-r_{\mathrm N}\). Therefore, $q\in[q_0,1-r_{\mathrm N}].$ Fix some \(R_0\in(0,1]\). The parameter set
\[
        [0,R_0]\times[q_0,1-r_{\mathrm N}]\times\mathbb S^1
\]
is compact, and the function \((R,q,s)\mapsto \partial_R^2\widehat\Phi(R;q,s)\) is continuous on it. Hence, there exists a constant \(B_Q>0\), independent of \(q\) and \(s\), and therefore independent of the iterate \(U\), such that
\begin{equation*}
        |\Phi_U''(R)|\le B_Q,
        \qquad
        0\le R\le R_0 .
\end{equation*}
Equivalently,
\begin{equation*}
\begin{gathered}
\begin{aligned}
        |Q''_U(R)|
        &=A|\Phi_U''(R)|\\
        &\le B_Q\sqrt{q_0(1-q_0)}\sqrt{q(1-q)},
\end{aligned}\\
        0\le R\le R_0 .
\end{gathered}
\end{equation*}

Choose $R_*=\min\left\{R_0,\frac1{64},\frac2{B_Q}\right\}.$
For \(0\le R\le R_*\), Taylor's formula applied to \(\Phi_U\) gives
\[
\begin{aligned}
        \Phi_U(R)
        &=\Phi_U(0)+R\Phi_U'(0)+\frac12R^2\Phi_U''(\theta R)\\
        &=2R+\frac12R^2\Phi_U''(\theta R)\\
        &\ge 2R-\frac12B_QR^2\\
        &\ge R,
\end{aligned}
\]
where \(\theta\in(0,1)\), \(|\Phi_U''(\theta R)|\le B_Q\), and the last inequality uses \(B_QR\le 2\). Hence
\[
        Q_U(R)-q=A\Phi_U(R)\ge AR .
\]
On the other hand, by \cref{eq-6111335},
\[
        c\frac{R}{\xi}\|\mathrm g\|_F^2=cRQ'_U(0)=2cAR\le AR,
\]
where \(c\le1/4\).  Therefore every trial radius \(R\le R_*\) satisfies the
Armijo condition \cref{eq:small-radius-armijo-enter}.

It remains to prove the accepted radius lower bound \cref{eq:accepted-radius-lower-enter}. Assume \(e_k>r_{\mathrm N}\). By \cref{lem:coefficient-length-enter}, \(\xi_k\ge1/32\), while \(\gamma_k=1/\max\{\delta,2q_k-1\}\ge1\). Hence the first trial radius satisfies
\[
        R_k^{(0)}:=\gamma_k\xi_k\ge\frac1{32}>R_*,
\]
because \(R_*\le1/64\). During backtracking, the trial radii are \(R_k^{(m)}=\rho^mR_k^{(0)}\), \(m=0,1,2,\ldots\). Let \(R_k=R_k^{(m_k)}\) be the first accepted radius. If \(R_k>R_*\), then \(R_k>\rho R_*\). If \(R_k\le R_*\), then \(m_k\ge1\), and the previous trial radius \(R_k^{(m_k-1)}\) was rejected. Since every radius not exceeding \(R_*\) satisfies the Armijo condition by \cref{eq:small-radius-armijo-enter}, this rejection implies \(R_k^{(m_k-1)}>R_*\). Therefore
\[
        R_k=R_k^{(m_k)}=\rho R_k^{(m_k-1)}>\rho R_* .
\]
Thus \(R_k\ge\rho R_*\) in all cases.
\end{proof}

We now use the lower bound on the accepted radius to prove that the Grover angle progress by at least a fixed multiple of \(\sqrt{q_0(1-q_0)}\) before the Newton region is reached.

\begin{lemma}[Lower bound for the Grover angle increment]
\label{lem:angle-progress-enter}
Assume that \(0<c\le1/8\).  Define the constant
\begin{equation*}
        \omega_*
        :=
        \min\left\{
        \frac1{48},
        \frac{c\rho R_*}{4}
        \right\}.
\end{equation*}
Recall the definition of Grover angle $\vartheta_k$ in \cref{eq:grover-angle-enter}. If $e_k>r_{\mathrm N}$, then
\begin{equation}
\label{eq:angle-progress-enter}
        \vartheta_{k+1}-\vartheta_k
        \ge
        \omega_*\sqrt{q_0(1-q_0)}.
\end{equation}
\end{lemma}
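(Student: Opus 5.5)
The plan is to turn the Armijo acceptance inequality into a lower bound on the increase in $q$, and then convert that increase into an increment of the Grover angle using $q=\sin^2\vartheta$. Fix an iterate $k$ with $e_k>r_{\mathrm N}$ and write $\Delta:=\vartheta_{k+1}-\vartheta_k\ge0$. Nonnegativity follows from the monotonicity in \cref{prop:global-monotone-revised}. Also set $p_0:=q_0(1-q_0)$ and $s_k:=\sqrt{q_k(1-q_k)}$.

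\emph{Step 1: progress in $q$.} The accepted Armijo step gives $q_{k+1}-q_k\ge c\,t_k\gamma_k G_k$. I will substitute $t_k\gamma_k=R_k/\xi_k$ and $G_k=2q_k(1-q_k)$, together with $\xi_k=s_k/\sqrt{p_0}$ from \cref{eq:xi-q-enter}. This yields
\[
q_{k+1}-q_k\ge 2cR_k\sqrt{p_0}\,s_k\ge 2c\rho R_*\sqrt{p_0}\,s_k ,
\]
where the second inequality is the accepted-radius bound \cref{eq:accepted-radius-lower-enter}.

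\emph{Step 2: upper bound of the $q$-increase by $\Delta$.} Since $q=\sin^2\vartheta$,
\[
q_{k+1}-q_k=\sin(\vartheta_{k+1}+\vartheta_k)\sin\Delta=\sin(2\vartheta_k+\Delta)\sin\Delta .
\]
I will use $\sin\Delta\le\Delta$ and the Lipschitz bound $\sin(2\vartheta_k+\Delta)\le\sin 2\vartheta_k+\Delta$. Because $\sin2\vartheta_k=2\sin\vartheta_k\cos\vartheta_k=2s_k$, this gives
\[
q_{k+1}-q_k\le\Delta(2s_k+\Delta).
\]

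\emph{Step 3: case split.} The main obstacle is that $\Delta$ may be large, and then the factor $\sin(2\vartheta_k+\Delta)$ is no longer comparable to $s_k$. I handle this by splitting on the size of $\Delta$ relative to $s_k$.

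If $\Delta\ge s_k$, I use \cref{lem:coefficient-length-enter}, which gives $\xi_k\ge1/32$. Hence $s_k=\sqrt{p_0}\,\xi_k\ge\sqrt{p_0}/32$, and so $\Delta\ge\sqrt{p_0}/48\ge\omega_*\sqrt{p_0}$.

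If $\Delta<s_k$, Step 2 gives $q_{k+1}-q_k\le3\Delta s_k$. Combining this with Step 1 and dividing by $s_k>0$ yields
\[
\Delta\ge\tfrac{2}{3}c\rho R_*\sqrt{p_0}\ge\tfrac{c\rho R_*}{4}\sqrt{p_0}\ge\omega_*\sqrt{p_0}.
\]
The two cases together give \cref{eq:angle-progress-enter}. The condition $c\le1/8$ is needed only so that \cref{eq:accepted-radius-lower-enter} is available.
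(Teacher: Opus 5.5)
Your proof is correct and follows the paper's argument. You use the same Armijo and accepted-radius lower bound $q_{k+1}-q_k\ge c\rho R_*\sqrt{p_0}\,\sin(2\vartheta_k)$, the same trigonometric upper bound $\Delta(\sin 2\vartheta_k+\Delta)$, and the same estimate $\xi_k\ge 1/32$ to control the regime where $\Delta$ is large. The only difference is organizational: you split directly into $\Delta\ge s_k$ versus $\Delta<s_k$, whereas the paper argues by contradiction from $\Delta<\omega_*\sqrt{p_0}$ using $\sin(2\vartheta_k)\ge\sqrt{p_0}/16$; your version even gives the slightly better constant $\min\{1/32,\,2c\rho R_*/3\}$.
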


\begin{proof}
Set \(\Delta_k:=\vartheta_{k+1}-\vartheta_k\) and \(u_0:=\sqrt{q_0(1-q_0)}\). Since \(q_k=\sin^2\vartheta_k\) and \(\xi_k=\sqrt{q_k(1-q_k)}/u_0\), we have
\begin{equation*}
        \sin(2\vartheta_k)=2 \sin(\vartheta_k)\cos(\vartheta_k)
        =
        2\sqrt{q_k(1-q_k)}
        =
        2\xi_k u_0.
\end{equation*}
Because \(e_k>r_{\mathrm N}\), \cref{lem:coefficient-length-enter} gives \(\xi_k\ge1/32\), and hence
\begin{equation}
\label{eq:sin2theta-lower-enter}
        \sin(2\vartheta_k)\ge \frac{u_0}{16}.
\end{equation}
Set $R_k:=t_k \gamma_k \xi_k$. The Armijo condition and the accepted radius bound \cref{eq:accepted-radius-lower-enter} imply
\begin{equation}
\label{eq:q-increase-angle-enter}
\begin{aligned}
        q_{k+1}-q_k
        &\ge
        c\frac{R_k}{\xi_k}G_k
        \\
        &\ge
        c\frac{\rho R_*}{\xi_k}\,2q_k(1-q_k)
        \\
        &=
        c\rho R_*\,u_0\sin(2\vartheta_k)
        \\
        &\ge
        \frac{c\rho R_*}{2}u_0\sin(2\vartheta_k).
\end{aligned}
\end{equation}
On the other hand, since \(q_k=\sin^2\vartheta_k\),
\begin{align*}
q_{k+1}-q_k
        &=
        \sin^2\vartheta_{k+1}-\sin^2\vartheta_k \notag\\
        &=
\sin(\vartheta_{k+1}+\vartheta_k)\sin(\vartheta_{k+1}-\vartheta_k) \notag \\
&=
\sin(2\vartheta_k+\Delta_k)\sin\Delta_k .
\end{align*}
Using \(\sin(a+b)\le \sin a+b\) for \(a,b\ge0\) and \(\sin\Delta_k\le\Delta_k\), we get
\begin{equation}
\label{eq:q-diff-angle-upper-enter}
        q_{k+1}-q_k
        \le
        \bigl(\sin(2\vartheta_k)+\Delta_k\bigr)\Delta_k .
\end{equation}

We claim that \(\Delta_k\ge\omega_*u_0\).  Suppose, for contradiction, that \(\Delta_k<\omega_*u_0\). By \cref{eq:sin2theta-lower-enter},
\[
        \Delta_k<\omega_*u_0\le16\omega_*\sin(2\vartheta_k).
\]
Since \(\omega_*\le1/48\), this gives \(\sin(2\vartheta_k)+\Delta_k<(1+16\omega_*)\sin(2\vartheta_k)\le(4/3)\sin(2\vartheta_k)\). Substituting into \cref{eq:q-diff-angle-upper-enter} and using \(\Delta_k<\omega_*u_0\), we obtain
\[
        q_{k+1}-q_k
        <
        \frac43\omega_*u_0\sin(2\vartheta_k).
\]
Because \(\omega_*\le c\rho R_*/4\), we have $\frac43\omega_* \le \frac{c\rho R_*}{3} < \frac{c\rho R_*}{2}.$ Therefore
\[
        q_{k+1}-q_k
        <
        \frac{c\rho R_*}{2}u_0\sin(2\vartheta_k),
\]
which contradicts \cref{eq:q-increase-angle-enter}. Therefore \(\Delta_k\ge\omega_*u_0\), which is exactly \cref{eq:angle-progress-enter}.
\end{proof}

\begin{proposition}[Entrance complexity for the Newton region]
\label{prop:entrance-newton-region-rmn-final}
Run \cref{alg-newton-armijo} with the \(5\)-factor retraction and fixed parameters \(0<\delta\le1/2\), \(0<c\le1/8\), and \(0<\rho<1\). Let
\begin{equation*}
        K_{\mathrm{enter}}:=\min\{k\ge0:e_k\le r_{\mathrm N}\},
\end{equation*}
where \(r_{\mathrm N}\) is defined in \cref{eq:CN-rN-rmn-final}. Then
\begin{equation*}
        K_{\mathrm{enter}}=\mathcal O\!\left(\sqrt{\frac{N}{M}}\right).
\end{equation*}
\end{proposition}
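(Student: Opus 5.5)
The plan is to sum the per-iteration angle increments from \cref{lem:angle-progress-enter} over the global phase and compare the total with the largest possible change in the Grover angle. For every $k<K_{\mathrm{enter}}$ we have $e_k>r_{\mathrm N}$, so \cref{lem:angle-progress-enter} applies and gives
\[
\vartheta_{k+1}-\vartheta_k\ge \omega_*\sqrt{q_0(1-q_0)}.
\]
Its hypothesis $c\le 1/8$ is part of our assumptions. The accepted-radius lemma behind it needs $c\le1/4$ and $\delta\le1$, and both follow from $c\le 1/8$ and $\delta\le 1/2$.

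First I would check that $K_{\mathrm{enter}}$ is finite. By \cref{prop:global-monotone-revised} we have $q_k\to1$, hence $e_k\to0$, so $e_k\le r_{\mathrm N}$ for some finite $k$. Finiteness will also come out of the counting bound below, so this step is only a sanity check.

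Next, let $K:=K_{\mathrm{enter}}$ and sum the angle bound telescopically over $k=0,\dots,K-1$:
\[
K\,\omega_*\sqrt{q_0(1-q_0)}\le \vartheta_K-\vartheta_0\le \frac{\pi}{2}-\arcsin\sqrt{q_0}\le\frac{\pi}{2}.
\]
This uses $\vartheta_k\in(0,\pi/2]$ from \cref{eq:grover-angle-enter}. Rearranging gives
\[
K\le \frac{\pi}{2\omega_*\sqrt{q_0(1-q_0)}}.
\]
Since $q_0=M/N$, we have $q_0(1-q_0)=M(N-M)/N^2$. Therefore
\[
\frac{1}{\sqrt{q_0(1-q_0)}}=\frac{N}{\sqrt{M(N-M)}}\le \sqrt{\frac{2N}{M}}
\]
whenever $M\le N/2$, which covers the regime $M\ll N$. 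If $M>N/2$, then $q_0>1/2$ and $1/\sqrt{q_0(1-q_0)}$ is still finite for fixed $q_0<1$, but it is not obviously $\mathcal O(1)$. I would handle this case separately, or note that $\sqrt{N/M}\ge1$ and state the result in the standard regime $M\le N/2$.

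The main point to verify is that $\omega_*$ is a constant independent of $N$ and $M$. Here $\omega_*=\min\{1/48,\,c\rho R_*/4\}$, with $R_*=\min\{R_0,1/64,2/B_Q\}$. The constant $B_Q$ comes from a compactness argument over $q\in[q_0,1-r_{\mathrm N}]$, an interval that depends on $q_0$. I therefore need to argue that $B_Q$ can be bounded uniformly.

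My first attempt would be to take the compact set $[0,R_0]\times[0,1]\times\mathbb S^1$ together with the parameter $q_0\in[0,1]$. The $2\times2$ matrices $E_H$, $E_{\psi_0}$, $\Psi_0$, and the representation $[\psi]_{u,v}(q,s)$ are polynomial in $q_0$, $q$, $\sqrt{q(1-q)/(q_0(1-q_0))}$, and trigonometric functions of $R$ and $s$. The normalization by $A$ is the delicate part. I would rewrite $\Phi_U''$ directly in terms of the angles $(\vartheta,\text{phase})$ of the Grover-plane state, which should remove the apparent singularity. The expected outcome is that $B_Q$ is an absolute constant.

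Granting that, $\omega_*$ depends only on $c$ and $\rho$, and so $K_{\mathrm{enter}}=\mathcal O(\sqrt{N/M})$.
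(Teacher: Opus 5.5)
\textbf{Same route as the paper, but the step you grant is a real gap, and the paper's proof shares it rather than closes it.} Your argument is the paper's own. Monotonicity keeps every $k<K_{\mathrm{enter}}$ outside the Newton region, \cref{lem:angle-progress-enter} gives the increment $\omega_*\sqrt{q_0(1-q_0)}$, the increments telescope against $\vartheta_k\le\pi/2$, and $M\le N/2$ finishes.

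You need not drop $\arcsin\sqrt{q_0}$. Since $\pi/2-\arcsin\sqrt{q_0}=\arccos\sqrt{q_0}\le\frac{\pi}{2}\sqrt{1-q_0}$, you get $K_{\mathrm{enter}}\le\frac{\pi}{2\omega_*}\sqrt{N/M}$ for every $M$. This settles your $M>N/2$ worry.

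The step you leave as ``granting that'' is the gap. The paper's $B_Q$ comes from compactness of $[0,R_0]\times[q_0,1-r_{\mathrm N}]\times\mathbb S^1$, applied to a function $\widehat\Phi$ that itself depends on $q_0$. So $R_*$ and $\omega_*$ are only shown to be independent of the iterate, not of $N/M$, and without that the bound is not $\mathcal O(\sqrt{N/M})$.

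Your suggested repair does not work as stated. No compactification over $q\in[0,1]$ can succeed: for fixed $q_0$, $Q_U''(0)\to-2q_0(1-q_0)$ as $q\to1$ while $A\to0$, so $\Phi_U''$ is genuinely unbounded there. Any uniform bound must use the fact that the iterate is outside the Newton region.

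\textbf{The missing idea: bound $\Phi_U''$ explicitly, using $\xi\ge1/32$ from \cref{lem:coefficient-length-enter}.} Write $u_0:=\sqrt{q_0(1-q_0)}$; the lemma says $u_0\le32\sqrt{q(1-q)}$. Set $Z:=e^{-i\pi H}$ and use $a_1-a_2=\pi$. Then the $5$-factor trial curve is $V(R)=e^{ia_1H}e^{-i(R/2)\psi_0}e^{i(R/2)Z\psi_0Z}e^{-ia_1H}$. Its generator $\Omega(R):=V'(R)V(R)^{-1}$ is a unitary conjugate of $-Y_0$, so $\|\Omega(R)\|_2=u_0$ and $\|\Omega'(R)\|_2\le u_0$.

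Three facts then combine:
\begin{enumerate}
\item $Q_U''=\langle\psi(R)|\,[H,\Omega']+[[H,\Omega],\Omega]\,|\psi(R)\rangle$.
\item $|\langle\phi|[H,K]|\phi\rangle|\le2\|K\|_2\sqrt{Q(1-Q)}$ for any unit $\phi$, with $Q=\langle\phi|H|\phi\rangle$.
\item The Grover angle moves at speed at most $u_0$ along the curve.
\end{enumerate}
Together they give
\[
|Q_U''(R)|\le2u_0\bigl(\sqrt{q(1-q)}+u_0R\bigr)+2u_0^2 .
\]
Dividing by $A=u_0\sqrt{q(1-q)}$ gives $|\Phi_U''(R)|\le 66+64R$. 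This is an absolute constant on $[0,R_0]$, so $R_*$ and $\omega_*$ depend only on $c$ and $\rho$, and your counting bound becomes a genuine $\mathcal O(\sqrt{N/M})$.
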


\begin{proof}
If \(e_0\le r_{\mathrm N}\), then \(K_{\mathrm{enter}}=0\), so assume \(e_0>r_{\mathrm N}\). By monotonicity, \(q_k\) is nondecreasing and hence \(e_k=1-q_k\) is nonincreasing. Therefore, by the definition of \(K_{\mathrm{enter}}\), every iterate before entering the Newton region satisfies \(e_k>r_{\mathrm N}\) for \(k=0,\ldots,K_{\mathrm{enter}}-1\). Applying \cref{lem:angle-progress-enter} gives
\[
        \vartheta_{k+1}-\vartheta_k
        \ge
        \omega_*\sqrt{q_0(1-q_0)},
        \qquad
        k=0,\ldots,K_{\mathrm{enter}}-1.
\]
Summing these inequalities yields
\[
\begin{aligned}
        \vartheta_{K_{\mathrm{enter}}}-\vartheta_0
        &=
        \sum_{k=0}^{K_{\mathrm{enter}}-1}
        \bigl(\vartheta_{k+1}-\vartheta_k\bigr)
        \\
        &\ge
        K_{\mathrm{enter}}\omega_*\sqrt{q_0(1-q_0)}.
\end{aligned}
\]
Since \(0<\vartheta_k\le\pi/2\), we obtain
\begin{equation}
\label{eq:Kenter-angle-count-enter}
        K_{\mathrm{enter}}
        \le
        \frac{\pi/2-\vartheta_0}{\omega_*\sqrt{q_0(1-q_0)}}
        \le
        \frac{\pi}{2\omega_*\sqrt{q_0(1-q_0)}}.
\end{equation}
Finally, \(q_0=M/N\), and the global assumption \(M\le N/2\) gives \(1-q_0\ge1/2\). Hence \(\sqrt{q_0(1-q_0)}\ge\sqrt{q_0/2}\), and substituting this bound into \cref{eq:Kenter-angle-count-enter} gives
\[
        K_{\mathrm{enter}}
        \le
        \frac{\pi}{2\omega_*}\sqrt{\frac{2}{q_0}}
        =
        \frac{\pi}{\sqrt2\,\omega_*}\sqrt{\frac{N}{M}}.
\]
Thus \(K_{\mathrm{enter}}=\mathcal O(\sqrt{N/M})\).
\end{proof}

\subsection{Total complexity}

We now combine the entrance estimate with the local quadratic convergence result.
The picture is simple: the global phase reaches the Newton region in \(\mathcal O(\sqrt{N/M})\) iterations, and once inside this region, the error decreases quadratically.
Thus the dependence on the final precision is only double logarithmic.

\begin{theorem}[Total complexity of \cref{alg-newton-armijo}]
\label{thm:global-rmn-complexity-final}
Run \cref{alg-newton-armijo} from \(U_0=I\) with the \(5\)-factor retraction and fixed parameters \(0<\delta\le1/2\), \(0<c\le1/8\), and \(0<\rho<1\). Let $T_\varepsilon:=\min\{k\ge0:e_k:=1-q_k \le\varepsilon\}$ be the number of iterations needed to reach success probability error at most \(\varepsilon\). Then, for \(0<\varepsilon<e^{-1}\),
\begin{equation}
\label{eq:sharp-global-bound-final}
        T_\varepsilon
        =
        \mathcal O\!\left(
        \sqrt{\frac{N}{M}}
        +
        \log\log\frac{1}{\varepsilon}
        \right).
\end{equation}
In other words, the total cost consists of an \(\mathcal O(\sqrt{N/M})\) entrance phase and an \(\mathcal O(\log\log(1/\varepsilon))\) Newton phase.

\end{theorem}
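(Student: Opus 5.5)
The plan is to split the run at $K_{\mathrm{enter}}$ and bound the two phases separately. Write $T_\varepsilon \le K_{\mathrm{enter}} + \ell$, where $\ell$ is the number of Newton-phase iterations needed after entering the region. \Cref{prop:entrance-newton-region-rmn-final} already gives $K_{\mathrm{enter}}=\mathcal O(\sqrt{N/M})$ under exactly the stated parameter ranges, so the work is in bounding $\ell$.

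Once $e_{K_{\mathrm{enter}}}\le r_{\mathrm N}$, \cref{thm:local-quadratic-revised} applies at every subsequent iterate. Either some $e_k=0$, and we stop trivially, or $e_{k+1}\le C_{\mathrm N}e_k^2$ holds for all $k\ge K_{\mathrm{enter}}$. Set $a_k:=C_{\mathrm N}e_k$. The recurrence becomes $a_{k+1}\le a_k^2$ with $a_{K_{\mathrm{enter}}}\le C_{\mathrm N}r_{\mathrm N}=1/2$. Induction then gives
\[
C_{\mathrm N}e_{K_{\mathrm{enter}}+\ell}\le 2^{-2^\ell},
\qquad\text{that is,}\qquad
e_{K_{\mathrm{enter}}+\ell}\le C_{\mathrm N}^{-1}2^{-2^\ell}.
\]
Since $C_{\mathrm N}=64L_{\mathrm{Rie}}^2\ge1$, it suffices that $2^{-2^\ell}\le\varepsilon$. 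This holds once $\ell\ge\log_2\log_2(1/\varepsilon)$, so $\ell=\lceil\log_2\log_2(1/\varepsilon)\rceil$ suffices. The hypothesis $\varepsilon<e^{-1}$ guarantees $\log\log(1/\varepsilon)>0$, so the ceiling is well defined and $\mathcal O(\log\log(1/\varepsilon))$ is a genuine bound rather than a negative or zero quantity absorbed into constants.

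We also need that the first index with $e_k\le\varepsilon$ is at most $K_{\mathrm{enter}}+\ell$. This is immediate because $\{e_k\}$ is nonincreasing by \cref{prop:global-monotone-revised}. If $\varepsilon\ge r_{\mathrm N}$, then already $T_\varepsilon\le K_{\mathrm{enter}}$ by monotonicity. Summing the two phases gives
\[
T_\varepsilon\le K_{\mathrm{enter}}+\lceil\log_2\log_2(1/\varepsilon)\rceil=\mathcal O\bigl(\sqrt{N/M}+\log\log(1/\varepsilon)\bigr).
\]

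There is no real obstacle here, since the hard estimates are already in \cref{prop:entrance-newton-region-rmn-final} and \cref{thm:local-quadratic-revised}. The only point needing care is the rescaling by $C_{\mathrm N}$: it is what keeps the Newton-phase count free of any dependence on $N/M$, with the $L_{\mathrm{Rie}}$-dependence absorbed through $C_{\mathrm N}r_{\mathrm N}=1/2$. I would also check that the parameter ranges assumed in both propositions ($\delta\le1/2$, $c\le1/8$) coincide with those in the theorem, which they do.
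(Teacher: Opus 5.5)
Your proposal is correct and follows the paper's proof essentially step for step: split at $K_{\mathrm{enter}}$, invoke the entrance bound, then rescale $a_j=C_{\mathrm N}e_{K_{\mathrm{enter}}+j}$ so that $a_0\le 1/2$ and $a_{j+1}\le a_j^2$, which gives $e_{K_{\mathrm{enter}}+j}\le C_{\mathrm N}^{-1}2^{-2^j}$. Your explicit count $\lceil\log_2\log_2(1/\varepsilon)\rceil$ and your remarks on monotonicity and the $e_k=0$ case only make the paper's ``$2^j\gtrsim\log(1/\varepsilon)$'' step more explicit; any additive constant is absorbed since $\sqrt{N/M}\ge 1$.
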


\begin{proof}
Let \(K_{\mathrm{enter}}\) be the first index such that \(e_{K_{\mathrm{enter}}}\le r_{\mathrm N}\). By \cref{prop:entrance-newton-region-rmn-final}, $K_{\mathrm{enter}} = \mathcal O\!\left(\sqrt{\frac{N}{M}}\right).$ After this point, \cref{thm:local-quadratic-revised} applies and gives
\[
        e_{K_{\mathrm{enter}}+j+1}
        \le
        C_{\mathrm N}e_{K_{\mathrm{enter}}+j}^2,
        \qquad j=0,1,2,\ldots .
\]
Set \(a_j:=C_{\mathrm N}e_{K_{\mathrm{enter}}+j}\). Since \(r_{\mathrm N}=1/(2C_{\mathrm N})\), we have \(a_0\le1/2\), and the recurrence above gives \(a_{j+1}\le a_j^2\). Hence, by induction,
\begin{equation*}
        e_{K_{\mathrm{enter}}+j}
        \le
        \frac1{C_{\mathrm N}}2^{-2^j}.
\end{equation*}
Therefore \(e_{K_{\mathrm{enter}}+j}\le\varepsilon\) once \(2^j\gtrsim \log(1/\varepsilon)\), which requires only $j=\mathcal O\!\left(\log\log\frac1\varepsilon\right)$ post RMN iterations. Combining this estimate with the entrance bound proves \cref{eq:sharp-global-bound-final}.

\end{proof}

\bibliography{mybib}

\end{document}